\documentclass{IEEEojcsys}

\usepackage[colorlinks,urlcolor=blue,linkcolor=blue,citecolor=blue]{hyperref}

\usepackage{color,array}

\usepackage{graphicx}
\usepackage{amssymb}
\usepackage{bbm}
\usepackage{amsmath}
\allowdisplaybreaks
\usepackage{dsfont}
\usepackage{subfig}
\usepackage{algorithm}
\usepackage{algpseudocode}
\usepackage{xcolor}

\newtheorem{myth}{Theorem}
\newtheorem{lemma}{Lemma}
\newtheorem{remark}{Remark}
\newtheorem{assumption}{Assumption}

\newtheorem{corollary}{Corollary}[myth]

\newcommand{\wpp}{\textit{w}.\textit{p}.}
\newcommand{\ie}{\textit{i}.\textit{e}.}

\begin{document}

\sptitle{Article Category}

\title{Model-free LQG Control with Chance Constraints} 


\author{A. Naha\affilmark{1} (Senior Member, IEEE)}

\author{S. Dey\affilmark{2}  (Fellow, IEEE)}

\affil{Department of Electrical Engineering, Linköping University, 58183 Linköping, Sweden.} 
\affil{Department of Electrical Engineering, Uppsala University, 75103 Uppsala, Sweden.} 

\corresp{CORRESPONDING AUTHOR: A. Naha (e-mail: \href{mailto:arunava.naha@liu.se}{arunava.naha@liu.se})}
\authornote{This work was partially supported by the Swedish Research Council under grants 2017-04053 and the Wallenberg AI, Autonomous
Systems and Software Program (WASP) funded by the Knut and Alice Wallenberg Foundation. \vspace{-4mm}}

\markboth{PREPARATION OF PAPERS FOR IEEE OPEN JOURNAL OF CONTROL SYSTEMS}{A. NAHA {\itshape ET AL}.}

\begin{abstract}
This paper studies model-free optimal control design and its convergence properties for linear time-invariant systems subject to probabilistic risk or chance constraints. In particular, we study a natural policy gradient (NPG)–based actor–critic (AC) algorithm with two timescales, using a Lagrangian primal–dual framework to enforce the constraint. Furthermore, the risk is defined as the probability that a function of the one-step-ahead state exceeds a user-specified threshold. To our knowledge, this is the first work to study the analytical convergence properties for NPG–based AC in a chance-constrained linear–quadratic Gaussian (LQG) regulator  setting without model knowledge. We establish the coercivity and gradient dominance properties of the Lagrangian function, which ensure linear convergence and closed-loop stability during training for the actor. On the other hand, we analyse the convergence properties of the temporal difference (TD(0)) learning for the critic, applying stochastic approximation theory. Also, we demonstrate no duality gap in the constrained optimisation problem. Additionally, we have performed numerical analysis of the convergence properties and accuracy of the proposed method, comparing it with model-based chance-constrained LQR and scenario-based MPC. Results show that our approach effectively limits risk while maintaining near-optimal performance, without requiring full model knowledge or real-time optimisation.

\end{abstract}

\begin{IEEEkeywords}
Actor--critic algorithm, Chance-constrained control, Lagrangian primal--dual formulation, Linear--quadratic Gaussian (LQG) control, Model-free control, Natural policy gradient, Stochastic approximation, Temporal difference learning.
\end{IEEEkeywords}

\maketitle
\vspace{-4mm}
\section{INTRODUCTION} \label{sec:intro}
  The linear–quadratic regulator (LQR) and its stochastic extension, the linear–quadratic Gaussian (LQG) controller, are well studied for optimal control of linear systems with quadratic cost. Throughout this paper, we use the terms LQR and LQG interchangeably. However, these classical formulations are inherently risk-neutral and focus solely on minimising expected cost without explicitly accounting for the probability of rare but safety-critical events. Such risky or undesirable events may arise due to the long tail of process noise or disturbances. In many practical control applications, it is important to avoid such events proactively. For instance, an unmanned aerial vehicle (UAV) may need to steer clear of specific areas to remain undetected by adversaries. Therefore, it is crucial to design a controller that not only minimises the average expected control cost but also actively limits the likelihood of such high-impact occurrences \cite{tsiamisRiskConstrainedLinearQuadraticRegulators2020}. In wind turbine control, for example, fluctuating wind speeds introduce uncertainty, and the control objective is to maximise power output while reducing the risk of structural damage \cite{schildbachScenarioApproachStochastic2014}. Likewise, in climate-controlled buildings, the goal is to minimise energy consumption while maintaining occupant comfort by keeping the temperature within a certain threshold \cite{flemingStochasticMPCAdditive2019}. As studied in \cite{flemingStochasticMPCAdditive2019}, controllers employing hard constraints tend to be overly conservative compared to those designed with softer, probabilistic constraints. In other words, the designed controller will lower the control cost if we constrain the probability of risky or undesirable events instead of imposing hard constraints. For example, in the wind turbine scenario, constraining the probability that blade stress exceeds a specified threshold can allow greater power generation than imposing a strict stress limit. 

Model predictive control (MPC) with chance constraints offers one solution, but it requires system models and online optimisation at each time step, leading to high computational costs and potential performance degradation under model mismatch. In contrast, model-free reinforcement learning (RL) methods, particularly policy gradient (PG)–based actor–critic (AC) algorithms, provide a flexible framework for directly learning optimal policies from data without explicit model identification. However, while RL methods have been successfully applied to unconstrained LQR problems, the theoretical understanding of their behaviour in chance-constrained settings remains limited. Moreover, the probabilistic constraints studied in much of the existing literature often involve indirect surrogates, such as variance bounds or expectations of penalty functions. Such approaches only provide loose guarantees on actual violation probabilities. Directly constraining the probability of risk events, though more meaningful, introduces significant analytical challenges, especially in the absence of model knowledge.
\subsection{Related work} \label{subsec:related_work}
The study of risk-aware optimal control spans several research communities, including MPC, stochastic control, and reinforcement learning. Here, we review a few relevant works in these areas. Additionally, we also focus on the convergence properties of the AC algorithms in this context.
\subsubsection{Chance-constrained MPC} In MPC, probabilistic risk is frequently modelled using chance constraints. A common approach is scenario-based sampling, where disturbance realisations are drawn from a known distribution, and the chance constraints are reformulated as deterministic algebraic constraints \cite{flemingStochasticMPCAdditive2019, schildbachScenarioApproachStochastic2014}. Alternative strategies replace chance constraints with tractable approximations, such as {moment-based bounds} (e.g., Chebyshev's inequality) or {expected-value formulations}, sometimes evaluated using {Hamiltonian Monte Carlo} (HMC) methods. For example, \cite{schildbachLinearControllerDesign2015} reformulates the chance-constrained LQR problem into a convex program with {linear matrix inequality} (LMI) constraints, which is then solved via {semi-definite programming} (SDP). Another important direction is {tube-based chance constraints}, ensuring that state and input trajectories remain inside invariant tubes \cite{arcariStochasticMPCRobustness2023, kerzDataDrivenTubeBasedStochastic2023}.
\subsubsection{Risk-Aware Infinite-Horizon Control}
Outside the MPC framework, risk has been incorporated into infinite-horizon formulations by constraining the average variance of a quadratic function of the state \cite{zhaoGlobalConvergencePolicy2022}. In these settings, the optimal controller can often be shown to be affine in the state \cite{zhaoInfinitehorizonRiskconstrainedLinear2021, tsiamisRiskConstrainedLinearQuadraticRegulators2020}. Such variance-based risk models offer analytical tractability but may only indirectly control the actual probability of constraint violation.
\subsubsection{Reinforcement Learning for LQR and Constrained Control}
Reinforcement learning (RL) methods have gained prominence for optimal control when system dynamics are partially or completely unknown \cite{bertsekasReinforcementLearningOptimal2019, busoniuReinforcementLearningControl2018, lopezEfficientOffPolicyQLearning2023}. Among these, policy gradient (PG)--based actor--critic (AC) algorithms are particularly suitable for continuous state--action problems \cite{suttonReinforcementLearningSecond2018}. For the standard LQR problem, PG-based methods have been analysed for global convergence and closed-loop stability, even though the optimisation landscape is non-convex in the control gain \cite{fazelGlobalConvergencePolicy2018a, huTheoreticalFoundationPolicy2023, yangProvablyGlobalConvergence2019,chen2025global}.

Extending PG analysis to {constrained} LQR settings remains challenging and has so far been addressed only in a limited number of special cases \cite{zhangPolicyOptimizationMathcal2021a,zhaoGlobalConvergencePolicy2023a,zhao2025policy,hanReinforcementLearningControl2021}. These methods are discussed in the subsequent subsection on convergence studies. A related line of work enforces safety in RL through control barrier functions (CBFs), which translate state-based safety requirements into pointwise forward-invariance constraints on the control input \cite{guerrier2024learning}. These methods integrate CBF-based safety filters with model-free policy gradient algorithms using Gaussian-process dynamics models \cite{cheng2019end}, or couple closed-form CBF filters with barrier-inspired reward shaping during training \cite{yang2025cbf}. These approaches provide strong forward-invariance guarantees, but do not directly bound the long-run average probability of constraint violation under stochastic disturbances, which is the focus of our chance-constrained LQR formulation.

\subsubsection{RL for Probabilistic Risk-Constrained LQR}
More recently, \cite{naha2023reinforcement} proposed a deep deterministic policy gradient (DDPG)--based AC approach for the probabilistic risk-constrained LQR problem, modeling risk as the probability that a quadratic function of the state exceeds a threshold. While empirical results are encouraging, a general theoretical understanding of PG-based AC algorithms in the probabilistic chance-constrained LQR setting, particularly in the model-free case, remains largely unaddressed, providing the motivation for this work.

\subsubsection{Convergence Studies}
The convergence properties of PG-based AC algorithms in the context of chance-constrained LQR problems are still an open area of research. Existing studies have primarily focused on unconstrained settings, leaving a gap in understanding how these algorithms behave when faced with probabilistic constraints. Establishing convergence guarantees in this more complex setting is crucial for ensuring the reliability and safety of the learned policies. Even in the risk-neutral model-free LQR case, convergence analysis is challenging due to the nonconvexity of the cost with respect to policy parameters. Moreover, in closed-loop applications, it is essential to maintain system stability while learning an online policy. Previous works on the standard LQR problem have analysed PG-based RL algorithms for global convergence and closed-loop stability \cite{fazelGlobalConvergencePolicy2018a, huTheoreticalFoundationPolicy2023, yangProvablyGlobalConvergence2019, zhou2023single}. These analyses typically exploit the gradient-dominance property of the cost function with respect to the policy parameters, which is instrumental in proving convergence. They also leverage the ergodicity of the cost function to guarantee stability of the closed-loop system during learning.

The analysis of closed-loop stability and convergence for PG-based algorithms in LQR problems with additional constraints is notably challenging and has so far been addressed only in a limited number of special cases. For instance, \cite{zhangPolicyOptimizationMathcal2021a} considers an additional $H_\infty$ robustness constraint. In \cite{zhaoGlobalConvergencePolicy2023a}, the authors establish the global convergence of a PG algorithm for a risk-constrained LQR, where risk is defined as the average variance of a quadratic function of the states over an infinite horizon. In contrast, \cite{hanReinforcementLearningControl2021} studies a safety constraint expressed as the expected value of a continuous non-negative state function being bounded by a prescribed threshold, deriving the optimal controller using an AC algorithm within a Markov decision process (MDP) framework.

The convergence properties of AC algorithms have also been studied in problem settings beyond LQR. However, most of these studies adopt a discount–factor–based reward formulation and often assume a finite action space \cite{wu2020finite, tian2023convergence}, making their results not directly applicable to our setting. Under a discount-based formulation, the cost function loses its coercivity property, which in turn makes maintaining closed-loop stability during online policy learning more challenging \cite{tangAnalysisOptimizationLandscape2021,huTheoreticalFoundationPolicy2023}.

For PG-based methods, convergence analysis for the actor typically focuses on establishing the gradient-dominance property of the cost function, while the convergence of the critic is often studied using stochastic approximation theory \cite{borkar2008stochastic}. When the critic is approximated as a linear function of features extracted from the states, convergence results have been established in several works \cite{mitra2024simple, dalal2018finite}. More recently, a number of studies have investigated the convergence of critics implemented as neural network function approximators \cite{cai2019neural, tian2023convergence}, primarily under overparameterization assumptions and within the discounted-cost framework.

\subsection{Our Approach and Contributions} \label{subsec:contributions} 
In this work, we have studied an online natural policy gradient (NPG)–based AC algorithm \cite{rajeswaranGeneralizationSimplicityContinuous2017,kakadeNaturalPolicyGradient2001} for the chance-constrained LQG problem. Additionally, we assume the model parameters are unknown but the process noise distribution is known. We consider the design of an optimal policy within the class of linear state feedback controls, where a Lagrangian-based primal-dual formulation is used to handle the chance constraint. Furthermore, the critic is taken to be a linear function of the features extracted from the states. Such a design choice allows us to study the convergence properties of the proposed algorithm in a systematic manner. Even though we consider linear actor and critic functions for the sake of analytical tractability, one can use fully connected NNs for actor and critic function approximation and the same algorithm will work.

Furthermore, we numerically compare the proposed method with a model-based chance-constrained LQR approach \cite{schildbachLinearControllerDesign2015} and scenario-based MPC \cite{schildbachScenarioApproachStochastic2014}. The results indicate that the proposed model-free PG-based method achieves performance comparable to these model-based baselines. Furthermore, the effectiveness of MPC depends heavily on the time horizon length chosen, making it dependent on available computational resources. Unlike MPC, model-free PG-based methods do not require real-time optimization at each step and rely solely on a feedforward actor-network after training, which significantly reduces computational overhead compared to MPC-based methods.

In particular, we model risk as the probability that a function of the one-step-ahead future state exceeds a user-specified threshold, and we impose a constraint ensuring that the long-term average violation probability over an infinite horizon remains within a prescribed bound. In the unknown model setting, this probability term is replaced in the reward structure by an indicator function, enabling direct evaluation from observed data. As discussed in Section~\ref{subsec:related_work}, alternative approaches limit the probability of risky events indirectly by modelling risk as, for example, the average variance of a quadratic state function over an infinite horizon \cite{zhaoGlobalConvergencePolicy2023a} or the expected value of a continuous non-negative state function \cite{hanReinforcementLearningControl2021}. While these formulations often admit closed-form analytical expressions for the constraint function, they generally provide only loose upper bounds on the true violation probability. In contrast, our formulation places a direct bound on the probability of risky events. This choice offers a more interpretable and physically meaningful constraint but comes at the cost of increased analytical complexity. Furthermore, such probabilistic constraints rarely have closed-form representations, making convergence analysis considerably more challenging.

For the proposed two-time-scale AC algorithm, we first establish that the trainable parameters of a parameterised critic converge, with high probability, to a neighbourhood of the optimal parameter set within a sufficiently large but finite number of update steps. This result is obtained by applying stochastic approximation theory to the average expected cost formulation. We then show that the actor parameters converge to a neighbourhood of a local optimum with high probability after a sufficiently large number of update steps which uses the approximate critic, by exploiting the gradient-dominance property of the cost function. Furthermore, stability of the closed-loop (CL) system during online policy learning is ensured through the coercivity property.

A preliminary version of this work was presented in our earlier conference paper \cite{naha2025policy}. In \cite{naha2025policy}, we primarily focused on a trajectory-based training approach for both actor and critic, and established actor convergence under the assumption of a known system model. In contrast, the present work assumes the system model is unknown. We adopt a random-sampling-based actor training scheme together with a linear critic structure which is trained by the TD(0) learning method. We analyse the convergence properties of the entire AC algorithm under the unknown-model assumption. This is a non-trivial extension that requires developing new theoretical tools and results tailored to the probabilistic risk-constrained LQG problem setting. We summarize our main contributions as follows. \newline 
1) To the best of our knowledge, this work is the first to provide a rigorous convergence analysis of an NPG-based control design for a probabilistic risk (chance)-constrained LQG problem in the unknown-model setting. \newline 
2) We present, for the first time, a convergence analysis of the TD(0) algorithm for training a linear critic in the average-cost, infinite-horizon setting with continuous state and action spaces. This result is of independent interest beyond the LQG framework, as it applies more generally to reinforcement learning for continuous-state MDPs. \newline 
3) We perform extensive simulations comparing the proposed method with model-based chance-constrained LQR and scenario-based MPC. Results show that our approach maintains the specified risk bounds with only a slight cost increase, achieves performance comparable to model-based methods without requiring full model knowledge, and avoids the high online computational load of MPC.

\subsection{Organization} \label{subsec:organization}
The rest of the paper is organized as follows. In Section \ref{sec:problem}, we present the problem formulation and the reward structure for the chance-constrained LQG problem. In Section \ref{sec:ac_method}, we present the PG-based algorithms, while the convergence properties of the proposed algorithm are studied in Section~\ref{sec:main_results}. Section \ref{sec:results} presents the numerical performance results of the proposed algorithms, followed by concluding remarks in Section \ref{sec:conclusion}.

\subsection{Notations}
\label{subsec:notations} 
Some special notations are given in Table~\ref{tab:notations}. 
\begin{table}[h!]
	\begin{center}
		\caption{Notations}
		\label{tab:notations}
		\begin{tabular}{l|p{60mm}} 
			\hline \hline
			Symbol & Description \\
			\hline
			${\rm I\!R}^{n}$ & The set of $n\times 1$ real vectors \\
			${\rm I\!R}^{m\times n}$ & The set of $m\times n$ real matrices \\
			${X}^T$ & Transpose of matrix or vector ${X}$ \\
			$\mathcal{N}(\mu,{\bf \Sigma})$ & Gaussian distribution with mean $\mu$ and variance $\bf \Sigma$  \\
			$\bf \Sigma \ge 0$ or $>0$  & positive semi-definite or definite matrix, respectively \\
			$[\cdot]_{ij}$ & $i$-th row and $j$-th column element of a matrix \\
			$||\cdot||$ & Spectral norm of a matrix or Euclidean norm of a vector \\
			$||\cdot||_F$ & Frobenius norm of a matrix  \\
			$\text{tr}(\cdot)$ & Trace of a matrix \\
			$ \mathbb{E}\left[\cdot\right]$ and $\mathbb{P}\left\{\cdot \right\}$ & Expectation operator and Probability measure, respectively \\
            $\left\{\hat \cdot \right\}$ & Estimated or approximated value \\
			$\mathds{1}_{\left\{ condition \right\}}$ & Indicator function, 1 if condition is true, 0 otherwise \\
            $\sigma(\cdot)$ and $\rho(\cdot)$ & Singular value and eigenvalue, respectively \\
			$\tilde{O}(\cdot)$ & Hides problem dependent constants and poly-logarithmic terms \\
			\hline \hline 
		\end{tabular}
	\end{center} 
\end{table}

\section{PROBLEM FORMULATION} \label{sec:problem}
We consider the following linear time-invariant (LTI) discrete-time system:
\begin{equation}
{\bf x}_{k+1} = {\bf A}{\bf x}_{k} + {\bf B}{\bf u}_{k} + {\bf w}_{k},
\label{eqn:state_eqn}
\end{equation}
where ${\bf x}_{k} \in \mathbb{R}^{n}$ and ${\bf u}_{k} \in \mathbb{R}^{p}$ denote the state and control input at time $k$, respectively, and ${\bf w}_{k} \in \mathbb{R}^{n}$ is an independent and identically distributed (i.i.d.) process noise with probability density $f_w(\mathbf{w})$. The system matrices are ${\bf A} \in \mathbb{R}^{n \times n}$ and ${\bf B} \in \mathbb{R}^{n \times p}$.

We assume that all states are measured and that the pair $({\bf A}, {\bf B})$ is stabilizable. In the standard LQR problem, the objective is to minimize the following long-term average quadratic cost:
\begin{equation}
J = \lim_{T \to \infty} \frac{1}{T} \sum_{k=1}^T \mathbb{E} \left[ {\bf x}_k^\top {\bf Q} {\bf x}_k + {\bf u}_k^\top {\bf R} {\bf u}_k \right],
\label{eqn:cost_fun_J}
\end{equation}
where ${\bf Q} \in \mathbb{R}^{n \times n}$ and ${\bf R} \in \mathbb{R}^{p \times p}$ are positive definite weighting matrices. We also assume that $({\bf A}, {\bf Q}^{1/2})$ is detectable.

When the noise is zero-mean with bounded second moment, the optimal control law is a linear state-feedback of the form \cite{bertsekasDynamicProgrammingOptimal}:
\begin{equation}
{\bf u}_k^* = {\bf K} {\bf x}_k, \quad
{\bf K} = -\left( {\bf B}^\top {\bf S} {\bf B} + {\bf R} \right)^{-1} {\bf B}^\top {\bf S} {\bf A},
\label{eqn:opt_u}
\end{equation}
where ${\bf S}$ is the positive definite solution to the following algebraic Riccati equation:
\begin{equation}
{\bf S} = {\bf A}^\top {\bf S} {\bf A} + {\bf Q} - {\bf A}^\top {\bf S} {\bf B} \left( {\bf B}^\top {\bf S} {\bf B} + {\bf R} \right)^{-1} {\bf B}^\top {\bf S} {\bf A}.
\end{equation}
However, the standard cost formulation \eqref{eqn:cost_fun_J} is risk-neutral and does not explicitly account for rare but safety-critical events. To address this, we introduce an additional probabilistic risk constraint, leading to the following optimisation problem:
\begin{equation}
\begin{aligned}
& \underset{{\bf u} \in \mathcal{U}}{\text{minimise}}
& & J \
& \text{subject to}
& & J_c \leq \delta,
\end{aligned}
\label{eqn:opt_prob}
\end{equation}
where $J$ is as in \eqref{eqn:cost_fun_J} and $J_c$ is given as:
\begin{equation}
J_c = \lim_{T \to \infty} \frac{1}{T} \sum_{k=1}^T \mathbb{E} \left[ \mathbb{P} \left( f_c({\bf x}_{k+1}) \ge \epsilon  \middle|  \Psi_k \right) \right].
\label{eqn:J_c}
\end{equation}
Here, $\epsilon > 0$ is a user-specified violation threshold, $\delta > 0$ is the maximum allowable long-term average violation probability, and $\Psi_k \triangleq \left\{ {\bf x}_\ell, {\bf u}_\ell \mid 0 \le \ell \le k \right\}$ denotes the information set available at time $k$. The function $f_c(\cdot)$ defines the risk measure applied to the next-step state ${\bf x}_{k+1}$.
\begin{remark}
Since the constraint violation probability, \ie, $\mathbb{P} \left( f_c({\bf x}_{k+1}) \ge \epsilon \right)$, depends on the random information set $\Psi_k$, the expectation with respect to $\Psi_k$ is taken in the formulation. Moreover, we require that the long-term average probability of these events remains bounded over an infinite time horizon.
\end{remark}
\subsection{Reward Structure} \label{subsec:reward}
The constrained optimization problem in (\ref{eqn:opt_prob}) can be reformulated as an unconstrained stochastic control problem using the Lagrangian multiplier $\lambda > 0$ as follows: 
\begin{equation} 
	 \mathcal{L} = J + \lambda \left(J_c - \delta \right) 
     = \lim_{T\to\infty} \frac{1}{T} \sum_{k=1}^T 
     \mathbb{E}\!\left[g\!\left( {\bf x}_k,{\bf u}_k\right) \right],
	\label{eqn:JL}
\end{equation}
where the per-stage cost $g(\cdot)$ is defined as
\begin{align}
	g\left( {\bf x}_k,{\bf u}_k\right) &= f\left( {\bf x}_k,{\bf u}_k\right) 
	+ \lambda \left( h_p\left( {\bf x}_k,{\bf u}_k\right) - \delta \right), \label{eqn:gk} \\
	f\left( {\bf x}_k,{\bf u}_k\right) &= {\bf x}^T_k{\bf Q}{\bf x}_k + {\bf u}^T_k{\bf R}{\bf u}_k,  \label{eqn:fk} \\
	h_p\left( {\bf x}_k,{\bf u}_k\right) &= 
    P\!\left\{ f_c\!\left({\bf x}_{k+1} \right)\! \ge \epsilon \,\middle|\, \Psi_{k} \right\}.  
    \label{eqn:hpk} 
\end{align}
Since the per-stage cost may generally contain an intractable probabilistic term, we adopt a more practical reward formulation for RL-based algorithms. Given that future states ${\bf x}_{k+1}$ are available through stored trajectories, the per stage reward is
\begin{align}
    &r_k = -f\left( {\bf x}_k,{\bf u}_k\right) 
    - \lambda \left( h_r\left( {\bf x}_{k+1}\right) - \delta \right), \text{and} \label{eqn:rk} \\
    &h_r\left( {\bf x}_{k+1}\right) = \mathds{1}_{\left\{ f_c\left({\bf x}_{k+1} \right) \ge \epsilon \right\}}, \label{eqn:hrk}
\end{align}
where $\mathds{1}_{\{\cdot\}}$ denotes the indicator function. Note that the indicator function based reward formulation is only used for the numerical simulations, whereas the probability based formulation is used for the theoretical analysis.

In the next section, we introduce the NPG-based AC algorithm employed in our study.
\section{AC METHOD FOR MODEL-FREE LQG CONTROL WITH CHANCE CONSTRAINTS} \label{sec:ac_method}
The complete algorithm used in our study is given in Algorithm \ref{alg:ac_method}. We assume the policy $\pi_\theta\left(\cdot \mid {\bf x}_k \right)$ to be stochastic but stationary. Here, $\theta$ denotes the policy parameter. The objective is to find the optimal policy parameter $\theta^*$ that maximizes the following expected return. 
\begin{equation}
\mathcal{R} = \lim_{T \to \infty} \frac{1}{T} \sum_{k=1}^T \mathbb{E} \left[ r_k \right].
\label{eqn:expected_return}
\end{equation} 
In the NPG-based methods, the inverse of the Fisher information matrix $\textbf{F}$ is used to compute the steepest ascent direction as $\textbf{F}^{-1}G$ \cite{kakadeNaturalPolicyGradient2001}. Here, $G$ is the gradient of the expected return $\mathcal{R}$ with respect to the policy parameters $\theta$. The gradient is estimated from the samples collected during the interaction with the environment using the policy gradient theorem as follows \cite{suttonReinforcementLearningSecond2018}: 
\begin{equation}
\hat G = \frac{1}{N} \sum_{k=1}^N {\hat Q}\left( {\bf x}_k, {\bf u}_k \right) \nabla_\theta \log \pi_\theta\left({\bf u}_k \mid {\bf x}_k \right).
\label{eqn:policy_gradient}
\end{equation}

Here, $\hat{Q}\left( {\bf x}_k, {\bf u}_k \right)$ is the estimated action-value (Q) function, which approximates the true Q-function $Q(\cdot,\cdot)$ defined in (\ref{eqn:Q_fun}) \cite{yangProvablyGlobalConvergence2019}. The Q-function measures the expected cumulative deviation from the long-run average reward $\mathcal{R}$, starting from state ${\bf x}_k$ with action ${\bf u}_k$ and following the policy thereafter. Subtracting $\mathcal{R}$ in (\ref{eqn:Q_fun}) is standard in the average-reward setting and ensures that the infinite sum converges \cite{yangProvablyGlobalConvergence2019}.
\begin{equation}
Q\left( {\bf x}, {\bf u} \right) =  \sum_{k \ge 1}\left( \mathbb{E} \left[r_k \left( {\bf x}_k, {\bf u}_k\right)  \mid {\bf x}_1 = {\bf x},{\bf u}_1 = {\bf u}\right] - \mathcal{R} \right)  .
\label{eqn:Q_fun}
\end{equation}
\begin{remark}
Note that there are a few other variants of the policy gradient studied in the literature where the action-value function is replaced with the advantage function \cite{rajeswaranGeneralizationSimplicityContinuous2017,schulmanHighDimensionalContinuousControl2018}. However, we have used the action-value function in our study because it provides a direct connection between the policy gradient update and the underlying stochastic approximation framework, which facilitates a more transparent convergence analysis. Moreover, working with $Q(x,u)$ avoids the additional layer of complexity introduced by estimating the advantage values, thereby enabling us to highlight the key theoretical properties of the algorithm without obscuring them behind additional complexities.
\end{remark}
The Fisher information matrix is also estimated from the samples as follows:
\begin{equation}
\hat{\textbf{F}} = \frac{1}{N} \sum_{k=1}^N \nabla_\theta \log \pi_\theta\left({\bf u}_k \mid {\bf x}_k \right) \nabla_\theta \log \pi_\theta\left({\bf u}_k \mid {\bf x}_k \right)^\top.
\label{eqn:fisher_information}
\end{equation}
Finally, the policy parameter is updated using the following update rule:
\begin{equation}
\theta_{j+1} = \theta_j + \alpha_j \hat{\textbf{F}}^{-1} \hat{G},
\label{eqn:policy_update}
\end{equation}
where $\alpha_j$ is the step size.

On the other hand, the critic part of the AC method used in this study is a parameterized function approximating the Q-function, denoted as $\hat Q_\phi({\bf x}, {\bf u})$, where $\phi$ represents the parameters of the critic network. The critic is trained using the temporal difference (TD(0)) learning method, which updates the parameters $\phi$ based on the following update rule. 
The TD error for the $i$-th sample in a mini-batch is given by
\begin{equation}
    \delta^{(i)} = r^{(i)} - \hat{\mathcal{R}}  +  \hat Q_\phi\left({\bf x}'^{(i)}, {\bf u}'^{(i)}\right) - \hat Q_\phi\left({\bf x}^{(i)}, {\bf u}^{(i)}\right),
\end{equation}
where $({\bf x}^{(i)}, {\bf u}^{(i)}, r^{(i)}, {\bf x}'^{(i)}, {\bf u}'^{(i)})$ is a transition sample. In other words, the input ${\bf u}'^{(i)}$ was applied to the system at state ${\bf x}^{(i)}$ and observed the next state ${\bf x}'^{(i)}$ and the reward $r^{(i)}$. Here, $\hat{\mathcal{R}}$ is the estimated average reward.

The mean-squared TD error over a mini-batch of size $L_c$ is then
\begin{equation}
    \bar{\delta}^2 = \frac{1}{L_c} \sum_{i=1}^{L_c} \frac{1}{2} {\delta^{(i)}}^2.
    \label{eqn:mean_td_error}
\end{equation}
The critic parameters are updated by minimizing the mean-squared TD error as
\begin{equation}
    \phi_{l+1} = \phi_l - \alpha_l \nabla_\phi \bar{\delta}^2,
    \label{eqn:critic_update}
\end{equation}
where $\alpha_l > 0$ is the learning rate. Furthermore, the average reward is updated as
\begin{equation}
    \hat{\mathcal{R}}_{l+1} = \hat{\mathcal{R}}_l + c_{\alpha} \alpha_l \frac{1}{L_c} \sum_{i=1}^{L_c} \left( r^{(i)} - \hat{\mathcal{R}}_l \right).
    \label{eqn:average_reward_update}
\end{equation}
Here, $c_{\alpha} > 0$ is a constant.

\begin{algorithm}[t]
\caption{Two-Timescale NPG-based AC (NPG--AC)}
\label{alg:ac_method}
\begin{algorithmic}[1]
\State \textbf{Inputs:} critic steps per actor step $N_c$, critic mini-batch size $L_c$, actor mini-batch size $N$, step sizes $\{\alpha_\ell\}$ (critic), $\{\alpha_j\}$ (actor).
\State \textbf{Initialize:} policy parameter $\theta_0$, critic parameter $\phi_0$, replay buffer $\mathcal{B} \gets \emptyset$
\Loop \Comment{actor iterations $j=0,1,2,\ldots$}
    \State \textbf{(Data collection)} Interact with the environment using $\pi_{\theta_j}(\cdot \mid {\bf x})$; append transitions $\big({\bf x},{\bf u}, r, {\bf x}'\big)$ to $\mathcal{B}$.
    \For{$t=1,\ldots,N_c$} \textbf{(Critic update)}
        \State Sample a mini-batch $\mathcal{M}_c$ of size $L_c$ from $\mathcal{B}$.
        \State Compute the mean-squared TD error using (\ref{eqn:mean_td_error}).
        \State Update $\phi$ according to (\ref{eqn:critic_update}).
    \EndFor
    \State \textbf{(Actor update)}
    \State Sample an actor mini-batch $\mathcal{M}_a$ of size $N$ from $\mathcal{B}$.
    \State Estimate the policy gradient $\hat G$ using (\ref{eqn:policy_gradient}) with the current critic $\hat Q_\phi$.
    \State Estimate the empirical Fisher information matrix $\hat{\bf F}$ using (\ref{eqn:fisher_information}).
    \State Update the policy parameter $\theta$ according to (\ref{eqn:policy_update}).
    \State Delete old transitions from $\mathcal{B}$.
\EndLoop
\State \textbf{Output:} final policy parameter $\theta$ and critic parameter $\phi$.
\end{algorithmic}
\end{algorithm}

\section{ANALYTICAL RESULTS} \label{sec:main_results}
In this section, we study the convergence properties of the NPG-AC algorithm (Algorithm~\ref{alg:ac_method}) for the chance constraint LQG problem formulated in Section~\ref{sec:problem}. First, we perform a finite sample analysis of the critic update method and derive an error bound for $N_c$ critic update iterations. In other words, we derive the bound on $\mid \mid Q({\bf x},{\bf u}) - \hat{Q}_\phi({\bf x},{\bf u}) \mid \mid \forall {\bf x},{\bf u}$. Next, we establish the coercivity and gradient dominance properties of the Lagrangian function $\mathcal{L}(\theta,\lambda)$ and also show that there is no duality gap. Finally, we prove the linear convergence of the actor update under the bounded error in the critic's value function approximation. 
\subsection{Critic Convergence Analysis} \label{subsec:critic_convergence}
In this subsection, we analyse the convergence properties of the TD(0) learning method used for the critic update in the NPG-AC algorithm. For this analysis, we consider the following structure of the critic function, 
\begin{equation}
    \hat{Q}_\phi({\bf x},{\bf u}) = z({\bf x},{\bf u})^T \phi.
    \label{eqn:critic_structure}
\end{equation}
Here, $z({\bf x},{\bf u})$ is a feature vector extracted from the state-action pair $({\bf x},{\bf u})$, and $\phi$ is the trainable parameter vector. Several design choices exist for the feature function $z(\cdot,\cdot)$, including polynomial features, radial basis functions, and neural networks. For notational simplicity, we denote ${\bf z} = z({\bf x},{\bf u})$ and ${\bf z}' = z({\bf x}',{\bf u}')$.

Furthermore, for the convergence analysis, we take the mini-batch size $L_c = 1$ in each critic update step. Therefore, the critic update in (\ref{eqn:critic_update}) and (\ref{eqn:average_reward_update}) can be expressed as
\begin{align}
\phi_{l+1} &= \phi_l + \alpha_l\left[r - \hat{\mathcal{R}}_l + {\bf z}'^T \phi_l -  {\bf z}^T \phi_l \right]\phi_l \text{, and} \label{eqn:phi_update_single} \\
\hat{\mathcal{R}}_{l+1} &= \hat{\mathcal{R}}_l + c_{\alpha} \alpha_l \left( r - \hat{\mathcal{R}}_l \right), \label{eqn:R_update_single}
\end{align}
where $({\bf x},{\bf u},r,{\bf x}',{\bf u}')$ is a random transition sample drawn from the replay buffer $\mathcal{B}$. Furthermore, for notational simplicity, we combine the two update equations (\ref{eqn:phi_update_single}) and (\ref{eqn:R_update_single}) into a single update equation as follows:
\begin{equation}
    \Phi_{l+1} = \Phi_l + \alpha_l \left(A\left({\bf X} \right)\Phi_l + b \left({\bf X} \right)\right).
    \label{eqn:combined_update}
\end{equation}
Here, $\Phi_l = [\hat{\mathcal{R}}_l, \phi_l^T]^T$ is the combined parameter vector. Additionally, ${\bf X} = [{\bf x}^T, {\bf x}'^T, {\bf z}^T]^T$, and the matrices $A({\bf X})$ and $b({\bf X})$ are given as
\begin{equation}
    A({\bf X}) = \begin{bmatrix} -c_{\alpha} & 0 \\ -{\bf z} & {\bf z}({\bf z}' - {\bf z})^T \end{bmatrix}, \quad b({\bf X}) = \begin{bmatrix} c_{\alpha} r \\ r {\bf z} \end{bmatrix}.
    \label{eqn:A_b_matrices}
\end{equation}
Note that the matrices $A({\bf X})$ and $b({\bf X})$ are random due to the randomness in the transition sample ${\bf X}$. To study the convergence properties of the update equation (\ref{eqn:combined_update}), we have used the stochastic approximation theory \cite{borkar2008stochastic,dalal2018finite}. First, the update equation (\ref{eqn:combined_update}) is expressed in the following standard form, 
\begin{equation}
    \Phi_{l+1} = \Phi_l + \alpha_l \left(h\left(\Phi_l \right) + M_{l+1}\right).
    \label{eqn:sa_form}
\end{equation}
Here, $h(\Phi_l) = \bar{b} - \bar{A} \Phi_l $, where $\bar{A}$ is given in (\ref{eqn:Abar_1}) and $\bar{b} = \mathbb{E}[b({\bf X})]$. 
\begin{equation}
    \bar{A} = \begin{bmatrix} c_{\alpha} & \bf{0} \\ \bf{0} & \mathbb{E}\left[{\bf z}({\bf z} - {\bf z}')^T\right] \end{bmatrix}.
    \label{eqn:Abar_1}
\end{equation}
Furthermore, $M_{l+1}$ is a martingale difference noise, which is given by 
\begin{equation}
    M_{l+1} = A({\bf X}) \Phi_l + b({\bf X}) - h\left(\Phi_l\right).
    \label{eqn:martingale_noise}
\end{equation}
It is known that for a positive definite matrix $\bar{A}$, the ordinary differential equation (ODE) $\dot{\Phi} = h(\Phi)$ has a unique globally asymptotically stable equilibrium point given by $\Phi^* = \bar{A}^{-1} \bar{b}$ \cite{borkar2008stochastic}. The positive definiteness of $\bar{A}$ can be guaranteed by the following Lemma~\ref{lemma:pd_Az}. 
Finally, the convergence properties of the critic update are established under Assumption \ref{assumption:bounded_noise} in Theorem~\ref{thm:critic_convergence}. 
\begin{assumption} \label{assumption:bounded_noise}
    We assume $\mid \mid z({\bf x},{\bf u}) \mid \mid \le \frac{1}{2}$ $\forall ({\bf x},{\bf u})$, and $\mid r({\bf x},{\bf u}) \mid \le 1$, $\forall ({\bf x},{\bf u})$.
\end{assumption}
\begin{lemma}\label{lemma:pd_Az} Assume that under any intermediate stabilizing policy the closed-loop Markov process is ergodic and admits a unique stationary distribution. Then, the matrix $\bar{A}$, as given in (\ref{eqn:Abar_1}),  will be positive definite almost surely, provided the feature vectors $z({\bf x},{\bf u})$ are linearly independent.
\end{lemma}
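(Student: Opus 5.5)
Since $\bar{A}$ in (\ref{eqn:Abar_1}) is block diagonal with upper-left block $c_{\alpha}>0$, it suffices to show that the lower-right block $\bar{A}_z \triangleq \mathbb{E}\left[{\bf z}({\bf z}-{\bf z}')^T\right]$ is positive definite. We interpret this in the quadratic-form sense, $\phi^T \bar{A}_z \phi>0$ for all $\phi\neq 0$, since $\bar{A}_z$ need not be symmetric; this is exactly what is needed for global asymptotic stability of $\dot{\Phi}=h(\Phi)$. For the cross terms, $[\hat{\mathcal{R}},\phi^T]\bar{A}[\hat{\mathcal{R}},\phi^T]^T = c_\alpha \hat{\mathcal{R}}^2 + \phi^T\bar{A}_z\phi$, because the off-diagonal blocks of $\bar{A}$ vanish by construction.

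\textbf{Step 1: the quadratic form.} Fix a stabilizing policy $\pi_\theta$ and let $\mu$ be the unique stationary distribution of the closed-loop state--action chain, which exists by the ergodicity hypothesis. The expectation is taken with $({\bf x},{\bf u})\sim\mu$ and $({\bf x}',{\bf u}')$ the next state--action pair, so $({\bf x}',{\bf u}')\sim\mu$ as well. For $\phi\neq 0$, set $a={\bf z}^T\phi$ and $a'={\bf z}'^T\phi$. Then
\begin{equation*}
\phi^T\bar{A}_z\phi=\mathbb{E}[a^2]-\mathbb{E}[aa'] .
\end{equation*}
By Cauchy--Schwarz, $\mathbb{E}[aa']\le \sqrt{\mathbb{E}[a^2]\,\mathbb{E}[a'^2]}=\mathbb{E}[a^2]$, where the last equality uses stationarity. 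Finiteness of all terms follows from Assumption~\ref{assumption:bounded_noise}, since $\|{\bf z}\|\le 1/2$. Hence $\phi^T\bar{A}_z\phi\ge 0$. This is the standard Tsitsiklis--Van Roy argument, and it uses only the stationarity of $\mu$ and no discount factor.

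\textbf{Step 2: strictness (the main obstacle).} In the average-cost setting there is no factor $\gamma<1$ to make the inequality strict, so strictness must come from elsewhere. Equality in Cauchy--Schwarz forces $a'=c\,a$ almost surely, and stationarity then gives $c=\pm1$. The case $c=1$ means ${\bf z}^T\phi$ is invariant along the chain. By ergodicity, an invariant function is $\mu$-a.s. constant, so $z({\bf x},{\bf u})^T\phi = \kappa$ a.s.

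I would rule this out using the linear independence hypothesis. Interpreting it as linear independence of the feature functions, together with the constant function, in $L^2(\mu)$, the relation $z^T\phi=\kappa$ forces $\phi=0$. The constant direction is already captured by the separate $\hat{\mathcal{R}}$ coordinate, which justifies excluding the constant from the features. The case $c=-1$ would make $a$ alternate sign along the chain, which is a periodic invariant structure. This is excluded by aperiodicity, which is part of ergodicity; concretely, the Gaussian noise ${\bf w}_k$ gives the closed-loop chain a density, so $a'$ cannot be a deterministic function of $a$.

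I would also need the support of $\mu$ to be rich enough that linear independence of the features as functions implies linear independence in $L^2(\mu)$. This holds because, under a stabilizing gain with nondegenerate Gaussian noise and a stochastic Gaussian policy, $\mu$ is a nondegenerate Gaussian with full support on $\mathbb{R}^{n+p}$. Combining the two cases gives $\phi^T\bar{A}_z\phi>0$ for $\phi\neq0$, and therefore $\bar{A}$ is positive definite.

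The remaining phrase ``almost surely'' in the statement refers to this holding for $\mu$-a.e. realization. It is immediate, since $\bar{A}$ is a deterministic expectation once $\mu$ is fixed.
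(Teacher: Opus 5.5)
Your argument is correct under the reading of the hypothesis you make explicit, and it departs from the paper's proof at the strictness step.

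\textbf{Nonnegativity.} The paper does not use Cauchy--Schwarz. It uses the exact identity $\phi^T\bar{A}_z\phi=\frac{1}{2}\mathbb{E}\big[(\phi^T({\bf z}-{\bf z}'))^2\big]$, which follows from $\mathbb{E}[a^2]=\mathbb{E}[a'^2]$. This is a bit sharper than your bound, since it shows immediately that equality means $a=a'$ a.s. So your case $c=-1$ never arises: equality in $\mathbb{E}[aa']\le\sqrt{\mathbb{E}[a^2]\mathbb{E}[a'^2]}$, without absolute values, already forces $c\ge 0$. The aperiodicity discussion can be dropped.

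\textbf{Strictness.} The paper only asserts that ${\bf z}\neq{\bf z}'$ a.s. because the noise is Gaussian and the policy stochastic, and it never uses linear independence. That does not suffice. One needs $\phi^T({\bf z}-{\bf z}')\neq 0$ with positive probability for every fixed $\phi\neq 0$, and this fails precisely when $z^T\phi$ is constant. An example is $\phi$ along the constant entry of the quadratic feature vector of Section~\ref{sec:results}, where $\bar{A}_z$ has a zero column. Your route closes this gap: invariance of $a$ along the chain, then ergodicity forcing $z^T\phi$ to be $\mu$-a.s.\ constant, then linear independence of the features together with the constant function. It also shows the hypothesis has to be read modulo constants, with the constant direction carried by $\hat{\mathcal{R}}$.

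\textbf{Small additions.} Passing from ``$\mu$-a.e.'' to ``everywhere'' requires continuity of $z$ besides full support of $\mu$; this holds for polynomial features. Full support follows, e.g., from $\Sigma_{\bar w}\succ 0$: then $\Sigma_K\succeq\Sigma_{\bar w}\succ 0$, and the Schur complement of the joint $({\bf x},{\bf u})$ covariance is $\Sigma_\sigma\succ 0$. If moreover $\Sigma_w\succ 0$, you can skip ergodicity altogether. Conditionally on $({\bf x},{\bf u})$ the next pair has a full-support Gaussian law, so $a'=a$ a.s.\ together with continuity forces $z^T\phi$ to be constant directly.
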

\begin{proof}
The proof of Lemma~\ref{lemma:pd_Az} is provided in Appendix~\ref{app:proof_lemma_pd_Az}.
\end{proof}
\begin{myth} \label{thm:critic_convergence}
 We assume Assumption \ref{assumption:bounded_noise} to hold. Furthermore, let the step sizes $\alpha_l = \frac{1}{l+1}$ be chosen. Then, for any arbitrarily small $\epsilon > 0$ and $\delta_c \in (0,1)$, there exists a function
    \begin{equation}
        \begin{aligned}
        &L_c(\epsilon,\delta_c) = \\
        &\tilde{O}\left(\max \left\{ \left[ \frac{1}{\epsilon}\right]^{1+\frac{1}{\rho}} \left[\ln\left(\frac{1}{\delta_c}\right)\right]^{1+\frac{1}{\rho}}, \left[ \frac{1}{\epsilon}\right]^{2} \left[\ln\left(\frac{1}{\delta_c}\right)\right]^{3} \right\}\right),
        \end{aligned}
    \end{equation}
    such that 
    \begin{equation}
        \mathbb{P}\left\{\mid \mid \Phi_{l} - \Phi^* \mid \mid \le \epsilon \right\}  \ge 1 - \delta_c, \forall l \ge L_c(\epsilon,\delta_c).
    \end{equation}
    Here, $\rho \in \left(0, \min_{i \in [d]} \Re(\rho_i) \right)$, where $\rho_i$ are the eigenvalues of the matrix $\bar{A}$, and $d$ is the dimension of the parameter vector $\Phi$. 
\end{myth}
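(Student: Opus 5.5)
The plan is to reduce Theorem~\ref{thm:critic_convergence} to the finite-time, high-probability concentration result for linear stochastic approximation with martingale-difference noise of Dalal et al.~\cite{dalal2018finite}. That result gives exactly the stated rate: an exponent $1+1/\rho$ in the transient term, set by the smallest real part of the spectrum of the drift matrix, and a $1/\epsilon^2$ term from the martingale concentration. So the work is in verifying that the recursion (\ref{eqn:sa_form}) satisfies its hypotheses.

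\textbf{Step 1: the drift.} Starting from (\ref{eqn:combined_update}), I will write $h(\Phi)=\bar b-\bar A\Phi$ with $\bar A=-\mathbb{E}[A({\bf X})]$ and $\bar b=\mathbb{E}[b({\bf X})]$. Expectations are taken with respect to the stationary distribution of the replay buffer under the current fixed policy; this is legitimate because the critic runs on the faster timescale, so $\theta_j$ is frozen during the $N_c$ inner steps. The coupling block $-\mathbb{E}[{\bf z}]$ in $\mathbb{E}[A({\bf X})]$ must be handled. It either vanishes under centred features or yields a block-triangular matrix. In both cases the spectrum of $\bar A$ is $\{c_\alpha\}\cup\operatorname{spec}(\mathbb{E}[{\bf z}({\bf z}-{\bf z}')^T])$. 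Lemma~\ref{lemma:pd_Az} then gives that $\bar A$ is positive definite, so every eigenvalue has $\Re(\rho_i)>0$ and any $\rho\in(0,\min_i\Re(\rho_i))$ is admissible. This also yields global asymptotic stability of $\dot\Phi=h(\Phi)$ and a Lyapunov matrix for $\bar A$.

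\textbf{Step 2: the noise.} Since transitions are sampled independently from $\mathcal{B}$, $M_{l+1}$ in (\ref{eqn:martingale_noise}) is a martingale difference with respect to $\mathcal{F}_l=\sigma(\Phi_0,\dots,\Phi_l)$. Assumption~\ref{assumption:bounded_noise} gives $\|{\bf z}\|,\|{\bf z}'\|\le 1/2$ and $|r|\le 1$. Hence $\|A({\bf X})\|\le c_\alpha+1$ and $\|b({\bf X})\|\le c_\alpha+1/2$, which gives the affine growth bound $\|M_{l+1}\|\le m_1(1+\|\Phi_l\|)$ required in \cite{dalal2018finite}.

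\textbf{Step 3: iterate the error.} With $\alpha_l=1/(l+1)$, I will define $\theta_l=\Phi_l-\Phi^*$. Linearising gives $\theta_{l+1}=(I-\alpha_l\bar A)\theta_l+\alpha_l M_{l+1}$. I then compare to the ODE solution over blocks of length $\sum\alpha\approx T$. The bias part decays as $\prod(I-\alpha_k\bar A)\lesssim (n_0/l)^{\rho}$, and this produces the $[1/\epsilon]^{1+1/\rho}$ term.

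\textbf{Main obstacle: controlling the noise with high probability.} The difficulty is that the noise bound scales with $\|\Phi_l\|$, which is not a priori bounded. I will first prove a high-probability bound on $\sup_l\|\Phi_l\|$, using the fact that the iterates stay within a growing ball via a Gronwall-type estimate on the Lyapunov function. On that event the martingale increments are bounded, so Azuma--Hoeffding applies to the weighted sums $\sum_k \alpha_k \prod_{i>k}(I-\alpha_i\bar A)M_{k+1}$. This gives a tail $\exp(-c\,\epsilon^2 l/\ln^2 l)$. Finally, a union bound over all $l\ge L_c$ produces the extra $\ln(1/\delta_c)$ powers, i.e.\ the $[1/\epsilon]^2[\ln(1/\delta_c)]^3$ term. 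Taking the maximum of the two requirements gives $L_c(\epsilon,\delta_c)$ as stated.
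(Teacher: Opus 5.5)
Your proposal is correct and takes the same route as the paper. Both reduce Theorem~\ref{thm:critic_convergence} to the TD(0) concentration argument of \cite{dalal2018finite}, and the only problem-specific inputs are that $\bar A$ has spectrum in the open right half-plane (via Lemma~\ref{lemma:pd_Az}) and that $M_{l+1}$ obeys an affine growth bound, which is exactly the paper's replacement Lemma~\ref{lemma:martingale_bound} for Lemma~5.1 of that reference. Your explicit handling of the $\mathbb{E}[{\bf z}]$ coupling block is more careful than the paper's block-diagonal $\bar A$ in (\ref{eqn:Abar_1}), with which $\mathbb{E}[M_{l+1}\mid\mathcal{F}_l]$ is in general nonzero unless $\mathbb{E}[{\bf z}]=0$; note, though, that the block-triangular structure gives positive real parts of the eigenvalues rather than positive definiteness of the full $\bar A$, and it is only the former that the exponential bound $\|e^{-\bar A t}\|\le K e^{-\rho t}$ used in that argument needs.
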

\begin{proof}
The proof of Theorem~\ref{thm:critic_convergence} is provided in Appendix~\ref{app:proof_thm_critic_convergence}.
\end{proof}
    Theorem~\ref{thm:critic_convergence} provides a finite sample bound for the critic update in the NPG-AC algorithm. It states that after $L_c(\epsilon,\delta_c)$ iterations of the critic update, the parameter vector $\Phi_l$ will be within an $\epsilon$-neighborhood of the optimal parameter vector $\Phi^*$ with a probability of at least $1 - \delta_c$. The function $L_c(\epsilon,\delta_c)$ indicates how many iterations are needed to achieve this level of accuracy and confidence. The bound depends on the desired accuracy $\epsilon$, the confidence level $\delta_c$, and the eigenvalues of the matrix $\bar{A}$. The term $\tilde{O}(\cdot)$ hides problem dependent constants and poly-logarithmic terms. 
\begin{remark}
    The statement of Theorem~\ref{thm:critic_convergence} is similar to Theorem~3.5 in \cite{dalal2018finite}. However, the structure of the matrices $\bar {A}$, $\bar{b}$, and the martingale noise $M_{l+1}$ in our case is significantly different from those in \cite{dalal2018finite}. Furthermore, in \cite{dalal2018finite}, the convergence of the critic as an approximate value function under the discounted setup is studied. On the other hand, we have studied an approximate Q-function under the average expected reward setup. Note that, the average expected reward setup is necessary for the coercivity property of the cost function.
\end{remark}

\begin{remark}
We assume that the samples $({\bf x},{\bf u},r,{\bf x}',{\bf u}')$ used for the TD(0) update are iid samples, which is a standard assumption in RL literature \cite{suttonReinforcementLearningSecond2018,dalal2018finite}. In practice, the samples are generated from the system dynamics and are not iid. However, it is possible to make the samples approximately iid by using a sufficiently large replay buffer and randomly sampling from it. This assumption is commonly used in practice and has been shown to work well empirically \cite{mnih2015human,lillicrapContinuousControlDeep2016}.
\end{remark}


\begin{assumption} \label{assumption:approximation_error}
    We assume that $\mid Q({\bf x},{\bf u}) - \hat{Q}^*_\phi({\bf x},{\bf u}) \mid \le M_Q$ $\forall ({\bf x},{\bf u})$, where $M_Q > 0$ is a small constant. Here, $\hat{Q}^*({\bf x},{\bf u})$ denotes the approximate optimal Q-function, \ie, with parameter $\Phi^*$. The implication of this assumption is that the feature vector $z({\bf x},{\bf u})$ is chosen in such a way that the optimal Q-function will approximate the true Q-value with a small error. This assumption is standard in the RL literature \cite{suttonReinforcementLearningSecond2018}.
\end{assumption}
Finally, we state the following corollary to bound the error in the Q-function approximation after $N_c$ critic update iterations. \newpage
\begin{corollary} \label{cor:critic_error_bound}
    We suppose Assumptions \ref{assumption:bounded_noise} and \ref{assumption:approximation_error} to hold. Furthermore, let the step sizes $\alpha_l = \frac{1}{l+1}$ be chosen. Then, for $\epsilon > 0$ and $\delta_c \in (0,1)$, there exists a function $L_c(\epsilon,\delta_c)$ as given in Theorem~\ref{thm:critic_convergence}, such that
        \begin{align}
        &\mathbb{P}\left\{\mid Q({\bf x},{\bf u}) - \hat{Q}_{\phi_{N_c}}({\bf x},{\bf u}) \mid \le \frac{\epsilon}{2} + M_Q, \forall ({\bf x},{\bf u}) \right\} \ge 1 - \delta_c, \nonumber \\
        & \forall N_c \ge L_c(\epsilon,\delta_c).
        \end{align}
\end{corollary}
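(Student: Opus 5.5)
The plan is a triangle-inequality argument that combines Theorem~\ref{thm:critic_convergence} with Assumption~\ref{assumption:approximation_error}. For any fixed $({\bf x},{\bf u})$ we split the error as
\begin{equation*}
\bigl| Q({\bf x},{\bf u}) - \hat{Q}_{\phi_{N_c}}({\bf x},{\bf u}) \bigr| \le \bigl| Q({\bf x},{\bf u}) - \hat{Q}^*_{\phi}({\bf x},{\bf u}) \bigr| + \bigl| \hat{Q}^*_{\phi}({\bf x},{\bf u}) - \hat{Q}_{\phi_{N_c}}({\bf x},{\bf u}) \bigr| .
\end{equation*}
Assumption~\ref{assumption:approximation_error} bounds the first term by $M_Q$ uniformly in $({\bf x},{\bf u})$. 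This term is deterministic, so it does not affect the probability.

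For the second term we use the linear critic structure (\ref{eqn:critic_structure}). Write $\phi^*$ for the $\phi$-block of $\Phi^*$. Then
\begin{equation*}
\bigl|\hat{Q}^*_{\phi}({\bf x},{\bf u}) - \hat{Q}_{\phi_{N_c}}({\bf x},{\bf u})\bigr| = \bigl| z({\bf x},{\bf u})^T(\phi^* - \phi_{N_c})\bigr| \le \|z({\bf x},{\bf u})\|\,\|\phi_{N_c}-\phi^*\| \le \tfrac{1}{2}\|\Phi_{N_c}-\Phi^*\| .
\end{equation*}
Here we apply Cauchy--Schwarz, then the feature bound $\|z\|\le \tfrac12$ from Assumption~\ref{assumption:bounded_noise}. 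The last step uses the fact that $\phi_{N_c}-\phi^*$ is a sub-vector of $\Phi_{N_c}-\Phi^*$, since $\Phi=[\hat{\mathcal{R}},\phi^T]^T$, so its Euclidean norm is no larger.

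Next, we invoke Theorem~\ref{thm:critic_convergence} with the same $\epsilon$, $\delta_c$ and step sizes $\alpha_l = 1/(l+1)$. Whenever $N_c \ge L_c(\epsilon,\delta_c)$, the event $\mathcal{E}=\{\|\Phi_{N_c}-\Phi^*\|\le\epsilon\}$ has probability at least $1-\delta_c$. On $\mathcal{E}$ the bound above gives $\tfrac{\epsilon}{2}+M_Q$ simultaneously for all $({\bf x},{\bf u})$. This is because the only random quantity, $\|\Phi_{N_c}-\Phi^*\|$, does not depend on $({\bf x},{\bf u})$. Hence the event in the corollary contains $\mathcal{E}$, and its probability is at least $1-\delta_c$.

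The only subtle point is getting the ``for all $({\bf x},{\bf u})$'' statement inside the probability without a union bound over a continuum. This works precisely because the uniform feature bound turns the parameter error into a state-independent quantity. Beyond that, we only need to note that $\hat{Q}^*_\phi$ in Assumption~\ref{assumption:approximation_error} is the same $\phi^*$-component of the fixed point $\Phi^*$ used in Theorem~\ref{thm:critic_convergence}, so the two steps fit together. The estimate $\hat{\mathcal{R}}$ does not enter $\hat{Q}_\phi$, so we simply discard it.
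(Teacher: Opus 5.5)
Your proof is correct and is the argument the paper intends when it says the corollary ``follows directly from Theorem~\ref{thm:critic_convergence} and the structure of the critic function.'' You split the error with the triangle inequality, bound the first term by $M_Q$ via Assumption~\ref{assumption:approximation_error}, and use Cauchy--Schwarz with $\|z\|\le\tfrac12$ to turn $\|\Phi_{N_c}-\Phi^*\|\le\epsilon$ into a uniform $\tfrac{\epsilon}{2}$ bound; you also spell out details the paper leaves implicit, namely why the bound holds simultaneously for all $({\bf x},{\bf u})$ on a single event and why the $\hat{\mathcal{R}}$ component can be dropped.
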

\begin{proof}
The proof of Corollary~\ref{cor:critic_error_bound} follows directly from Theorem~\ref{thm:critic_convergence} and the structure of the critic function (\ref{eqn:critic_structure}).
\end{proof}
\begin{remark}
    Corollary~\ref{cor:critic_error_bound} provides a high probability finite sample bound on the error in the Q-function approximation after $N_c$ iterations of the critic update. Here, $M_Q$ is the inherent approximation error due to the choice of feature vector $z({\bf x},{\bf u})$, and $\frac{\epsilon}{2}$ is the additional error that can be made arbitrarily small by increasing the number of critic update iterations $N_c$ with high probability.
\end{remark}

In the next subsection, we study the convergence properties of the actor update in the NPG-AC algorithm under the condition that the critic function can be approximated with a small error.

\subsection{Actor Convergence Analysis} \label{subsec:actor_convergence}
In this subsection, we study the convergence properties of the actor update in the NPG-AC algorithm. Specifically, our analysis is restricted to the class of linear state-feedback controllers. By focusing on linear state-feedback controllers, we can provide a rigorous characterization of the convergence behavior of the actor updates under the NPG framework for the chance-constrained setting. In other words, we assume the policy to have the following specific structure as discussed in the following subsection.
\subsubsection{Policy Structure} \label{subsubsec:policy_structure}
We consider a Gaussian policy with linear state-feedback as follows:
\begin{align}
    {\bf u} \sim \pi_{\bf K}  = \mathcal{N}(-{\bf Kx},\Sigma_{\sigma}).
    \label{eqn:policy_K}
\end{align}
Note that the policy parameter $\theta$ is replaced by the state-feedback gain matrix ${\bf K} \in \mathbb{R}^{p\times n}$. 

The set of stabilizing state-feedback gain matrices is defined as
\begin{equation}
    \mathcal{K} \triangleq \left\{ {\bf K} \in \mathbb{R}^{p\times n} \mid \rho({\bf A} - {\bf B K}) < 1 \right\},
\end{equation}
where $\rho(\cdot)$ denotes the spectral radius of a matrix. Note that the set $\mathcal{K}$ is an open set. Furthermore, for control action given by (\ref{eqn:policy_K}), the closed-loop system dynamics is given by
\begin{equation}
    {\bf x}_{k+1} = ({\bf A} - {\bf B K}) {\bf x}_k + {\bar {\bf w}}_k, 
    \label{eqn:cl_sys}
\end{equation}
where $\bar {\bf w}_k$ is as follows,
\begin{equation}
    \bar {\bf w}_k = {\bf B} \sigma_k + {\bf w}_k, \quad \sigma_k \sim \mathcal{N}(0,\Sigma_{\sigma}).
\end{equation}
Therefore, $\bar {\bf w}_k \sim \mathcal{N}(0, \Sigma_{\bar w})$, where $\Sigma_{\bar w} = {\bf B} \Sigma_{\sigma} {\bf B}^T + \Sigma_w$, and ${\bf w}_k \sim \mathcal{N}(0,\Sigma_w)$.

Furthermore, for the convergence analysis we consider the following structure of the risk function $f_c(\cdot)$ in (\ref{eqn:J_c}).
\subsubsection{Risk Function Structure} \label{subsubsec:risk_function_structure}
We consider a linear function of the state as the risk function, \ie,
\begin{equation}
    f_c({\bf x}_{k+1}) = {\bf q}^T {\bf x}_{k+1},
    \label{eqn:fq}
\end{equation}
where ${\bf q} \in \mathbb{R}^n$ is a user defined vector. Note that this choice of risk function is standard in the chance-constrained control literature \cite{schildbachLinearControllerDesign2015}. 

Finally, the  Lagrangian function in (\ref{eqn:JL}) can be expressed as discussed in the following subsection.
\subsubsection{Lagrangian Function Structure} \label{subsubsec:lagrangian_function_structure}
For the policy structure in (\ref{eqn:policy_K}) and the risk function structure in Subsection~{\ref{subsubsec:risk_function_structure}}, the Lagrangian function in (\ref{eqn:JL}) can be expressed as
\begin{equation}
    \mathcal{L}({\bf K},\lambda) = J({\bf K}) + \lambda \left( J_c({\bf K}) - \delta \right).
    \label{eqn:LK}
\end{equation}
Here, $J({\bf K})$ is the standard LQR cost as given in (\ref{eqn:cost_fun_J}), which takes the following closed-form expression. 
\begin{align}
J(\bf K) &= \text{tr}\left( \left(\bf Q + {\bf K}^T{\bf R}{\bf K}  \right)\Sigma_K + {\bf R}\Sigma_{\sigma}  \right) \label{eqn:JK1} \\
    &= \text{tr}\left({\bf P}_K \Sigma_{\bar w} +  {\bf R}\Sigma_{\sigma} \right) \label{eqn:JK2} 
\end{align}

Furthermore, $J_c({\bf K})$ can be expressed as follows.
\begin{align}
    J_c(\bf K) &= \text{E}\left[Q\left(a({\bf x}_k, {\bf K})   \right)\right] \label{eqn:JcK} \text{, where} \\
    Q(a) &= \frac{1}{\sqrt{2\pi}}\int_{a}^{\infty} e^{-\frac{z^2}{2}}dz \label{eqn:Qa} \text{, and} \\
    a({\bf x}_k, {\bf K}) &= \frac{\epsilon - {\bf q}^T({\bf A} - {\bf B}{\bf K}){\bf x}_k}{\sqrt{{\bf q}^T\Sigma_{\bar w}{\bf q} } }. \label{eqn:a}
\end{align}
Note that if ${\bf K} \in \mathcal{K}$, then $\Sigma_K$ and ${\bf P}_K$ are the unique solutions to the Lyapunov equations given in (\ref{eqn:Sigma_K}) and (\ref{eqn:P_K}), respectively.
\begin{align}
    \Sigma_K &= \Sigma_{\bar w} + ({\bf A} - {\bf B}{\bf K})\Sigma_K({\bf A} - {\bf B}{\bf K})^T \label{eqn:Sigma_K} \text{, and} \\
    {\bf P}_K &= {\bf Q} + {\bf K}^T{\bf R}{\bf K} + ({\bf A} - {\bf B}{\bf K})^T{\bf P}_K({\bf A} - {\bf B}{\bf K}) \label{eqn:P_K}
\end{align}
\begin{remark}
    The derivations of (\ref{eqn:JK1}), (\ref{eqn:JK2}), (\ref{eqn:Sigma_K}) and (\ref{eqn:P_K}) are available in \cite{yangProvablyGlobalConvergence2019}. It is straightforward to derive (\ref{eqn:JcK}), (\ref{eqn:Qa}) and (\ref{eqn:a}) using (\ref{eqn:cl_sys})-(\ref{eqn:fq}) in (\ref{eqn:J_c}) and considering the states $\left\{\bf x_k \right\}$ to be ergodic.
\end{remark}

Next, we discuss the derived lemmas and the theorem regarding the convergence properties of the actor update in the NPG-AC algorithm. 

\subsubsection{Actor Convergence Results} \label{subsubsec:convergence_results}
First, we establish the coercivity, L-smoothness, gradient dominance properties of the Lagrangian function $\mathcal{L}(\theta,\lambda)$ in Lemma~\ref{lemma:coercivity}, Lemma~\ref{lemma:Lsmooth}, and Lemma~\ref{lemma:gradient_dominance}, respectively. Then, we prove the linear convergence of the actor update under the bounded error in the critic's value function approximation in Theorem~\ref{th:monotonicity}. Furthermore, we show that there is no duality gap in Lemma~\ref{lemma:duality}. 

The following lemma establishes the coercivity property of the Lagrangian function $\mathcal{L}(\theta,\lambda)$ for a fixed $\lambda \ge 0$.
\begin{lemma}[Coercivity] \label{lemma:coercivity}
    For a fixed $\lambda > 0$, the Lagrangian function $\mathcal{L}(\bf K, \lambda)$ given by (\ref{eqn:LK}) is coercive on $\mathcal K$ in the sense that $\mathcal{L}(\bf K, \lambda) \rightarrow \infty$ as $\bf K \rightarrow \delta \mathcal{K}$, where $\delta \mathcal{K}$ denotes the boundary of $\mathcal K$. 
\end{lemma}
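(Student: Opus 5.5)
The plan is to split $\mathcal{L}({\bf K},\lambda) = J({\bf K}) + \lambda J_c({\bf K}) - \lambda\delta$ and treat the two parts separately. The term $-\lambda\delta$ is a constant. Since $J_c({\bf K})$ is an average of probabilities, it satisfies $0 \le J_c({\bf K}) \le 1$ for every ${\bf K}\in\mathcal{K}$, because $Q(\cdot)$ in (\ref{eqn:Qa}) takes values in $[0,1]$. Hence
\begin{equation*}
\mathcal{L}({\bf K},\lambda) \ge J({\bf K}) - \lambda\delta ,
\end{equation*}
and coercivity of $\mathcal{L}(\cdot,\lambda)$ reduces to coercivity of the LQR cost $J({\bf K})$ on $\mathcal{K}$. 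The only property of the chance-constraint term we need is its uniform boundedness. Neither smoothness nor convexity of $J_c$ is required here.

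For $J({\bf K})$ we use the closed form (\ref{eqn:JK1}) together with the Lyapunov equation (\ref{eqn:Sigma_K}). Since ${\bf R}>0$ and ${\bf K}^T{\bf R}{\bf K}\ge 0$, we get
\begin{equation*}
J({\bf K}) \ge \text{tr}({\bf Q}\Sigma_K) \ge \sigma_{\min}({\bf Q})\,\text{tr}(\Sigma_K).
\end{equation*}
Iterating (\ref{eqn:Sigma_K}) gives the series
\begin{equation*}
\Sigma_K = \sum_{t\ge 0} ({\bf A}-{\bf B}{\bf K})^t \Sigma_{\bar w} \left(({\bf A}-{\bf B}{\bf K})^T\right)^t ,
\end{equation*}
which converges for ${\bf K}\in\mathcal{K}$. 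Moreover $\Sigma_{\bar w} \ge \Sigma_w > 0$; we assume a nondegenerate Gaussian noise, and otherwise we can use ${\bf B}\Sigma_\sigma{\bf B}^T$ together with a controllability-type argument. From this we obtain
\begin{equation*}
\text{tr}(\Sigma_K) \ge \sigma_{\min}(\Sigma_{\bar w}) \sum_{t\ge0} \|({\bf A}-{\bf B}{\bf K})^t\|_F^2 .
\end{equation*}
Next we take an eigenpair $({\bf A}-{\bf B}{\bf K}){\bf v}=\mu{\bf v}$ with $|\mu|=\rho({\bf A}-{\bf B}{\bf K})$ and $\|{\bf v}\|=1$. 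Then $\|({\bf A}-{\bf B}{\bf K})^t\|_F^2 \ge |\mu|^{2t}$, and therefore
\begin{equation*}
\text{tr}(\Sigma_K) \ge \frac{\sigma_{\min}(\Sigma_{\bar w})}{1-\rho({\bf A}-{\bf B}{\bf K})^2}.
\end{equation*}

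We now treat the two ways ${\bf K}$ can reach the boundary of $\mathcal{K}$ in the extended sense. In the first case ${\bf K}\to\bar{\bf K}\in\partial\mathcal{K}$. By continuity of the spectral radius, $\rho({\bf A}-{\bf B}{\bf K})\to 1$, so the bound above forces $J({\bf K})\to\infty$. In the second case $\|{\bf K}\|\to\infty$ within $\mathcal{K}$. This case is also included if the statement is read as coercivity in the usual sense. Here we use $J({\bf K}) \ge \text{tr}({\bf K}^T{\bf R}{\bf K}\Sigma_K) \ge \sigma_{\min}({\bf R})\,\sigma_{\min}(\Sigma_{\bar w})\,\|{\bf K}\|_F^2$, which again diverges, because $\Sigma_K\ge\Sigma_{\bar w}$. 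Combined with the lower bound on $\mathcal{L}$, this gives $\mathcal{L}({\bf K},\lambda)\to\infty$ in both cases.

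The main obstacle is the first case: we must make rigorous the claim that $\text{tr}(\Sigma_K)$ blows up as the closed-loop spectral radius tends to one. This is where positive definiteness of $\Sigma_{\bar w}$ is essential. The eigenvector argument above should handle it, with care taken for complex eigenvectors. Writing ${\bf v}^{H}\Sigma_K{\bf v}$ and using Hermitian forms avoids that issue, since
\begin{equation*}
{\bf v}^H\Sigma_K{\bf v} \ge \sum_t |\mu|^{2t}\,{\bf v}^H\Sigma_{\bar w}{\bf v}.
\end{equation*}
We also need to be careful that $J_c$ is well defined on all of $\mathcal{K}$, which follows from the ergodicity of the stable closed loop (\ref{eqn:cl_sys}). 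Its boundedness is used only as a lower bound, $J_c\ge 0$, so the sign of $\lambda>0$ is what makes the argument go through.
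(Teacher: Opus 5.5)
Your proof is correct and follows the paper's argument: the paper likewise lower-bounds $\mathcal{L}({\bf K},\lambda)$ by $J({\bf K})-\lambda\delta$ using $0\le J_c({\bf K})\le 1$ and then invokes coercivity of $J$ on $\mathcal{K}$, which it cites from \cite{huTheoreticalFoundationPolicy2023}, whereas you re-derive that fact from the Lyapunov series, under the assumption $\Sigma_w>0$ that you state and that the cited result also relies on. One small slip in your closing remark: the identity ${\bf v}^H\Sigma_K{\bf v}=\sum_t|\mu|^{2t}\,{\bf v}^H\Sigma_{\bar w}{\bf v}$ needs ${\bf v}$ to be an eigenvector of $({\bf A}-{\bf B}{\bf K})^T$, not of ${\bf A}-{\bf B}{\bf K}$; that detour is unnecessary anyway, since $\|({\bf A}-{\bf B}{\bf K})^t\|_F\ge\|({\bf A}-{\bf B}{\bf K})^t\|\ge\rho({\bf A}-{\bf B}{\bf K})^t$ already handles complex eigenvalues.
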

\begin{proof}
    The proof follows from the fact that the cost function $J(\cdot)$ is coercive on $\mathcal K$, see \cite{huTheoreticalFoundationPolicy2023}, and the constraint function $0 \le J_c(\cdot) \le 1$ is bounded.
\end{proof}
\begin{remark}
        The coercivity property of $\mathcal{L}(\bf K, \lambda)$ is crucial to ensure the stability of the closed-loop system during the training process. In other words, the coercive function $\mathcal{L}(\bf K, \lambda)$ serves as a barrier function over the stable policy set $\mathcal K$, and no additional measure is required to ensure the stability of the closed-loop system during the training process. Therefore, if we start from a stabilizing policy, \ie, ${\bf K}_0$ and improve it at every iteration by keeping $\mathcal{L}({\bf K}_{j+1}, \lambda) \le \mathcal{L}({\bf K}_{j}, \lambda)$, then all the intermediate policies ${\bf K}_j, j = 0,1,\ldots$ will also be stabilizing \cite{huTheoreticalFoundationPolicy2023}.
\end{remark}

For the convergence analysis of the actor update, L-smoothness and gradient dominance properties of the Lagrangian function $\mathcal{L}(\bf K, \lambda)$ are required. The following lemmas establish the L-smoothness and gradient dominance property of $\mathcal{L}(\bf K, \lambda)$ for a fixed $\lambda \ge 0$.

\begin{lemma}[Smoothness of the Lagrangian]\label{lemma:Lsmooth}
Fix $\lambda>0$ and define the sublevel set
\[
\mathcal K_\zeta \triangleq \left\{\, {\bf K}\in\mathcal K \;\middle|\; \mathcal L({\bf K},\lambda)\le \zeta \,\right\}.
\]
Then $\mathcal L(\cdot,\lambda)$ is $L$–smooth on $\mathcal K_\zeta$. Here $L >0$ is a constant and depends on the problem parameters and $\zeta$.
\end{lemma}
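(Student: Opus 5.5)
The plan is to split $\mathcal L({\bf K},\lambda)=J({\bf K})+\lambda J_c({\bf K})-\lambda\delta$ and prove smoothness of each term separately on $\mathcal K_\zeta$. The set $\mathcal K_\zeta$ is compact by Lemma~\ref{lemma:coercivity}: it is closed in $\mathcal K$, it stays away from $\partial\mathcal K$ because $\mathcal L\to\infty$ there, and it is bounded since $J({\bf K})\ge \text{tr}({\bf K}^T{\bf R}{\bf K}\Sigma_{\bar w})$ grows with $\|{\bf K}\|$ while $0\le J_c\le 1$. Also, $\mathcal K_\zeta\subseteq\{{\bf K}: J({\bf K})\le \zeta+\lambda\delta\}$, which is the standard LQR sublevel set. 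On that set, \cite{huTheoreticalFoundationPolicy2023,fazelGlobalConvergencePolicy2018a} give uniform bounds $\|{\bf K}\|\le c_1(\zeta)$, $\|\Sigma_K\|\le c_2(\zeta)$, $\|{\bf P}_K\|\le c_3(\zeta)$, together with the $L_J$-smoothness of $J$, i.e.\ a Lipschitz gradient $\nabla J({\bf K})=2(({\bf R}+{\bf B}^T{\bf P}_K{\bf B}){\bf K}-{\bf B}^T{\bf P}_K{\bf A})\Sigma_K$ (up to sign convention). I will take the $J$ part directly from these references.

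For $J_c$, I will make the stationary expectation in (\ref{eqn:JcK}) explicit. Under ${\bf K}\in\mathcal K$ the stationary state is ${\bf x}\sim\mathcal N(0,\Sigma_K)$, so $y={\bf q}^T({\bf A}-{\bf B}{\bf K}){\bf x}$ is scalar Gaussian with variance $s({\bf K})={\bf q}^T({\bf A}-{\bf B}{\bf K})\Sigma_K({\bf A}-{\bf B}{\bf K})^T{\bf q}$. Writing $\sigma_w^2={\bf q}^T\Sigma_{\bar w}{\bf q}$, the Gaussian identity $\mathbb E[Q((\epsilon-y)/\sigma_w)]=\mathbb P(y+\sigma_w\xi\ge\epsilon)$ with independent $\xi\sim\mathcal N(0,1)$ gives the closed form
\[
J_c({\bf K})=Q\!\left(\frac{\epsilon}{\sqrt{s({\bf K})+\sigma_w^2}}\right).
\]
Using the Lyapunov equation (\ref{eqn:Sigma_K}), this simplifies further to $s({\bf K})+\sigma_w^2={\bf q}^T\Sigma_K{\bf q}$, so $J_c({\bf K})=Q\big(\epsilon/\sqrt{{\bf q}^T\Sigma_K{\bf q}}\big)$. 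This composition is the key simplification: $J_c=\psi\circ v$, with scalar $\psi(t)=Q(\epsilon/\sqrt t)$ and $v({\bf K})={\bf q}^T\Sigma_K{\bf q}\ge\sigma_w^2>0$.

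Then I will apply the chain rule. We have $\nabla J_c=\psi'(v)\nabla v$ and $\nabla^2J_c=\psi''(v)\nabla v\nabla v^T+\psi'(v)\nabla^2 v$. Since $v\in[\sigma_w^2,\|{\bf q}\|^2c_2(\zeta)]$, a compact interval bounded away from zero, $\psi'$ and $\psi''$ are bounded there. It remains to bound the first and second derivatives of ${\bf K}\mapsto\Sigma_K$ on $\mathcal K_\zeta$. Differentiating (\ref{eqn:Sigma_K}) shows that $D\Sigma_K[E]$ solves a Lyapunov equation driven by $-{\bf B}E\Sigma_K({\bf A}-{\bf B}{\bf K})^T+(\cdot)^T$, and the second derivative solves a Lyapunov equation driven by terms quadratic in $E$ and linear in $D\Sigma_K[E]$. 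The Lyapunov solution operator $\mathcal T_K(X)=\sum_t ({\bf A}-{\bf B}{\bf K})^tX(({\bf A}-{\bf B}{\bf K})^T)^t$ satisfies $\|\mathcal T_K\|\le \|\Sigma_K\|/\sigma_{\min}(\Sigma_{\bar w})$, which is the same trick used in \cite{fazelGlobalConvergencePolicy2018a}. Hence both derivatives are bounded by polynomials in $c_1,c_2$ and $1/\sigma_{\min}(\Sigma_{\bar w})$. Alternatively, since $\mathcal K_\zeta$ is compact and $J_c$ is $C^2$ (indeed real-analytic) on the open set $\mathcal K$, a bounded Hessian follows abstractly. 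I will present the explicit bound but note the compactness argument as a fallback.

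The result is then $L=L_J+\lambda L_{J_c}$. One technical point is the main obstacle: $\mathcal K_\zeta$ need not be convex, so a bounded Hessian does not immediately give a Lipschitz gradient between arbitrary points of $\mathcal K_\zeta$. I will handle this as in \cite{fazelGlobalConvergencePolicy2018a,huTheoreticalFoundationPolicy2023}. By compactness, there is a uniform radius $r>0$ such that the $r$-neighbourhood of $\mathcal K_\zeta$ lies inside $\mathcal K_{\zeta'}$ for some $\zeta'>\zeta$. I will bound the Hessian on that enlarged set and state $L$-smoothness in the local form used by the actor analysis, namely the descent inequality for steps of length at most $r$.
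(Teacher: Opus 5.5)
Your proof is correct. It has the same skeleton as the paper's: $\mathcal L(\cdot,\lambda)$ is $C^2$ on $\mathcal K$, coercivity makes $\mathcal K_\zeta$ compact, and so the Hessian is bounded. Where you differ is in how you justify the key step.

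The paper argues that $J_c$ is analytic because $a({\bf x}_k,{\bf K})$ is affine in ${\bf K}$. It then invokes Theorem~1 of \cite{huTheoreticalFoundationPolicy2023} to pass from $C^2$ plus coercivity to $L$-smoothness. That argument overlooks that the expectation in (\ref{eqn:JcK}) is taken under the stationary law ${\bf x}_k\sim\mathcal N(0,\Sigma_K)$, which itself depends on ${\bf K}$.

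Your closed form $J_c({\bf K})=Q\big(\epsilon/\sqrt{{\bf q}^T\Sigma_K{\bf q}}\big)$ makes that dependence explicit. It reduces smoothness of $J_c$ to smoothness of the rational map ${\bf K}\mapsto\Sigma_K$ on $\mathcal K$, composed with a scalar function that is smooth away from $0$. Your version therefore closes a gap that the paper's justification leaves open.

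Your Lyapunov-operator bounds give something the paper does not: an explicit $L=L_J+\lambda L_{J_c}$ in terms of $\|{\bf K}\|$, $\|\Sigma_K\|$ and $\sigma_{\min}(\Sigma_{\bar w})$, not just existence. Your $r$-neighbourhood argument also addresses the non-convexity of $\mathcal K_\zeta$ directly, whereas the paper defers that issue to the cited theorem.

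The same ingredients give a Lipschitz gradient between arbitrary points of $\mathcal K_\zeta$, not only the local descent inequality. For $\|{\bf K}-{\bf K}'\|_F\le r$, use the segment inside the $r$-ball. Otherwise, bound $\|\nabla\mathcal L({\bf K})-\nabla\mathcal L({\bf K}')\|_F\le 2G\le (2G/r)\|{\bf K}-{\bf K}'\|_F$, where $G$ bounds the gradient norm over $\mathcal K_\zeta$.
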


\begin{proof}
The unconstrained LQG cost $J({\bf K})$ is twice continuously differentiable over the stabilizing set $\mathcal K$ \cite{huTheoreticalFoundationPolicy2023}. Since the exponential function is analytic, $Q(a)$ is also analytic in $a(\cdot,\cdot)$. Moreover, $a({\bf x}_k,{\bf K})$ is affine in ${\bf K}$. Hence $J_c({\bf K})$ is analytic in ${\bf K}$ and $\mathcal L({\bf K},\lambda)$ is at least $C^2$ on $\mathcal K$.

By Lemma~\ref{lemma:coercivity}, $\mathcal L(\cdot,\lambda)$ is coercive, and $\nabla^{2}_{\bf K}\mathcal L({\bf K},\lambda)$ is continuous on $\mathcal K$, therefore, using Theorem~1 from \cite{huTheoreticalFoundationPolicy2023}, we can directly state Lemma~\ref{lemma:Lsmooth}.
\end{proof}
\begin{remark}
    The L-smoothness property of $\mathcal{L}(\bf K, \lambda)$ is crucial for the convergence analysis of the actor update in the NPG-AC algorithm. It ensures that the gradient of the Lagrangian function does not change too rapidly, which is essential for the stability and convergence of gradient-based optimization methods. The constant $L$ provides a bound on how much the gradient can change, which helps in determining appropriate step sizes for the actor updates.
\end{remark}
Then we establish the gradient dominance property of the Lagrangian function $\mathcal{L}(\bf K, \lambda)$ in the following lemma.
\begin{lemma}[Gradient Dominance]\label{lemma:gradient_dominance}
Fix $\lambda>0$ and define the sublevel set
\[
\mathcal K_\zeta \triangleq \left\{\, {\bf K}\in\mathcal K \;\middle|\; \mathcal L({\bf K},\lambda)\le \zeta \,\right\}.
\]
Then $\mathcal L(\cdot,\lambda)$ is gradient dominated on $\mathcal K_\zeta$, i.e., there exists a constant $\mu>0$ such that for all ${\bf K}\in\mathcal K_\zeta$,
\[\mathcal L({\bf K},\lambda) - \mathcal L({\bf K}^*,\lambda) \le \mu \|\nabla_{\bf K}\mathcal L({\bf K},\lambda)\|_F^2,\]
where ${\bf K}^* = \arg\min_{{\bf K}\in\mathcal K}\mathcal L({\bf K},\lambda)$.
\end{lemma}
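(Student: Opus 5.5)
We would follow the same template as for Lemma~\ref{lemma:Lsmooth}: reduce to the smooth and coercive structure of $\mathcal L(\cdot,\lambda)$ on the compact set $\mathcal K_\zeta$. Then import the gradient-dominance machinery for $J(\mathbf K)$ from \cite{fazelGlobalConvergencePolicy2018a,huTheoreticalFoundationPolicy2023}, and treat $\lambda J_c(\mathbf K)$ as a bounded, smooth perturbation.

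\textbf{Step 1 (compactness and existence of $\mathbf K^*$).} By Lemma~\ref{lemma:coercivity} and continuity of $\mathcal L(\cdot,\lambda)$ on $\mathcal K$, each sublevel set $\mathcal K_\zeta$ is closed, bounded and strictly contained in $\mathcal K$, hence compact. So a minimiser $\mathbf K^*$ exists. On $\mathcal K_\zeta$ the quantities $\|\Sigma_K\|$, $\|\mathbf P_K\|$, $\|\mathbf K\|$ and $\sigma_{\min}(\Sigma_K)\ge\sigma_{\min}(\Sigma_{\bar w})>0$ are uniformly bounded.

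\textbf{Step 2 (gradient formula).} We write
\[
\nabla_{\mathbf K}\mathcal L=2E_K\Sigma_K+\lambda\nabla_{\mathbf K}J_c,
\]
where $E_K=(\mathbf R+\mathbf B^T\mathbf P_K\mathbf B)\mathbf K-\mathbf B^T\mathbf P_K\mathbf A$. For the second term we differentiate \eqref{eqn:JcK} through $a(\mathbf x_k,\mathbf K)$. Since $\mathbf x_k\sim\mathcal N(0,\Sigma_K)$ in stationarity, $J_c$ is itself a Gaussian $Q$-function of $\epsilon/\sqrt{\mathbf q^T\Sigma_{\bar w}\mathbf q+\mathbf q^T(\mathbf A-\mathbf B\mathbf K)\Sigma_K(\mathbf A-\mathbf B\mathbf K)^T\mathbf q}$. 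Its gradient is therefore explicit and bounded on $\mathcal K_\zeta$.

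\textbf{Step 3 (the inequality).} We first try to establish a \L{}ojasiewicz-type argument directly. On compact $\mathcal K_\zeta$, define
\[
\mu=\sup_{\mathbf K\in\mathcal K_\zeta,\,\nabla\mathcal L\neq0}\frac{\mathcal L(\mathbf K,\lambda)-\mathcal L(\mathbf K^*,\lambda)}{\|\nabla\mathcal L\|_F^2}.
\]
Finiteness of $\mu$ requires two facts: (i) every stationary point in $\mathcal K_\zeta$ is a global minimiser, and (ii) near minimisers the Hessian is positive definite, giving a local quadratic growth bound. Away from a neighbourhood of the minimiser set, $\|\nabla\mathcal L\|_F$ is bounded below by continuity and compactness, and the numerator is at most $\zeta-\mathcal L(\mathbf K^*,\lambda)$.

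To prove (i), we would adapt the Fazel et al.\ cost-difference lemma,
\[
J(\mathbf K')-J(\mathbf K)\ \ge\ -\tfrac{\|\Sigma_{K'}\|}{\sigma_{\min}(\mathbf R)}\operatorname{tr}(E_K^TE_K).
\]
Since $J_c$ depends on $\mathbf K$ only through the scalar variance $s(\mathbf K)=\mathbf q^T\Sigma_K\mathbf q$ (after absorbing $\Sigma_{\bar w}$), we would bound $|J_c(\mathbf K)-J_c(\mathbf K^*)|$ by the Lipschitz constant of $Q(\epsilon/\sqrt{s})$ on the bounded range of $s$ times $|s(\mathbf K)-s(\mathbf K^*)|$. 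The variance difference is in turn controlled via the Lyapunov equation \eqref{eqn:Sigma_K} by $\|\mathbf K-\mathbf K^*\|$ and hence by $\|E_K\|$. Combining this with $\|\nabla\mathcal L\|_F\ge 2\sigma_{\min}(\Sigma_{\bar w})\|E_K\|_F-\lambda\|\nabla J_c\|_F$ yields the claimed inequality with $\mu$ depending on $\zeta,\lambda$ and the problem data.

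\textbf{Main obstacle.} The hard part is (i), ruling out spurious stationary points created by the nonconvex term $\lambda J_c$, which can cancel $2E_K\Sigma_K$. Our first attempt is the change of variables to $(\Sigma_K,\,\mathbf K\Sigma_K)$ used in \cite{huTheoreticalFoundationPolicy2023}. In these variables $J$ becomes convex, and $J_c$ becomes monotone in the linear functional $\mathbf q^T\Sigma\mathbf q$. We would then argue that $\mathcal L$ is a composition of a convex function with a monotone (quasi-convex) scalar map, so that every stationary point is global. If this fails, we would restrict to $\lambda$ small enough that $\lambda\|\nabla J_c\|$ is dominated by the strong-convexity modulus of $J$ on $\mathcal K_\zeta$.
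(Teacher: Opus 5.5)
\textbf{The gap is Step~3(i).} Your Steps~1--2 are fine. The observation that in stationarity $J_c({\bf K})=\phi(s({\bf K}))$, with $s({\bf K})={\bf q}^T\Sigma_K{\bf q}$ and $\phi(s)=Q(\epsilon/\sqrt{s})$, is correct and is the right way to see the structure. But Step~3(i) is the crux of the lemma, and the mechanism you propose for it does not work.

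In the variables $(\Sigma_K,{\bf K}\Sigma_K)$, the Lagrangian is the (convexified) LQG cost \emph{plus} $\lambda\phi({\bf q}^T\Sigma{\bf q})$. That is a sum, not a composition. Moreover, $\phi'(s)=\frac{\epsilon}{2\sqrt{2\pi}}s^{-3/2}e^{-\epsilon^2/(2s)}$ is decreasing for $s>\epsilon^2/3$, so $\phi$ is concave there and the sum is not quasi-convex.

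To make this concrete, let $v(s)=\min\{J({\bf K}):s({\bf K})\le s\}$. Then $\mathcal L({\bf K},\lambda)\ge v(s({\bf K}))+\lambda\phi(s({\bf K}))-\lambda\delta$ for all ${\bf K}$. Equality holds at a minimiser ${\bf K}^\dagger$ of $J$ subject to $s({\bf K})\le s^\dagger$ whenever $s^\dagger<s(\bar{\bf K}^*)$, since the level constraint is then active.
\begin{itemize}
\item Hence every strict local minimiser $s^\dagger$ of the scalar function $v+\lambda\phi$ lifts to a local minimiser, i.e.\ a stationary point, of $\mathcal L(\cdot,\lambda)$.
\item If that point is not global, your inequality fails there for every $\mu$.
\item A convex decreasing $v$ plus an increasing function with a concave tail need not be unimodal.
\end{itemize}
So (i) does not follow from the structure you invoke. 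It needs an additional hypothesis, or a restriction on $\lambda$. An example of such a hypothesis is $\phi$ convex on the range of $s$ over $\mathcal K_\zeta$, i.e.\ $s\le\epsilon^2/3$ there, which makes the lifted objective convex on the relevant region.

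Your fallback is the restriction on $\lambda$, so it proves a weaker statement than the lemma, which is claimed for every fixed $\lambda>0$. Also, $J$ is not convex in ${\bf K}$. The ``strong-convexity modulus of $J$ on $\mathcal K_\zeta$'' therefore has to be replaced by two ingredients: the LQR gradient-dominance bound away from $\bar{\bf K}^*$, and positive definiteness of $\nabla^2 J(\bar{\bf K}^*)$.

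\textbf{The cost-difference route breaks at the same point.} ${\bf E}_K$ vanishes at the unconstrained optimum $\bar{\bf K}^*$, not at ${\bf K}^*$. So $\|{\bf K}-{\bf K}^*\|$ is not controlled by $\|{\bf E}_K\|$; take ${\bf K}=\bar{\bf K}^*$ to see this. Your lower bound $2\sigma_{\min}(\Sigma_{\bar w})\|{\bf E}_K\|_F-\lambda\|\nabla J_c\|_F$ is vacuous exactly near ${\bf K}^*$, where the two terms cancel. Step~3(ii), positive definiteness of $\nabla^2_{\bf K}\mathcal L$ at ${\bf K}^*$, is likewise only asserted; $\lambda\nabla^2 J_c$ need not be positive semidefinite.

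\textbf{How the paper argues, and why it will not close your gap.} The paper takes a different route with three steps.
\begin{itemize}
\item It uses $J({\bf K})-J({\bf K}^*)\le J({\bf K})-J(\bar{\bf K}^*)\le\mu_1\|\nabla J\|_F^2$.
\item It asserts in \eqref{eqn:grad_LK} that the cross term $2\lambda\,\mathrm{tr}(\nabla J^T\nabla J_c)$ is nonnegative (via $\mathbb E[{\bf x}_k]=0$), so that $\|\nabla\mathcal L\|_F^2\ge\|\nabla J\|_F^2$.
\item It absorbs the bounded term $\lambda(J_c({\bf K})-J_c({\bf K}^*))$ by enlarging $\mu$.
\end{itemize}
The cross-term step is exactly the cancellation you flagged, and it fails. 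Whenever $\nabla_{\bf K}s(\bar{\bf K}^*)\neq0$, one has ${\bf K}^*\neq\bar{\bf K}^*$. Stationarity then gives $\nabla J({\bf K}^*)=-\lambda\nabla J_c({\bf K}^*)\neq0$, so the cross term equals $-2\|\nabla J({\bf K}^*)\|_F^2<0$. The weight $e^{-a^2/2}$ depends on ${\bf x}_k$, so $\mathbb E[{\bf x}_k]=0$ does not remove it; your explicit gradient $\nabla J_c=\phi'(s)\nabla_{\bf K}s$ shows it is nonzero in general. Finally, the absorption step itself presupposes that $\|\nabla\mathcal L\|_F$ is bounded below wherever the gap is positive, which is again your (i).
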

\begin{proof}
The proof of Lemma~\ref{lemma:gradient_dominance} is provided in Appendix~\ref{app:proof_lemma_gradient_dominance}.
\end{proof}
\begin{remark}
    The gradient dominance property ensures that the difference between the value of the Lagrangian function at any point and its minimum value can be bounded by the square of the norm of its gradient. This property is particularly useful in proving linear convergence rates for optimization algorithms, as it provides a direct relationship between the function value and the gradient norm.
\end{remark}
Next, we show that, despite the presence of noise in the observed data and the use of finite samples to estimate the expected values required for the natural policy gradient in~\eqref{eqn:policy_update}, the Lagrangian function $\mathcal{L}(\mathbf{K}_j, \lambda)$ decreases monotonically with the iteration index $j$ with high probability, see Theorem~\ref{th:monotonicity}. To establish this, we need the following lemma regarding the convergence rate with true NPG. Here the term true NPG is used to refer to the natural policy gradient computed using the exact model parameters and the exact value function, without any approximation error. The true NPG value is given by ${\textbf{F}}^{-1}{G} = 2 ({\bf R} + {\bf B}^T{\bf P}_K{\bf B}) {\bf K} - 2{\bf B}^T{\bf P}_K{\bf A} $ \cite{yangProvablyGlobalConvergence2019}.

\begin{lemma}[Convergence rate with true NPG] For a given $\lambda > 0$, the NPG algorithm converge to a global optimal policy parameter ${\bf K}^*$ with a linear convergence rate, \ie,
    \begin{align}
        \mathcal{L}({\bf K}^{'}, \lambda) - \mathcal{L}({\bf K}^*, \lambda) \le \beta (\mathcal{L}({\bf K}, \lambda) - \mathcal{L}({\bf K}^*, \lambda)), \label{eqn:conv_rate}
    \end{align}
    provided we start from a stable controller, \ie, ${\bf K}_0 \in \mathcal{K}$. Here ${\bf K}^{'}$ is the NPG update from $\bf K$ in a single iteration (\ref{eqn:K_update}). The constant $0<\beta <1$ depends on the problem parameters and learning rate $0< \alpha < \frac{2}{L C_{\Sigma}}$, where $L$ is the smoothness constant of $\mathcal{L}(\cdot,\lambda)$ and $C_{\Sigma}$ is an upper bound on $tr(\Sigma_K)$.
    \begin{equation}
    {\bf K}^{'} = {\bf K} - \alpha {\textbf{F}}^{-1}{G}. \label{eqn:K_update}
\end{equation}
    Here ${{\bf F}}$ is the Fisher information matrix. $[{{\bf F}}]_{(i,j)(i^{'},j^{'})} = \text{E}[\nabla_{K_{ij}}\log(\pi_K({\bf u}|{\bf x}))\nabla_{K_{i^{'}j^{'}}}\log(\pi_K({\bf u}|{\bf x}))^T] $. 
    \label{lemma:conv_rate}
\end{lemma}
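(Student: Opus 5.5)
The plan is the standard descent-lemma argument combined with gradient dominance, carried out on a sublevel set that the iterates never leave.

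\emph{Step 1 (invariant sublevel set).} Fix $\lambda>0$ and set $\zeta=\mathcal L({\bf K}_0,\lambda)$. By Lemma~\ref{lemma:coercivity}, the sublevel set $\mathcal K_\zeta$ is compact and contained in $\mathcal K$. On $\mathcal K_\zeta$ we can therefore take uniform bounds $C_\Sigma\ge \mathrm{tr}(\Sigma_K)$ and $\sigma_{\min}(\Sigma_K)\ge\sigma_{\min}(\Sigma_{\bar w})>0$.

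\emph{Step 2 (natural gradient vs.\ gradient).} For the Gaussian policy (\ref{eqn:policy_K}), the Fisher matrix is ${\bf F}=\Sigma_K\otimes\Sigma_\sigma^{-1}$, up to the vectorisation convention. Hence the natural gradient direction ${\bf E}_K={\bf F}^{-1}G$ satisfies $\nabla_{\bf K}\mathcal L={\bf E}_K\Sigma_K$, up to a positive scaling absorbed into $\alpha$. Here ${\bf E}_K$ must be understood as the Lagrangian natural gradient. The quoted formula $2({\bf R}+{\bf B}^T{\bf P}_K{\bf B}){\bf K}-2{\bf B}^T{\bf P}_K{\bf A}$ covers $J$; the $\lambda J_c$ part contributes an extra term $\lambda\nabla J_c\,\Sigma_K^{-1}$. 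Two inequalities follow:
\[
\langle \nabla_{\bf K}\mathcal L,{\bf E}_K\rangle \ge \frac{\|\nabla_{\bf K}\mathcal L\|_F^2}{\|\Sigma_K\|}\ge \frac{\|\nabla_{\bf K}\mathcal L\|_F^2}{C_\Sigma},
\qquad
\|{\bf E}_K\|_F\le \frac{\|\nabla_{\bf K}\mathcal L\|_F}{\sigma_{\min}(\Sigma_{\bar w})}.
\]

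\emph{Step 3 (one-step descent and staying stable).} Suppose the segment from ${\bf K}$ to ${\bf K}'={\bf K}-\alpha{\bf E}_K$ stays in a neighbourhood of $\mathcal K_\zeta$ on which Lemma~\ref{lemma:Lsmooth} gives $L$-smoothness. Then
\[
\mathcal L({\bf K}')\le \mathcal L({\bf K})-\alpha\langle\nabla\mathcal L,{\bf E}_K\rangle+\tfrac{L\alpha^2}{2}\|{\bf E}_K\|_F^2
\le \mathcal L({\bf K})-\alpha c(\alpha)\|\nabla\mathcal L\|_F^2,
\]
where $c(\alpha)=\frac1{C_\Sigma}-\frac{L\alpha}{2\sigma_{\min}^2(\Sigma_{\bar w})}$. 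This constant is positive for $\alpha$ below a threshold of the form $2/(LC_\Sigma)$ once the problem constants are absorbed into $L$. Making this step rigorous is the main obstacle: a priori the step could jump outside $\mathcal K$, where $\mathcal L$ is undefined. I would handle it with a continuity argument along the ray ${\bf K}-t{\bf E}_K$, $t\in[0,\alpha]$. Let $t^\ast$ be the first exit time from $\mathcal K_{\zeta'}$ for some $\zeta'>\zeta$, with $L$ taken on $\mathcal K_{\zeta'}$. Since $\mathcal K_{\zeta'}$ is compact and strictly inside $\mathcal K$, there is a uniform gap $\mathrm{dist}(\mathcal K_\zeta,\partial\mathcal K_{\zeta'})>0$. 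The descent inequality holds on $[0,t^\ast]$ and gives $\mathcal L\le\zeta<\zeta'$ at $t^\ast$, which is a contradiction. Hence ${\bf K}'\in\mathcal K_\zeta$, and by induction every iterate is stabilizing, in line with the remark after Lemma~\ref{lemma:coercivity}.

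\emph{Step 4 (linear rate).} Apply Lemma~\ref{lemma:gradient_dominance}, which gives $\|\nabla\mathcal L\|_F^2\ge(\mathcal L({\bf K})-\mathcal L({\bf K}^\ast))/\mu$. Combined with Step~3,
\[
\mathcal L({\bf K}')-\mathcal L({\bf K}^\ast)\le\Bigl(1-\tfrac{\alpha c(\alpha)}{\mu}\Bigr)\bigl(\mathcal L({\bf K})-\mathcal L({\bf K}^\ast)\bigr).
\]
So $\beta=1-\alpha c(\alpha)/\mu$. If this quantity is not already in $(0,1)$, shrink $\alpha$ or increase $\mu$, both of which are admissible. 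The resulting rate depends only on $L$, $\mu$, $C_\Sigma$, $\Sigma_{\bar w}$ and $\alpha$, which matches (\ref{eqn:conv_rate}).
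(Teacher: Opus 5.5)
Your proposal is correct and follows the same route as the paper's proof: the $L$-smoothness descent inequality applied to the step ${\bf K}' = {\bf K} - \alpha \nabla_{\bf K}\mathcal{L}\,\Sigma_K^{-1}$, followed by Lemma~\ref{lemma:gradient_dominance} to obtain the contraction factor $\beta$. Your version is more careful than the paper's in two respects: you bound the linear and quadratic terms separately (via $\|\Sigma_K\|\le C_\Sigma$ and $\sigma_{\min}(\Sigma_K)\ge\sigma_{\min}(\Sigma_{\bar w})$) rather than factoring out $\mathrm{tr}\bigl(\alpha\Sigma_K^{-1}-\frac{L\alpha^2}{2}\Sigma_K^{-1}\Sigma_K^{-T}\bigr)$ through a trace inequality, and you supply the exit-time argument that keeps the iterates in $\mathcal{K}_\zeta$, which the paper only invokes informally through the remark after Lemma~\ref{lemma:coercivity}; your threshold $2\sigma_{\min}^2(\Sigma_{\bar w})/(LC_\Sigma)$ matches the stated $2/(LC_\Sigma)$ only after enlarging $L$, which is harmless.
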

\begin{proof}
    The proof of Lemma~\ref{lemma:conv_rate} is provided in Appendix~\ref{app:proof_lemma_conv_rate}.
\end{proof}
\begin{remark}
    Lemma~\ref{lemma:conv_rate} implies that for any initial stabilizing controller ${\bf K}_0\in\mathcal K$, the iterates $\{{\bf K}_i\}$ generated by \eqref{eqn:K_update} will converge to ${\bf K}^*$ as $i\to\infty$ at a linear rate determined by $\beta$.  
\end{remark}

Note that Lemma~\ref{lemma:conv_rate} holds for the exact natural policy gradient. However, in practice, we use the approximate natural policy gradient due to the error in the critic's value function approximation, noise in the observed data and the use of finite samples to estimate expected values. Therefore, we need to establish the monotonicity property of the Lagrangian function $\mathcal{L}({\bf K}_j, \lambda)$ under the approximate natural policy gradient. The following theorem establishes this property.
\begin{myth}[Convergence under approximate NPG] \label{th:monotonicity}
    For a fixed $\lambda > 0$, and under Corollary~\ref{cor:critic_error_bound}, for any initial stabilizing controller ${\bf K}_0 \in \mathcal{K}$, $j \ge N_a$, where $N_a$ is sufficiently large, and step size $\alpha_j$ is sufficiently small that satisfies the condition in (\ref{eqn:cond_alpha_j}), then the iterates $\{{\bf K}_j\}$ generated by (\ref{eqn:policy_update}) satisfy
    \begin{align}
        &\mathcal{L}({\bf K}_{j+1}, \lambda) - \mathcal{L}({\bf K}^*, \lambda) \le \beta_1 (\mathcal{L}({\bf K}_j, \lambda) - \mathcal{L}({\bf K}^*, \lambda)), \nonumber \\
        & \text{ \wpp\ } 1 - \delta_K, \label{eqn:monotonicity}
    \end{align}
    where ${\bf K}_{j+1}$ is the policy update from ${\bf K}_j$ in a single iteration using (\ref{eqn:policy_update}), ${\bf K}^* = \arg\min_{{\bf K}\in\mathcal K}\mathcal L({\bf K},\lambda)$, $0<\beta_1<1, \delta_K <1$ depend on the problem parameters and learning rate $\alpha_j$. $\alpha_j$ is bounded as follows,
\begin{align}
&0 < \alpha_j < \min\left(\frac{2}{LC_\Sigma}, \alpha_j^*\right) \text{, where} \nonumber \\
&\alpha_j^* = \sup \{\alpha_j > 0: c_1 \alpha_j + c_2 \alpha_j^2 < (1-\beta)\left(\mathcal{L}({\bf K}_{j}, \lambda) \right. \nonumber \\
 &- \left. \mathcal{L}({\bf K}^*, \lambda) \right) \}.
 \label{eqn:cond_alpha_j}
\end{align}

Here, $c_1$ and $c_2$ depend on the problem parameters and state correlation matrix for the policy ${\bf K}_j$.
\end{myth}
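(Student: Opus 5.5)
We prove Theorem~\ref{th:monotonicity} by perturbation around the exact step of Lemma~\ref{lemma:conv_rate}. Write the implemented update as ${\bf K}_{j+1} = {\bf K}_j - \alpha_j \hat{\bf F}^{-1}\hat G$ and the exact step as ${\bf K}'_j = {\bf K}_j - \alpha_j {\bf F}^{-1}G$. Define the error $\Delta_j \triangleq \hat{\bf F}^{-1}\hat G - {\bf F}^{-1}G$, so that ${\bf K}_{j+1} = {\bf K}'_j - \alpha_j \Delta_j$. Lemma~\ref{lemma:conv_rate} already gives $\mathcal{L}({\bf K}'_j,\lambda) - \mathcal{L}({\bf K}^*,\lambda) \le \beta(\mathcal{L}({\bf K}_j,\lambda) - \mathcal{L}({\bf K}^*,\lambda))$ whenever $\alpha_j < 2/(LC_\Sigma)$. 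It therefore suffices to show that, with probability at least $1-\delta_K$,
\begin{equation*}
\mathcal{L}({\bf K}_{j+1},\lambda) - \mathcal{L}({\bf K}'_j,\lambda) \le c_1\alpha_j + c_2\alpha_j^2 .
\end{equation*}
Combined with the choice of $\alpha_j$ in (\ref{eqn:cond_alpha_j}), this gives $\mathcal{L}({\bf K}_{j+1},\lambda)-\mathcal{L}({\bf K}^*,\lambda) \le \beta(\cdot) + c_1\alpha_j+c_2\alpha_j^2 \le \beta_1(\mathcal{L}({\bf K}_j,\lambda)-\mathcal{L}({\bf K}^*,\lambda))$ for some $\beta_1\in(\beta,1)$.

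The first step is stability, so that smoothness can be used. Since ${\bf K}_j \in \mathcal{K}_\zeta$ with $\zeta = \mathcal{L}({\bf K}_0,\lambda)$, and $\mathcal{K}_\zeta$ is compact by Lemma~\ref{lemma:coercivity}, there is a radius $r_\zeta>0$ such that every ${\bf K}$ within $r_\zeta$ of $\mathcal{K}_\zeta$ stays in $\mathcal{K}$. Moreover, $L$-smoothness (Lemma~\ref{lemma:Lsmooth}) holds on a slightly larger sublevel set $\mathcal{K}_{2\zeta}$. Requiring $\alpha_j\|\Delta_j\|_F \le r_\zeta$ keeps ${\bf K}_{j+1}$ in this region. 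The descent lemma along the segment from ${\bf K}'_j$ to ${\bf K}_{j+1}$ then gives
\begin{equation*}
\mathcal{L}({\bf K}_{j+1},\lambda) - \mathcal{L}({\bf K}'_j,\lambda) \le \alpha_j\|\nabla\mathcal{L}({\bf K}'_j,\lambda)\|_F\|\Delta_j\|_F + \tfrac{L}{2}\alpha_j^2\|\Delta_j\|_F^2 .
\end{equation*}
The gradient norm is bounded on the compact set $\mathcal{K}_\zeta$ by a constant depending on $\zeta$, $\|{\bf P}_K\|$ and $\|\Sigma_K\|$. Hence $c_1$ and $c_2$ are proportional to high-probability bounds on $\|\Delta_j\|_F$ and $\|\Delta_j\|_F^2$, which explains why they depend on the state correlation matrix at ${\bf K}_j$. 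By induction, monotone decrease keeps every iterate in $\mathcal{K}_\zeta$, which preserves stability.

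The main obstacle is the high-probability bound on $\|\Delta_j\|_F$, which I would split into three parts. First, the critic error: by Corollary~\ref{cor:critic_error_bound}, with probability $1-\delta_c$ we have $|\hat Q - Q| \le \epsilon/2 + M_Q$ uniformly. Using the score $\nabla_{\bf K}\log\pi_{\bf K}({\bf u}|{\bf x}) = -\Sigma_\sigma^{-1}({\bf u}+{\bf Kx}){\bf x}^T$, the resulting gradient error is at most $(\epsilon/2+M_Q)\,\mathbb{E}\|\nabla\log\pi\|$, which is bounded through moments of the Gaussian stationary state with covariance $\Sigma_{K_j}$. Second, sampling error in $\hat G$ and $\hat{\bf F}$: the summands are polynomials in Gaussian vectors, hence sub-exponential, so a matrix Bernstein bound for $N$ samples, under the i.i.d.\ replay assumption, controls both deviations with probability $1-\delta_N$ at rate $\tilde O(1/\sqrt N)$. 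Third, inversion of $\hat{\bf F}$: here ${\bf F} = \Sigma_\sigma^{-1}\otimes\Sigma_{K_j}$, whose smallest eigenvalue is at least $\sigma_{\min}(\Sigma_{\bar w})/\|\Sigma_\sigma\|>0$. For $N \ge N_a$ large, $\|\hat{\bf F}-{\bf F}\|$ is at most half this value, and the identity $\hat{\bf F}^{-1}-{\bf F}^{-1} = \hat{\bf F}^{-1}({\bf F}-\hat{\bf F}){\bf F}^{-1}$ bounds the inversion error. A union bound gives $\delta_K = \delta_c + \delta_N$. The main difficulty is keeping all these constants uniform over $\mathcal{K}_\zeta$, since they depend on $\|\Sigma_{K_j}\|$, bounded by $C_\Sigma$, and on the Gaussian moments. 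I would absorb these constants into $c_1$ and $c_2$ rather than track them explicitly.
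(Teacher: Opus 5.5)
Your proposal is correct and follows essentially the paper's own argument in Appendix~\ref{app:proof_th_monotonicity}: compare the implemented step with the exact NPG step of Lemma~\ref{lemma:conv_rate}, bound the parameter discrepancy with probability $1-\delta_K$ (critic error from Corollary~\ref{cor:critic_error_bound}, sampling concentration, inverse perturbation, union bound), convert it to $c_1\alpha_j+c_2\alpha_j^2$ by $L$-smoothness, and absorb it into the $\beta$-contraction via (\ref{eqn:cond_alpha_j}). Your margin argument keeping ${\bf K}_{j+1}$ inside the smoothness region, which the paper omits, and your exact $\alpha_j\Delta_j$ scaling are tidier than the paper's; one correction is that the summands of $\hat G$ and $\hat{\bf F}$ are degree-four Gaussian polynomials and hence not sub-exponential, so you need a Chebyshev or truncation bound instead (enough for a fixed $\delta_K$ and large $N$), or you can estimate only $\hat\Sigma_{{\bf K}_j}$ as the paper does.
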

\begin{proof}
    The proof of Theorem~\ref{th:monotonicity} is provided in Appendix~\ref{app:proof_th_monotonicity}.
\end{proof}
\begin{remark}[Learning rate condition]
    The upper bound on $\alpha_j$ in (\ref{eqn:cond_alpha_j}) has two components with distinct roles.
    The first component, $\frac{2}{LC_\Sigma}$, is inherited directly from the true NPG analysis in
    Lemma~\ref{lemma:conv_rate}. The condition ensures that a single true NPG step decreases the Lagrangian, and
    depends only on the L-smoothness constant $L$ and the upper bound $C_\Sigma$ on $\text{tr}(\Sigma_{{\bf K}_j})$.
    The second component, $\alpha_j^*$, guards against the additional error $\epsilon_L$ introduced by
    finite-sample estimation and critic approximation.
    Specifically, $\alpha_j^*$ is the largest step size for which the approximation-error term
    $\epsilon_L(\alpha_j) = c_1\alpha_j + c_2\alpha_j^2$ (see Appendix~\ref{app:proof_th_monotonicity})
    remains strictly smaller than $(1-\beta)$ times the current sub-optimality gap
    $\Delta_j \triangleq \mathcal{L}({\bf K}_j,\lambda) - \mathcal{L}({\bf K}^*,\lambda)$.
    For small $\alpha_j$, $\epsilon_L(\alpha_j) \approx c_1\alpha_j$, so $\alpha_j^*$ is approximately
    $\frac{(1-\beta)\Delta_j}{c_1}$.

    \emph{Practical satisfaction of the condition:}
    Because $\Delta_j$ is not available in closed form, the condition $\alpha_j < \alpha_j^*$ cannot
    be verified directly.  However, since $\epsilon_L(\alpha_j) \to 0$ as $\alpha_j \to 0$ while
    $\Delta_j > 0$ at any non-optimal iterate, a {sufficiently small constant step size} always
    satisfies (\ref{eqn:cond_alpha_j}).
    Concretely, any $\alpha$ with
    \[
        0 < \alpha \;<\; \min\!\left(\frac{2}{LC_\Sigma},\; \frac{(1-\beta)\Delta_{\min}}{c_1}\right),
    \]
    where $\Delta_{\min}$ is a lower bound on $\Delta_j$ over all iterates of interest, satisfies
    both constraints simultaneously.
    A lower bound $\Delta_{\min}$ can be obtained, for example, by running the algorithm for a warm-up
    phase and recording the smallest observed Lagrangian decrease.
    Alternatively, a \emph{decaying schedule} $\alpha_j = \alpha_0 / \sqrt{j+1}$ eventually satisfies
    the condition because $\alpha_j \to 0$ while $\Delta_j$ stays bounded away from zero until
    the algorithm is close to convergence.
    If the system matrices are known, the bounds on $L$ and $C_\Sigma$ that enter the first constraint can be estimated offline from the system matrices and a bound on the steady-state state covariance.
\end{remark}
\begin{remark}
    Theorem~\ref{th:monotonicity} establishes the linear convergence of the actor update in the NPG-AC algorithm under the bounded error in the critic's value function approximation. It states that after a sufficiently large number of iterations $N_a$, the Lagrangian function $\mathcal{L}({\bf K}_j, \lambda)$ will decrease monotonically with a rate determined by $\beta_1$ with high probability. The constant $\beta_1$ depends on the problem parameters and the learning rate $\alpha$. This result is crucial for ensuring that the actor update converges to a stable policy that satisfies the chance constraints.
\end{remark}
\subsubsection{Finding the optimal value of the Lagrange multiplier $\lambda$}
We follow a primal-dual approach to find an optimal value of the Lagrange multiplier $\lambda$, see Algorithm~\ref{algo:primal_dual}. The dual problem is defined as follows \vspace{-3.0mm}
\begin{align}
    \max_{\lambda \ge 0} D(\lambda) =  \max_{\lambda \ge 0} \min_{\bf K \in \mathcal K} \mathcal{L}(\bf K, \lambda) \label{eqn:dual_prob}.
\end{align}
\begin{algorithm}[h!]
    \caption{Primal-Dual Algorithm}
    \label{algo:primal_dual}
    \begin{algorithmic}[1]
        \State Initialize $\lambda_0$ and $\alpha_{\lambda,0}$.
        \For{$i=0,1,2,\ldots$}
            \State Solve the primal problem $\bf K_{i} = \operatorname*{arg\,min}_{\bf K \in \mathcal K} \mathcal{L}(\bf K, \lambda)$ using NPG, see Algorithm~\ref{alg:ac_method}.
            \State Evaluate $\nabla_{\lambda} \mathcal{L}(\bf K_i, \lambda) = J_c({\bf K}_i) - \delta$.
            \State Update $\lambda_{i+1} = \max\left(0, \lambda_i + \alpha_{\lambda,i} \nabla_{\lambda} \mathcal{L}(\bf K_i, \lambda) \right)$. 
        \EndFor
    \end{algorithmic}
\end{algorithm} 
In Algorithm~\ref{algo:primal_dual}, $\alpha_{\lambda,i} >0$, $\alpha_{\lambda,i} = \mathcal{O}(i^{-1/2})$, is the learning rate for the Lagrange multiplier $\lambda$. To prove that the pair ($\bf K^*, \lambda^*$) is also the optimal solution to the primal constrained problem (\ref{eqn:opt_prob}), we need Assumption~\ref{assump:slater} and Lemma~\ref{lemma:duality}.
\newpage
\begin{assumption}[Slater's condition] There exists a $\bf \bar K \in \mathcal K$ such that $J_c(\bf \bar K) < \delta$.
    \label{assump:slater}
\end{assumption}
\begin{lemma}[Strong duality] Under
    Assumption~\ref{assump:slater}, the optimal value of the primal problem (\ref{eqn:opt_prob}) is equal to the optimal value of the dual problem (\ref{eqn:dual_prob}), \ie, $J^* = D^*$.
    \label{lemma:duality}
\end{lemma}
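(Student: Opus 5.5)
Strong duality here will not come from convexity in ${\bf K}$: neither $J({\bf K})$ nor $J_c({\bf K})$ is convex on $\mathcal K$. I would instead use the standard change of variables for average-cost LQ problems, moving to the state--input covariance (occupation measure) domain, where both the objective and the constraint become convex. Then I would invoke the convex-analysis strong duality theorem under Assumption~\ref{assump:slater}.

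\textbf{Step 1: weak duality.} For any feasible ${\bf K}$ (i.e., $J_c({\bf K})\le\delta$) and any $\lambda\ge 0$,
\[
D(\lambda)\le \mathcal L({\bf K},\lambda)\le J({\bf K}).
\]
Hence $D^*\le J^*$. This step is routine.

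\textbf{Step 2: the convex reformulation.}
\begin{itemize}
\item Write the closed-loop stationary law as a Gaussian on $({\bf x},{\bf u})$ with joint covariance
\[
W=\begin{bmatrix}\Sigma_K & -\Sigma_K{\bf K}^T\\ -{\bf K}\Sigma_K & {\bf K}\Sigma_K{\bf K}^T+\Sigma_\sigma\end{bmatrix}.
\]
\item The Lyapunov equation \eqref{eqn:Sigma_K} becomes a linear equality constraint on $W$, and \eqref{eqn:JK1} becomes the linear function $\text{tr}(\text{diag}({\bf Q},{\bf R})W)$.
\item Since the risk function is linear, ${\bf q}^T{\bf x}_{k+1}$ is a scalar Gaussian. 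Its mean is zero in stationarity and its variance is ${\bf q}^T\Sigma_{K}{\bf q}$, which equals ${\bf q}^T([{\bf A}\ {\bf B}]W[{\bf A}\ {\bf B}]^T+\Sigma_w){\bf q}$ and is affine in $W$.
\item The averaged violation probability $J_c$ is therefore a function of that variance $v$: $J_c=Q(\epsilon/\sqrt{v})$.
\item For $v>0$, the map $v\mapsto Q(\epsilon/\sqrt v)$ is increasing. So the constraint $J_c\le\delta$ (with $\delta<1/2$) is equivalent to the convex (indeed linear) constraint $v\le(\epsilon/Q^{-1}(\delta))^2$.
\end{itemize}
One caveat. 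Formula \eqref{eqn:JcK} conditions on ${\bf x}_k$ and then averages $Q(a)$ over the stationary law of ${\bf x}_k$. I would check that this matches the unconditional stationary probability $\mathbb P({\bf q}^T{\bf x}_{k+1}\ge\epsilon)$. It does, by the tower property, because ${\bf x}_{k+1}$ is Gaussian.

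Even without that monotone rewrite, one can argue directly. The image of $\mathcal K$ under ${\bf K}\mapsto (J({\bf K}),J_c({\bf K}))$ has the property that its upward closure, the set
\[
\mathcal S=\{(t,s): \exists {\bf K}\in\mathcal K,\ J({\bf K})\le t,\ J_c({\bf K})\le s\},
\]
is convex. This holds because both quantities are monotone or convex in the pair $(W,v)$ over the convex feasible set of $W$, and the map $W\mapsto{\bf K}$ is recovered as ${\bf K}=-W_{ux}W_{xx}^{-1}$.

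\textbf{Step 3: separation.} With $\mathcal S$ convex, I apply the standard separating-hyperplane argument. The point $(J^*-\eta,\delta)$ lies outside $\mathcal S$ for every $\eta>0$. Separating it from $\mathcal S$ gives a multiplier $(\mu_0,\lambda)\ge 0$. Slater's point $\bar{\bf K}$ (Assumption~\ref{assump:slater}) rules out $\mu_0=0$, so we may normalize $\mu_0=1$. This yields $\min_{{\bf K}}\mathcal L({\bf K},\lambda)\ge J^*-\eta$, hence $D^*\ge J^*$. Combined with Step 1, $D^*=J^*$.

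\textbf{Main obstacle.} The hardest part is Step 2: proving that $\mathcal S$ is convex. Two points need care.
\begin{itemize}
\item The covariance parametrization covers exactly the stabilizing Gaussian linear policies, and the extra constraint $W_{uu}-W_{ux}W_{xx}^{-1}W_{xu}=\Sigma_\sigma$ is not convex. I would first try relaxing it to $\succeq\Sigma_\sigma$. I would then argue that extra input noise only increases the cost in the regime where it matters for $J^*$, so the relaxation is tight at the optimum.
\item For $\delta\ge 1/2$ the monotone rewrite fails. In that case I would fall back on checking that Slater's condition makes the constraint inactive or trivially convex.
\end{itemize}
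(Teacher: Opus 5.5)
The parts of your proposal that work are these:
\begin{itemize}
\item The weak-duality step is correct.
\item The covariance lifting is correct.
\item The identity $J_c({\bf K})=Q(\epsilon/\sqrt{v})$ with $v={\bf q}^T\Sigma_K{\bf q}$ is correct, by the tower property.
\item The relaxation of the Schur-complement constraint to $\succeq\Sigma_\sigma$ is harmless. For fixed ${\bf K}$, extra input noise increases $\Sigma_K$ in the Loewner order, hence increases both $J$ and $v$, so the upward closure is unchanged.
\item For $\delta\ge 1/2$ the constraint never binds, since $J_c<1/2$ whenever $\epsilon>0$.
\end{itemize}

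\textbf{The gap: convexity of $\mathcal S$.} Step 3 rests entirely on this, and it does not follow from your argument. Suppose $W_1,W_2$ realize $(t_1,s_1)$ and $(t_2,s_2)$. The mixture $W_\theta$ gives exactly the convex combination of the $J$ values. Its violation probability, however, is $g(\theta v_1+(1-\theta)v_2)$ with $g(v)=Q(\epsilon/\sqrt{v})$. Bounding this by $\theta s_1+(1-\theta)s_2$ needs $g$ to be \emph{convex}. Monotonicity of $g$ only makes the sublevel set $\{J_c\le\delta\}$ convex, not the image.

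In fact $g$ is sigmoidal. With $y=\epsilon/\sqrt{v}$ and $\varphi$ the standard normal density, one finds $g''(v)=-y^5\varphi(y)(3-y^2)/(4\epsilon^4)$. So $g$ is convex only where $J_c<Q(\sqrt{3})\approx 0.042$, and concave beyond.

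Equivalently, $\mathcal S$ is, up to its boundary, the epigraph of $p(s)=\inf\{J({\bf K}):J_c({\bf K})\le s\}=\phi(g^{-1}(s))$. Here $\phi(b)=\inf\{J: v\le b\}$ is the convex nonincreasing value function of your SDP. Since $g^{-1}$ is convex on $(Q(\sqrt{3}),1/2)$, the term $\phi'\,(g^{-1})''$ in $p''$ is nonpositive, and convexity of $p$ is not automatic.

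\textbf{The monotone rewrite does not rescue this.} The SDP with constraint $v\le\bar v:=(\epsilon/Q^{-1}(\delta))^2$ has zero duality gap. But its multiplier $\nu$ prices $v$, whereas $D(\lambda)$ prices $J_c$ linearly. Strong duality is not invariant under a nonlinear monotone reparametrization of the constraint.

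To transfer it, you would need some $\lambda\ge 0$ with $\phi(v)+\lambda(g(v)-\delta)\ge\phi(\bar v)$ for all attainable $v$. The natural choice is $\lambda=\nu/g'(\bar v)$.
\begin{itemize}
\item For $v\le\bar v$ this works when $\delta\le Q(\sqrt{3})$, because $g$ then lies above its tangent at $\bar v$.
\item For $v>\bar v$ it can fail. Once $g$ turns concave and saturates at $1/2$, the gain $g(v)-\delta$ grows sublinearly while $\phi$ may keep decreasing.
\end{itemize}
Closing the second case requires a quantitative comparison of the curvature of $\phi$ with that of $g$. That supporting-line argument for $p$ at $s=\delta$ is precisely the missing content.

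\textbf{Comparison with the paper.} The paper does not convexify at all. Following Zhao et al.\ (Theorem 2), it first shows via Slater that $\lambda^*=\inf\{\lambda\ge0: J_c({\bf K}^*(\lambda))\le\delta\}$ is finite. It then invokes continuity of $\lambda\mapsto J_c({\bf K}^*(\lambda))$ to obtain complementary slackness. A nonconvex $p$ would show up there as a jump of the Lagrangian minimizer. So your lifting correctly locates where the real difficulty lies, but as written your Step 3 has no convex set to separate.
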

\begin{proof}
    The proof of Lemma~\ref{lemma:duality} is provided in Appendix~\ref{apdx:duality}.
\end{proof} {\color{black}
\begin{remark}[\textbf{Convergence of Algorithm~\ref{algo:primal_dual}}] Based on Lemma~\ref{lemma:coercivity} and Theorem~\ref{th:monotonicity}, we can say that the controller ${\bf K}_i$ from Algorithm~\ref{algo:primal_dual} will always be a stabilizing controller. This implies that both $\nabla_{\lambda} \mathcal{L}(\bf K_i, \lambda)$ and $\lambda_i$ will be bounded by some positive constants. Furthermore, according to Theorem 4 in \cite{zhaoGlobalConvergencePolicy2023}, we can conclude that Algorithm~\ref{algo:primal_dual} will converge to an optimal policy at a sublinear rate, given that the step size $\alpha_{\lambda,i} = \mathcal{O}(i^{-1/2})$.
\end{remark}}

\section{NUMERICAL RESULTS} \label{sec:results}
\begin{figure*}[h!]
    \centering
    \subfloat[Policy gradient norm $||\hat G||$.\label{fig:policy_grad}]{\includegraphics[width=0.3\textwidth]{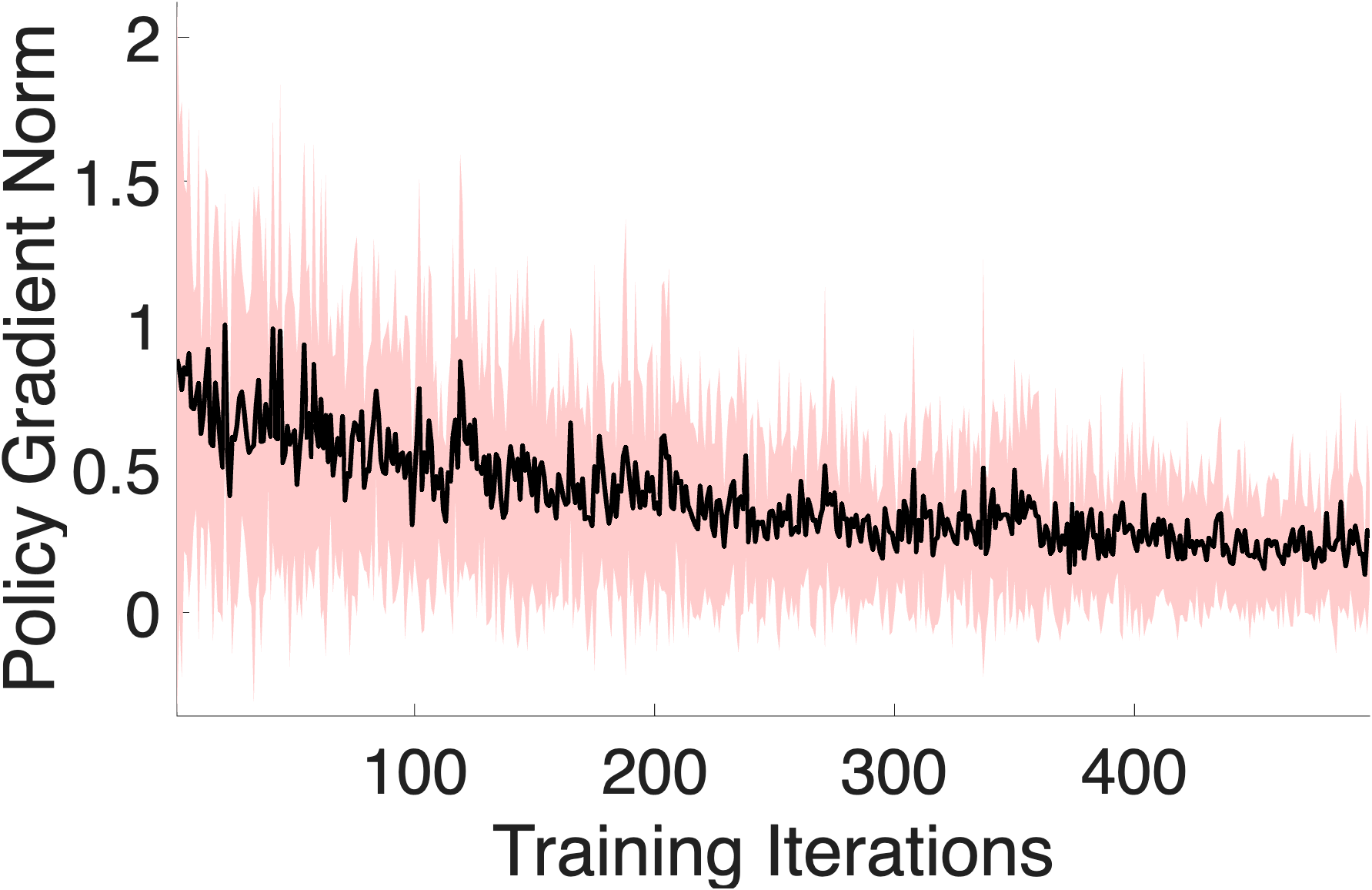}}\hfill%
    \subfloat[TD(0) critic loss.\label{fig:critic_loss}]{\includegraphics[width=0.3\textwidth]{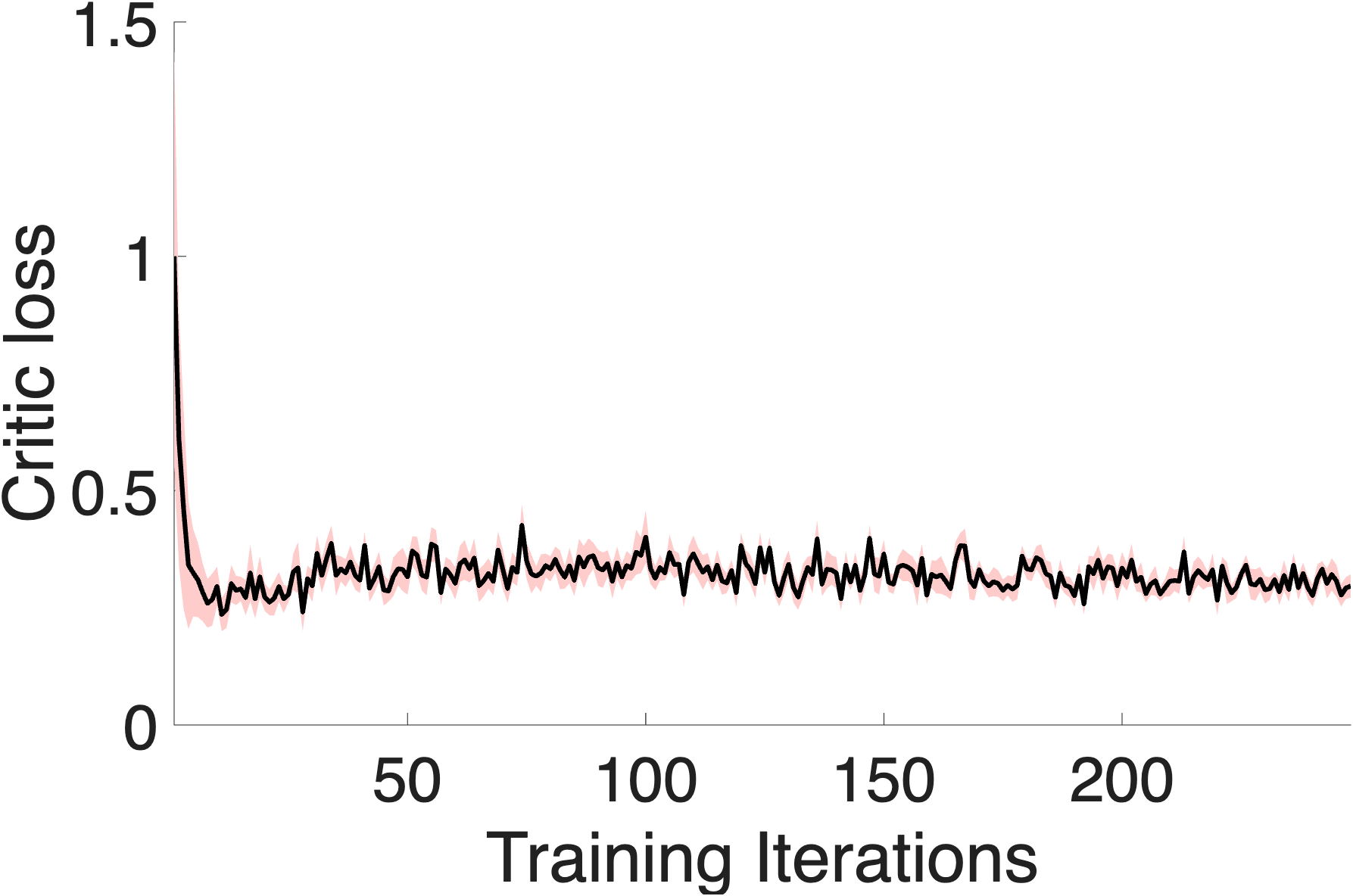}}\hfill%
    \subfloat[Average cost $\mathcal{L}$.\label{fig:avg_return}]{\includegraphics[width=0.3\textwidth]{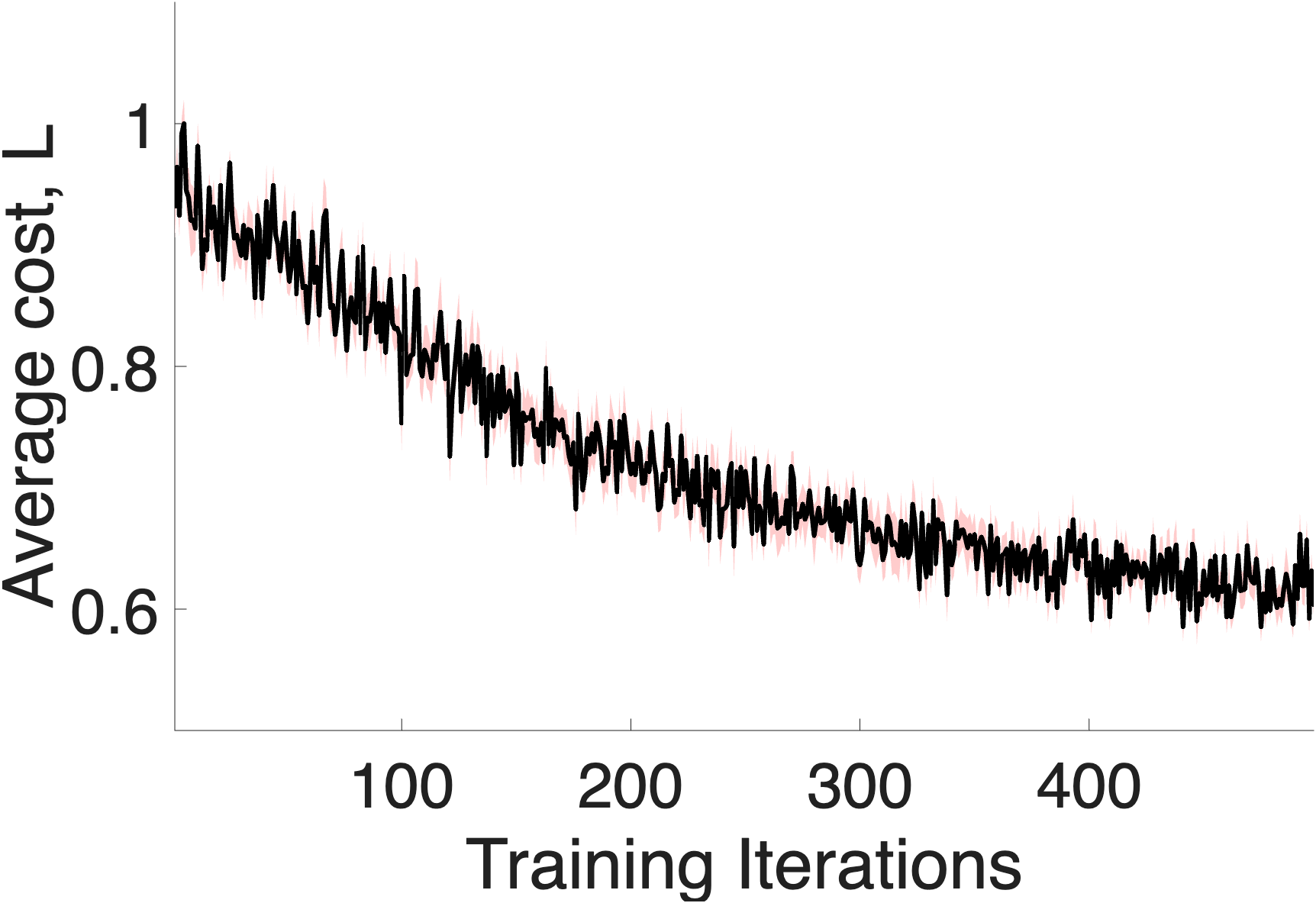}}%
    \caption{Training convergence vs.\ number of iterations (mean $\pm$ 90\% CI, 10 runs, normalized).}
    \label{fig:training_convergence}
\end{figure*}
In this section, we compare the performance of the proposed model-free NPG-based AC algorithm with model-based chance constrained LQR (CLQR), and scenario-based MPC through numerical simulations. As a case study, we consider an unmanned aerial vehicle (UAV) model \cite{zhaoGlobalConvergencePolicy2023a}, which is a fourth-order LTI system. The UAV model parameters and the parameter values used for the simulation study are provided in Appendix~\ref{apdx:params}. 

We use the quadratic feature space for the linear critic, \ie, $z(\mathbf{x}, \mathbf{u}) = \frac{1}{Z_c}
\begin{bmatrix}
\operatorname{svec}(\mathbf{x}\mathbf{x}^\top)^{\!\top} &
\operatorname{svec}(\mathbf{u}\mathbf{u}^\top)^{\!\top} &
\operatorname{vec}(\mathbf{u}\mathbf{x}^\top)^{\!\top} &
1
\end{bmatrix}^T$. Here, $\operatorname{svec}(\cdot)$ denotes the vectorization of the upper-triangular entries of a symmetric matrix. Additionally, $Z_c$ is a normalization constant which depends on the maximum possible values of $\mathbf{x}$ and $\mathbf{u}$ to ensure $\|z(\mathbf{x}, \mathbf{u})\|_2 \leq 1/2$. One can use other feature spaces, such as radial basis functions (RBF) or neural networks.   

The CLQR and the scenario-based MPC used in our comparative study are taken from \cite{schildbachLinearControllerDesign2015} and \cite{schildbachScenarioApproachStochastic2014}, respectively, and tailored to fit the current problem setting. For completeness, the corresponding procedures are presented as Algorithm~\ref{algo:SDP} and Algorithm~\ref{algo:MPC} in Appendix~\ref{apdx:algorithms}. 

Fig.~\ref{fig:training_convergence} shows the policy gradient norm $||\hat G||$, TD(0) critic loss, and average cost $\mathcal{L}$ during training. Each plot shows the mean and 90\% confidence interval over 10 independent runs, normalized by the maximum value. We can conclude from Fig.~\ref{fig:training_convergence} that the proposed algorithm learns stably and reliably, the critic converges smoothly, policy updates gradually diminish, and closed-loop performance improves consistently. These observations are consistent with the theoretical convergence and stability results.

In Fig.~\ref{fig:JvsJc}, we plot the control cost $J$ and the constraint violation probability $J_c$ for different values of the Lagrange multiplier $\lambda$. For CLQR, the $J_c$ values from NPG are used as thresholds, \ie, $\delta = J_c/100$. We observe that the proposed NPG-based AC method outperforms the MPC. However, it is crucial to note that MPC's performance is heavily dependent on the chosen parameters \( S = 20 \) and \( T = 5 \) in Algorithm~\ref{algo:MPC}. While increasing these parameters can enhance MPC's performance, it comes with the trade-off of increased computational complexity, which is of the order of $ST^2$ \cite{skaf2009nonlinear}. In addition, MPC necessitates solving an optimization problem at every time step, in contrast to the proposed NPG-based AC method, which only requires evaluating the feed-forward actor and critic networks. This distinction renders the proposed method significantly less computationally complex than MPC. It is also important to acknowledge that MPC is a model-based method, which further differentiates it from the PG-based techniques. Additionally, the proposed NPG-based method performed very similarly to the CLQR method. Note that CLQR is also a model-based approach. 

Furthermore, in Fig.~\ref{fig:comp_pri_dual}, we compare the primal-dual algorithm, Algorithm~\ref{algo:primal_dual}, for NPG-based AC with the CLQR, Algorithm 3, for the same threshold value $\delta$. We observe that CLQR achieves a slightly lower control cost, which is expected due to its model-based nature.

In summary, the proposed NPG-based AC method achieves effective risk regulation with near-optimal performance, without requiring full model knowledge or online optimisation. 
\begin{figure}[h!]
    \centering
    \includegraphics[width=0.3\textwidth]{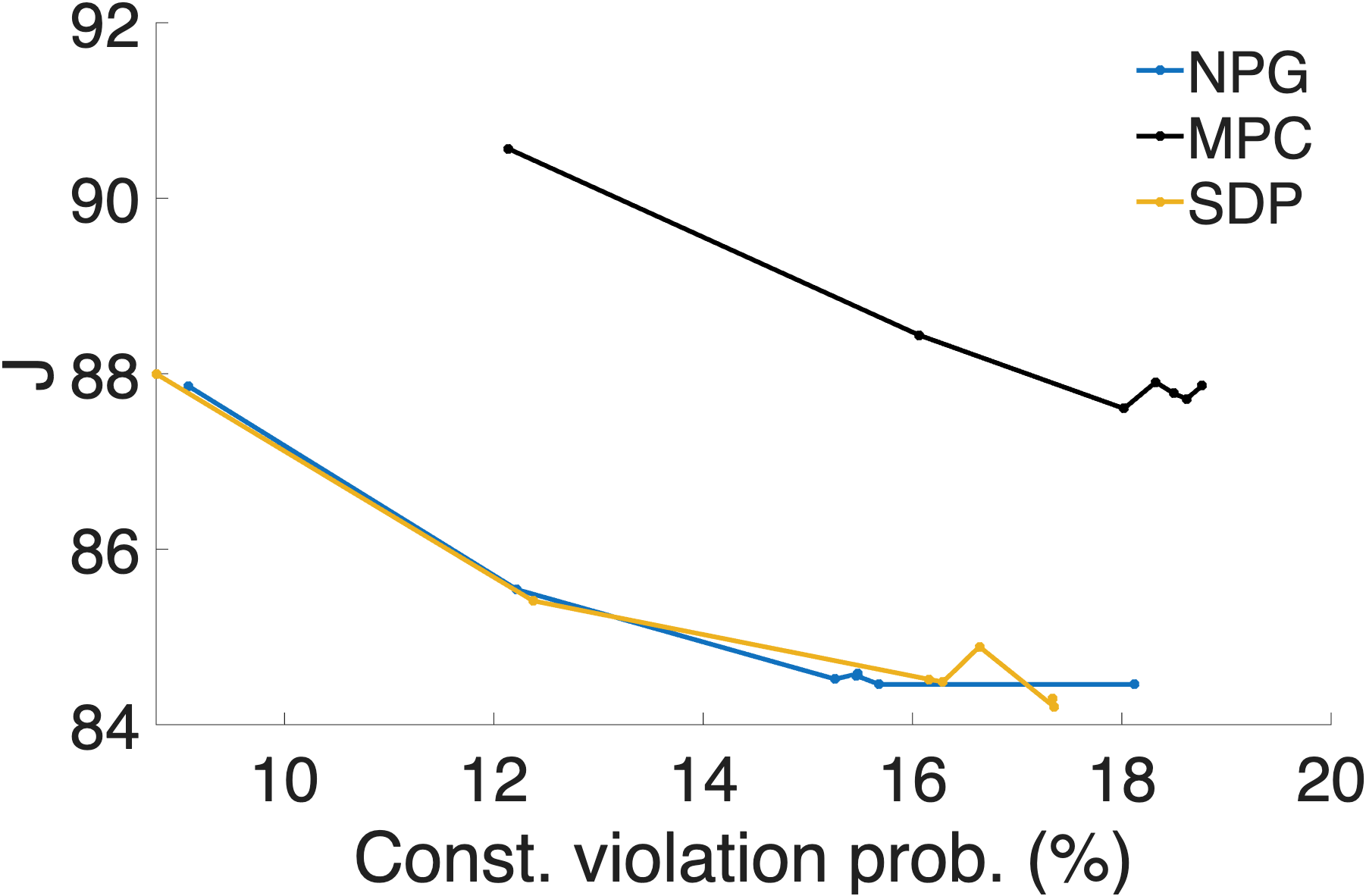} 
    \caption{Control cost $J$ vs. constraint violation probability $J_c$. $\lambda = [1,5,10,15,20,50,100]$}
    \label{fig:JvsJc}
\end{figure}
\begin{figure}[h!]
    \centering
    \includegraphics[width=0.3\textwidth]{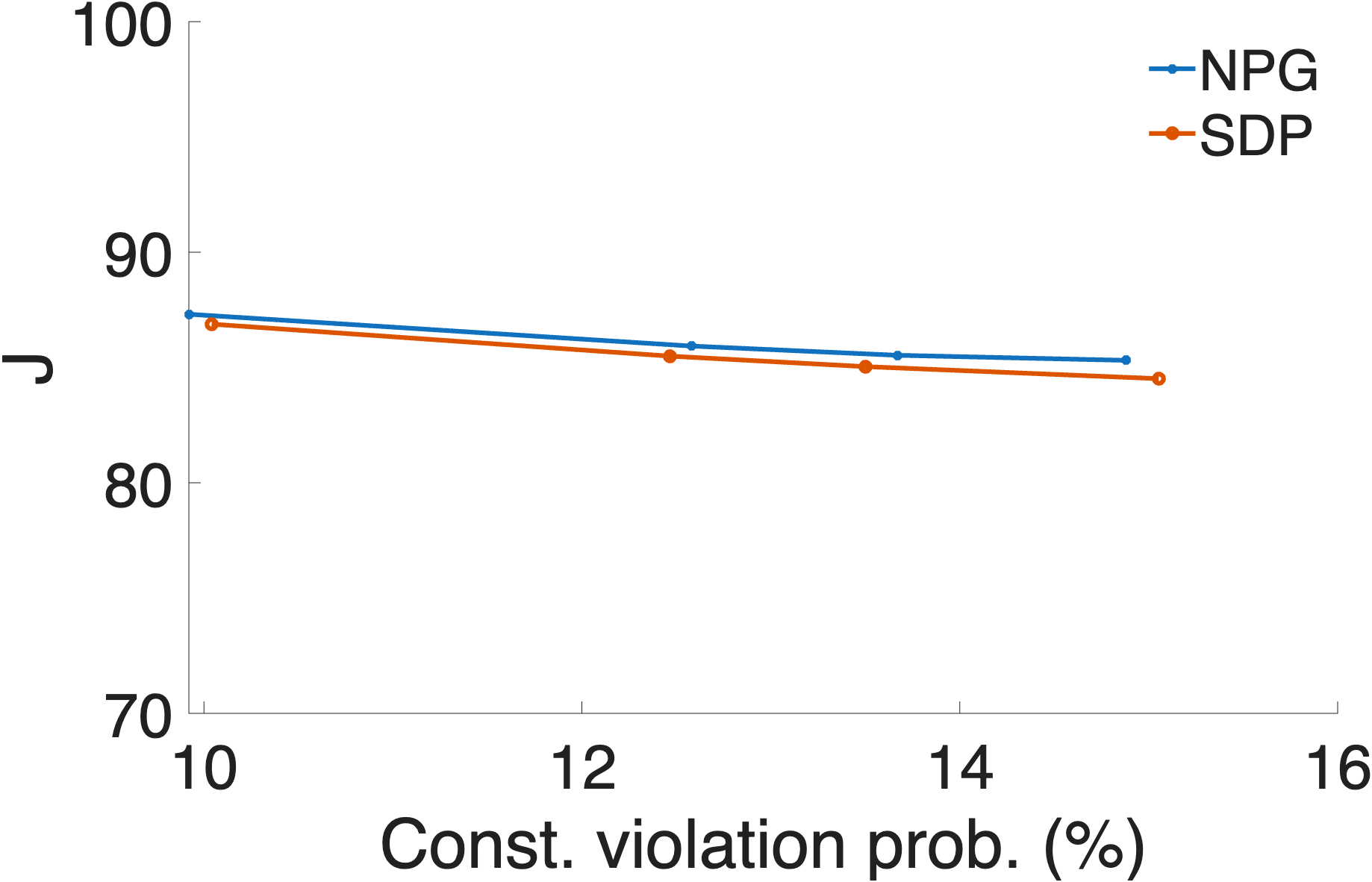} 
    \caption{Primal-dual NPG-based AC vs. CLQR for the same threshold $\delta$. $\delta = [15.0,13.5, 12.5, 10.0]$}
    \label{fig:comp_pri_dual}
\end{figure}

\section{CONCLUSION} \label{sec:conclusion}
This paper presents a reinforcement learning framework based on policy gradients for LQG control under chance constraints. We reformulate the constrained optimization problem using Lagrangian relaxation and study a model-free solution through an actor–critic architecture. Within this model-free framework, we demonstrate that the approximate critic converges with high probability to a neighbourhood of the true critic, provided that appropriate regularity conditions are met and a suitable learning rate is chosen. Furthermore, to analyze the convergence of the actor, we study how errors propagate from the approximate critic and establish the coercivity, L-smoothness, and gradient dominance properties of the Lagrangian function under investigation. Based on these findings, we prove that the proposed AC method converges with high probability to a locally optimal stabilizing policy after a sufficiently large number of iterations and with a small learning rate. Additionally, numerical results demonstrated that our method effectively satisfies the probabilistic constraints while achieving performance comparable to model-based CLQR and scenario-based MPC controllers, all without requiring knowledge of system matrices. \vspace{-2mm}

\appendices
\section{Proof of Lemma~\ref{lemma:pd_Az}} \label{app:proof_lemma_pd_Az}
Since, $c_{\alpha}>0$, it is sufficient to show that the matrix $\bar{A}_z = \mathbb{E}[{\bf z}({\bf z}-{\bf z}')^T]$ is positive definite.
Let ${\bf v} \in \mathbb{R}^d$, ${\bf v} \neq {\bf 0}$. Consider the quadratic
form
\begin{equation}
{\bf v}^T \bar A_z {\bf v}
= \mathbb{E}\!\left[ ({\bf v}^T{\bf z}) ({\bf v}^T({\bf z}-{\bf z}')) \right].
\label{eq:quad_form_Az}
\end{equation}
Then, we define the scalar random variables $\psi$ and $\psi'$ as
$ \psi \triangleq {\bf v}^T{\bf z}$, and  $\psi' \triangleq {\bf v}^T{\bf z}'$, respectively.
Using these definitions, we can rewrite \eqref{eq:quad_form_Az} as
\begin{equation}
{\bf v}^T \bar A_z {\bf v}
= \mathbb{E}[\psi^2] - \mathbb{E}[\psi\psi'].
\label{eq:quad_form_Az_psi}
\end{equation}
Under the ergodicity assumption, we can write $\mathbb{E}[\psi^2] = \mathbb{E}[(\psi')^2]$, and
$\mathbb{E}[\psi\psi'] = \mathbb{E}[\psi'\psi]$. Then, we can say,
\begin{align}
\mathbb{E}\!\left[(\psi-\psi')^2\right]
= 2\big(\mathbb{E}[\psi^2] - \mathbb{E}[\psi\psi']\big).
\label{eq:exp_diff_psi}
\end{align}
Using, \eqref{eq:exp_diff_psi} in \eqref{eq:quad_form_Az_psi}, we get
\begin{equation}
\begin{aligned}
{\bf v}^T \bar A_z {\bf v}
= \frac{1}{2}\,\mathbb{E}\!\left[(\psi-\psi')^2\right]
= \frac{1}{2}\,\mathbb{E}\!\left[\big({\bf v}^T({\bf z}-{\bf z}')\big)^2\right]
\;\ge\; 0.
\end{aligned}
\label{eq:quad_form_Az_final}
\end{equation}
Moreover, ${\bf v}^T \bar A_z {\bf v} = 0$ if and only if
${\bf v}^T({\bf z}-{\bf z}') = 0$. Since the process noise is assumed to be
Gaussian and the policy is stochastic, ${\bf z} \neq {\bf z}'$ almost surely. Consequently, ${\bf v}^T \bar A_z {\bf v} > 0 \quad \forall\,{\bf v}\neq {\bf 0}$, almost surely. Therefore, $\bar A_z$ and $\bar A$ are positive definite almost surely. This completes the proof. \vspace{-2mm}

\section{Proof of Theorem~\ref{thm:critic_convergence}} \label{app:proof_thm_critic_convergence}
    The proof follows similar steps as in \cite{dalal2018finite}. Here, we only present an updated lemma and its proofs related to the martingale noise $M_{l+1}$, which is different from those in \cite{dalal2018finite}. The rest of the proof follows the same steps as in \cite{dalal2018finite} and is omitted here for brevity.
    \begin{lemma} \label{lemma:martingale_bound}
    For all $l \ge 0$, the martingale difference noise $M_{l+1}$ defined in (\ref{eqn:martingale_noise}) satisfies
    \begin{align}
        \mid \mid M_{l+1} \mid \mid &\le K_1 (1 + \mid \mid \Phi_l - \Phi^* \mid \mid) \text{, where} \nonumber \\
        K_1 &= \max\left\{c_{\alpha} + \frac{1}{2} + \sqrt{c^2_\alpha + \frac{5}{16}} \mid\mid \bar{A}^{-1} \mid\mid \mid \mid \bar{b} \mid\mid, \right. \nonumber \\
        &\quad \left. \sqrt{c^2_\alpha + \frac{5}{16}} + \mid\mid \bar{A} \mid\mid  \right\}.\label{eqn:martingale_bound}
    \end{align}
    \end{lemma}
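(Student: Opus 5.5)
The plan is to use the fixed-point identity $\bar A\Phi^* = \bar b$ to rewrite $M_{l+1}$ in terms of the error $\Phi_l-\Phi^*$, and then bound every random quantity deterministically with Assumption~\ref{assumption:bounded_noise}.

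\emph{Step 1 (recentering).} Since $h(\Phi)=\bar b-\bar A\Phi$ and $\Phi^*=\bar A^{-1}\bar b$ (invertible by Lemma~\ref{lemma:pd_Az}), we have $h(\Phi_l)=-\bar A(\Phi_l-\Phi^*)$. Substituting $\Phi_l=(\Phi_l-\Phi^*)+\Phi^*$ into (\ref{eqn:martingale_noise}) gives
\begin{equation*}
M_{l+1} = \big(A({\bf X})+\bar A\big)(\Phi_l-\Phi^*) + A({\bf X})\bar A^{-1}\bar b + b({\bf X}).
\end{equation*}
The triangle inequality and submultiplicativity then yield
\begin{equation*}
\|M_{l+1}\| \le \big(\|A({\bf X})\|+\|\bar A\|\big)\|\Phi_l-\Phi^*\| + \|A({\bf X})\|\,\|\bar A^{-1}\|\,\|\bar b\| + \|b({\bf X})\|.
\end{equation*}

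\emph{Step 2 (bounding $A({\bf X})$).} I would bound the spectral norm by the Frobenius norm, summed block by block from (\ref{eqn:A_b_matrices}).
\begin{itemize}
\item The scalar block contributes $c_\alpha^2$.
\item The block $-{\bf z}$ contributes at most $1/4$, since $\|{\bf z}\|\le\frac12$.
\item The rank-one block ${\bf z}({\bf z}'-{\bf z})^T$ has Frobenius norm $\|{\bf z}\|\,\|{\bf z}'-{\bf z}\|$.
\end{itemize}
The target constant $\sqrt{c_\alpha^2+5/16}$ requires the rank-one block to contribute at most $1/16$, i.e. $\|{\bf z}\|\,\|{\bf z}'-{\bf z}\|\le\frac14$. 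The triangle inequality only gives $\|{\bf z}'-{\bf z}\|\le1$, hence a contribution of $1/4$ and the constant $\sqrt{c_\alpha^2+1/2}$. I would first try to obtain the sharper factor from the normalisation of the features, for example by requiring $\|{\bf z}'-{\bf z}\|\le\frac12$. If that cannot be justified, I would accept the slightly larger constant; it only changes $K_1$ and not the structure of the bound.

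\emph{Step 3 (bounding $b({\bf X})$).} Using $|r|\le1$ and $\|{\bf z}\|\le\frac12$, the two blocks of $b({\bf X})$ give $\|b({\bf X})\|\le c_\alpha+\frac12$.

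\emph{Step 4 (assembling $K_1$).} Combining Steps 1--3:
\begin{itemize}
\item the constant term is at most $c_\alpha+\frac12+\sqrt{c_\alpha^2+5/16}\,\|\bar A^{-1}\|\,\|\bar b\|$;
\item the coefficient of $\|\Phi_l-\Phi^*\|$ is at most $\sqrt{c_\alpha^2+5/16}+\|\bar A\|$.
\end{itemize}
Taking $K_1$ as the maximum of these two quantities gives $\|M_{l+1}\|\le K_1(1+\|\Phi_l-\Phi^*\|)$.

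\emph{Expected obstacle.} Beyond the constant in Step 2, there is a structural issue. The algebra in Step 1 treats $h(\Phi)=\mathbb{E}[A({\bf X})]\Phi+\mathbb{E}[b({\bf X})]$ as exactly $\bar b-\bar A\Phi$. However, $\mathbb{E}[A({\bf X})]$ contains the off-diagonal block $-\mathbb{E}[{\bf z}]$, which the block-diagonal $\bar A$ in (\ref{eqn:Abar_1}) omits. I would handle this either by noting that the argument only needs $\bar A$ to be the mean matrix, with that block included, or by absorbing the extra bounded term $\|\mathbb{E}[{\bf z}]\|\le\frac12$ into $K_1$. In either case the martingale-difference property $\mathbb{E}[M_{l+1}\mid\Phi_l]=0$ must hold for the matrix actually used; it follows from the i.i.d.\ sampling assumption once $h$ is defined through the true mean.
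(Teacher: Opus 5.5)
Your argument follows the paper's proof step for step. It uses the same recentering $M_{l+1}=A({\bf X})\Phi^*+b({\bf X})+(A({\bf X})+\bar A)(\Phi_l-\Phi^*)$, the same triangle-inequality split, the same bound $\|b({\bf X})\|\le c_\alpha+\frac12$, and the same blockwise bound on $\|A({\bf X})\|$, assembled into the same maximum.

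At the point where you hesitated, the paper simply asserts $\|A({\bf X})\|\le\sqrt{c_\alpha^2+5/16}$. Your hesitation is justified: under Assumption~\ref{assumption:bounded_noise} this inequality is false, not merely unproved. Take $\|{\bf z}\|=\frac12$ and ${\bf z}'=-{\bf z}$, so that ${\bf z}({\bf z}'-{\bf z})^T=-2{\bf z}{\bf z}^T$, and take the unit vector $v=\frac{1}{\sqrt2}\big(1,\,({\bf z}/\|{\bf z}\|)^T\big)^T$. Then $\|A({\bf X})v\|^2=\frac{c_\alpha^2}{2}+\frac12$, which exceeds $c_\alpha^2+\frac{5}{16}$ whenever $c_\alpha^2<\frac38$. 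So do not try to extract $\frac{5}{16}$ from the feature normalisation; a condition such as $\|{\bf z}'-{\bf z}\|\le\frac12$ would be an extra hypothesis. What the argument actually proves is the lemma with $\frac{5}{16}$ replaced by $\frac12$ in both entries of $K_1$. Since $K_1$ only feeds the hidden constants of Theorem~\ref{thm:critic_convergence}, this is harmless.

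Your structural remark is likewise an error in the paper, not in your proof. The matrix $-\mathbb{E}[A({\bf X})]$ has $\mathbb{E}[{\bf z}]$ in its $(2,1)$ block. With the block-diagonal $\bar A$ of (\ref{eqn:Abar_1}) one therefore gets $\mathbb{E}[M_{l+1}\mid\Phi_l]=\big(0,\,-\hat{\mathcal R}_l\,\mathbb{E}[{\bf z}]^T\big)^T$, which is nonzero in general; the paper's own features include the constant coordinate $1/Z_c$.

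The norm inequality itself is pure algebra and holds for any invertible $\bar A$ with $\Phi^*=\bar A^{-1}\bar b$. However, only your first remedy is sound. Under the second, the iterates are driven to the zero of the true mean field, whose $\phi$-component is $\bar A_z^{-1}\big(\mathbb{E}[r{\bf z}]-\mathbb{E}[r]\,\mathbb{E}[{\bf z}]\big)$ rather than the paper's $\bar A_z^{-1}\mathbb{E}[r{\bf z}]$. No enlargement of $K_1$ fixes that.

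With $\bar A=-\mathbb{E}[A({\bf X})]$, which is block lower-triangular, invertibility follows from $c_\alpha>0$ and Lemma~\ref{lemma:pd_Az}. Its eigenvalues are $c_\alpha$ together with those of $\bar A_z$, so they have positive real parts, which is the property Theorem~\ref{thm:critic_convergence} uses. This $\bar A$ need not be positive definite as a quadratic form, so cite the eigenvalue property rather than positive definiteness.
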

    \begin{proof}
    Using the definition of $M_{l+1}$ in (\ref{eqn:martingale_noise}) and using $\bar b = \bar{A}\Phi^*$, we can write
    \begin{align}
        M_{l+1} &= A({\bf X}) \Phi_l + b({\bf X}) - \left( \bar{b} - \bar{A} \Phi_l \right) \label{eqn:apdx_ml} \\
        &= A({\bf X})\Phi^* + b({\bf X}) + \left(A({\bf X}) + \bar{A} \right) \left(( \Phi_l - \Phi^* \right)\nonumber
    \end{align}
    Now, taking $\mid\mid \cdot \mid\mid$ on both sides of (\ref{eqn:apdx_ml}), we can write
    \begin{equation}
    \begin{aligned}
    \mid \mid M_{l+1}  \mid \mid \le & \mid \mid A({\bf X})\mid \mid\mid \mid\Phi^*\mid \mid + \mid \mid b({\bf X}) \mid \mid \\
    & + \mid \mid A({\bf X}) + \bar{A} \mid \mid \mid \mid \Phi_l - \Phi^* \mid \mid 
    \end{aligned}
    \end{equation}
    Using (\ref{eqn:A_b_matrices}) and Assumption \ref{assumption:bounded_noise}, we can write
    \begin{align}
    \mid \mid b({\bf X}) \mid \mid  & = \sqrt{ c^2_\alpha r^2 + r^2 \mid \mid {\bf z} \mid \mid^2} \le c_{\alpha} + \frac{1}{2} \text{, and} \nonumber \\
    \mid \mid A({\bf X}) \mid \mid & \le  \mid \mid \begin{bmatrix} c_{\alpha} & 0 \\ \mid \mid {\bf z}\mid \mid & \mid \mid{\bf z}({\bf z}' - {\bf z})^T\mid \mid \end{bmatrix} \mid \mid \le \sqrt{c^2_\alpha + \frac{5}{16}}
    \label{eqn:bound_A_b}
    \end{align}
    Using the bounds from (\ref{eqn:bound_A_b}) in (\ref{eqn:apdx_ml}), Lemma~\ref{lemma:martingale_bound} follows.
\end{proof}

The proof of Theorem~\ref{thm:critic_convergence} in \cite{dalal2018finite} does not depend on the specific structure of the matrices $A\left({\bf X}\right)$, $b({\bf X})$, and the martingale noise $M_{l+1}$, but the norm of those quantities must be bounded as in Lemma~\ref{lemma:martingale_bound}. Therefore, by replacing Lemma~5.1 from \cite{dalal2018finite} with Lemma~\ref{lemma:martingale_bound} above, the rest of the proof of Theorem~\ref{thm:critic_convergence} follows the same steps as in \cite{dalal2018finite}.  \vspace{-2mm}

\section{Proof of Lemma~\ref{lemma:gradient_dominance} [Gradient Dominance]} \label{app:proof_lemma_gradient_dominance}

From Lemma~C.6 in \cite{yangProvablyGlobalConvergence2019}, we can write
\begin{align}
    J({\bf K}) - J({{\bf \bar K}}^*) \le \frac{||\Sigma_{\bar K^*}||}{\sigma_{\min}(\bf R)}\text{tr}({\bf E}_K^T{\bf E}_K), \label{eqn:diff_JK}
\end{align}
where ${\bf \bar K}^*$ is the optimal policy parameter that minimizes only the cost function $J(\bf K)$ and 
\begin{align}
    {\bf E}_K &= ({\bf R} + {\bf B}^T{\bf P}_{ K}{\bf B}){\bf K} - {\bf B}^T{\bf P}_{ K}{\bf A}. \label{eqn:E_K}
\end{align}
${\bf K^*}$ is the optimal policy parameter that minimizes the Lagrangian function $\mathcal{L}(\bf K, \lambda)$ for a given $\lambda > 0$. From (\ref{eqn:diff_JK}), we can write
\begin{align}
    &J({\bf K}) - J({{\bf K}}^*) \le J({\bf K}) - J({{\bf \bar K}}^*) \le \frac{||\Sigma_{\bar K^*}||}{\sigma_{\min}(\bf R)}\text{tr}({\bf E}_K^T{\bf E}_K) \nonumber \\
    &\le \frac{||\Sigma_{\bar K^*}||}{4\sigma_{\min}(\bf R)}\text{tr}(4{\bf \Sigma_K}^{-1}{\bf \Sigma_K}{\bf E}_K^T{\bf E}_K{\bf \Sigma_K}^{-1}{\bf \Sigma_K}) \nonumber \\
    &\le \mu_1 \mid \mid \nabla_K {J}({\bf K})  \mid \mid^2, \text{where } \mu_1 = \frac{n||\Sigma_{\bar K^*}||}{4\sigma^2_{\min}(\Sigma_K) \sigma_{\min}(\bf R)}. \label{eqn:diff_JK2}
\end{align}
In (\ref{eqn:diff_JK2}), we have used $\nabla_K {J}({\bf K}) = 2 {\bf E}_K{\Sigma_K}$  from \cite{yangProvablyGlobalConvergence2019}.
Taking derivative of (\ref{eqn:JcK}) with respect to $\bf K$ we can write,
\begin{equation}
    \nabla_K {J_c}({\bf K}) = -\text{E}\left[\exp\left(-a({\bf x}_k, {\bf K})^2 /2\right) \frac{{\bf B}^T{\bf q}{\bf x}_k^T}{\sqrt{2\pi {\bf q}^T{\Sigma_{\bar w}}{\bf q}}}  \right]. \label{eqn:grad_JcK}
\end{equation}
Norm of the gradient of the Lagrangian function $\mathcal{L}(\bf K, \lambda)$ can be written as, 
\begin{align} 
&\text{tr}\left(\nabla_K\mathcal{L}({\bf K}, \lambda)^T \nabla_K\mathcal{L}({\bf K}, \lambda) \right) = \text{tr}\left(\nabla_K J({\bf K})^T \nabla_K J({\bf K}) \right) \nonumber \\
&+ \text{tr}\left(\lambda^2\nabla_K J_c({\bf K})^T \nabla_K J_c({\bf K}) + 2\lambda\nabla_K J({\bf K})^T\nabla_K J_c({\bf K})  \right). \nonumber \\
& \ge \text{tr}\left(\nabla_K J({\bf K})^T \nabla_K J({\bf K}) \right) + \text{tr}\left(2\lambda\nabla_K J({\bf K})^T\nabla_K J_c({\bf K})  \right) \nonumber \\
& \ge \text{tr}\left(\nabla_K J({\bf K})^T \nabla_K J({\bf K}) - 4\lambda {\bf E}_K{\Sigma_K}\text{E}\left[\frac{{\bf B}^T{\bf q}{\bf x}_k^T}{\sqrt{ 2\pi{\bf q}^T{\Sigma_{\bar w}}{\bf q}}} \right]  \right) \nonumber \\
&\text{[using $\nabla_K {J}({\bf K}) = 2 {\bf E}_K{\Sigma_K}$ and (\ref{eqn:grad_JcK}), and } \nonumber \\
 &0\le \exp\left(-a^2({\bf x}_k, {\bf K}) /2\right) \le 1. \text{]} \nonumber \\
& \ge \text{tr}\left(\nabla_K J({\bf K})^T \nabla_K J({\bf K}) \right)\text{ [since } \text{E}[{\bf x}_k] = 0 \text{]}.
\label{eqn:grad_LK}
\end{align}
Combining (\ref{eqn:diff_JK2}) and (\ref{eqn:grad_LK}), we can write
\begin{align} 
    J({\bf K}) - J({{\bf K}}^*) \le \mu_1 \text{tr}\left(\nabla_K\mathcal{L}({\bf K}, \lambda)^T \nabla_K\mathcal{L}({\bf K}, \lambda) \right).
\end{align}
Additionally, from Lemma C.6 \cite{yangProvablyGlobalConvergence2019}, we can say $\nabla_K\mathcal{L}({\bf K}, \lambda)^T \nabla_K\mathcal{L}({\bf K}, \lambda)$ is lower bounded away from $0$ by $\sigma(\Sigma_w)||{\bf R} + {\bf B}^T{\bf P}_K{\bf B}||^{-1}\text{tr}({\bf E}_K^T{\bf E}_K)$. Since $\mid J_c({\bf K}) - J_c({\bf K^*}) \mid \le 1$, there will exist a sufficiently large $\mu \ge \mu_1$, such that (we note that $\mu$ is dependent on $\lambda$)
\begin{align} 
    &\mathcal{L}({\bf K}, \lambda) - \mathcal{L}({\bf K}^*, \lambda) = J({\bf K}) - J({{\bf K}}^*) + \lambda (J_c({\bf K}) - J_c({\bf K^*})) \nonumber \\
    & \le \mu \text{tr}\left(\nabla_K\mathcal{L}({\bf K}, \lambda)^T \nabla_K\mathcal{L}({\bf K}, \lambda) \right).
\end{align}
This completes the proof of Lemma~\ref{lemma:gradient_dominance}. \vspace{-2mm}

\vspace{-1mm} \section{Proof of Lemma~\ref{lemma:conv_rate} [Convergence Rate]} \label{app:proof_lemma_conv_rate}
Here, we will prove Lemma~\ref{lemma:conv_rate} for the NPG algorithm. The update rule for the policy parameter ${\bf K}$ under the NPG algorithm is given by (\ref{eqn:K_update}).

From the L-smoothness property of $\mathcal{L}({\bf K},\lambda)$ as given in Lemma~\ref{lemma:Lsmooth}, we can write the following inequality \cite{huTheoreticalFoundationPolicy2023},
\begin{align}
    &\mathcal{L}({\bf K}^{'}, \lambda) - \mathcal{L}({\bf K}, \lambda) \le \nonumber \\
    &\text{tr}(\nabla_K\mathcal{L}({\bf K}, \lambda)^T({\bf K}^{'} - {\bf K})) + \frac{L}{2} \mid \mid {\bf K}^{'} - {\bf K} \mid \mid_F^2, \label{eqn:grad_dom2}
\end{align}
Using (\ref{eqn:K_update}) in (\ref{eqn:grad_dom2}), we can write
\begin{equation}
\begin{aligned}
    &\mathcal{L}({\bf K}^{'}, \lambda) - \mathcal{L}({\bf K}, \lambda) \le  -\text{tr}\left(\alpha {\Sigma_K}^{-1} \right.\\
    & \left. - \frac{L\alpha^2}{2}({\Sigma_K}^{-1})({\Sigma_K}^{-1})^T \right)\mid \mid \nabla_K \mathcal{L}({\bf K}, \lambda)  \mid \mid_F^2. \label{eqn:grad_dom3}
\end{aligned}
\end{equation}
We have used the matrix trace inequality as given in Theorem~1 from \cite{coope1994matrix} to get (\ref{eqn:grad_dom3}). To ensure convergence, we need the trace in (\ref{eqn:grad_dom3}) to be strictly positive. In other words, the step size $\alpha$ should be $\alpha < \frac{2}{L\text{tr}(\Sigma_K^{-1})}$.
Since $\Sigma_K$ is the solution to the Lyapunov equation (\ref{eqn:Sigma_K}), we can say $\text{tr}(\Sigma_K)$ is upper bounded by a finite constant, so there exists a constant $0<C_{\Sigma} < \text{tr}(\Sigma_K^{-1})$. Therefore, we can write an upper limit for $\alpha < \frac{2}{LC_{\Sigma}}$, which is independent of $\bf K$.

Applying the gradient dominance property of $\mathcal{L}({\bf K}, \lambda)$ as given in Lemma~\ref{lemma:gradient_dominance}, we can write 
\begin{align}
    &\mathcal{L}({\bf K}^{'}, \lambda) - \mathcal{L}({\bf K}^*, \lambda) \le \beta (\mathcal{L}({\bf K}, \lambda) - \mathcal{L}({\bf K}^*, \lambda)) \text{, where} \nonumber \\
    &\beta =1- \frac{1}{\mu}\text{tr}\left(\alpha {\Sigma_K}^{-1} - \frac{L\alpha^2}{2}({\Sigma_K}^{-1})({\Sigma_K}^{-1})^T \right)
\end{align}
Since we need $0<\beta < 1$ for convergence, the step size $\alpha$ should satisfy the following condition
\begin{align}
    &0 < \frac{1}{\mu}\text{tr}\left(\alpha {\Sigma_K}^{-1} - \frac{L\alpha^2}{2}({\Sigma_K}^{-1})({\Sigma_K}^{-1})^T \right) < 1 \nonumber \\
    &=> \frac{1}{\mu}\text{tr}\left(\alpha {\Sigma_K}^{-1} - \frac{L\alpha^2}{2}({\Sigma_K}^{-1})({\Sigma_K}^{-1})^T \right) < 1 \\ 
    &=> \frac{1}{\mu}\text{tr}\left(\frac{\alpha}{\sigma_{\min}(\Sigma_{\bar w})} - \frac{L\alpha^2}{2}C_{\Sigma}^2 \right) < 1 \text{, [(\ref{eqn:Sigma_K}) used]}
    \label{eqn:alpha_bound2}
\end{align}
In (\ref{eqn:alpha_bound2}), $\alpha < \frac{2}{LC_{\Sigma}}$ is assumed. Here, $\sigma_{\min}(\cdot)$ denotes the lowest singular value. Note that if we choose $\alpha$ sufficiently small, condition (\ref{eqn:alpha_bound2}) can be satisfied. This completes the proof of Lemma~\ref{lemma:conv_rate}. \vspace{-2mm}
\vspace{-1mm}\section{PROOF OF THEOREM~\ref{th:monotonicity} [Convergence of Policy Gradient]} \label{app:proof_th_monotonicity}
To prove Theorem~\ref{th:monotonicity}, we need the following lemma,
\begin{lemma}\label{lemma:K_diff}
For a fixed $\lambda > 0$, let the step size $\alpha$ in (\ref{eqn:policy_update}) be chosen such that Lemma~\ref{lemma:conv_rate} holds. Furthermore, let the error in the critic's value function approximation be bounded, \ie, Corollary~\ref{cor:critic_error_bound} holds. Then, for a sufficiently large number of samples $N$ used in the policy gradient estimation and a sufficiently large number of samples $M$ used in the state covariance matrix estimation, and for $\epsilon_K > 0$ and $\delta_K \in (0,1)$, the difference between the policy parameters updated in one time step using the true natural policy gradient, ${\bf K}'_{j+1}$, and updated using the approximate natural policy gradient, ${\bf K}_{j+1}$, both starting from the same parameter ${\bf K}_j$ in the previous step, is bounded as
\begin{equation}
    ||{\bf K}_{j+1} - {\bf K}^{'}_{j+1}|| \le \epsilon_K, \text{\wpp\ } 1 - \delta_K.
\end{equation}
Here $\epsilon_K$ is a constant that depends on the problem parameters.
\end{lemma}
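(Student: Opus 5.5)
Since both updates start from the same ${\bf K}_j$ with the same step size $\alpha_j$, we have ${\bf K}_{j+1}-{\bf K}'_{j+1} = \alpha_j(\hat{\bf F}^{-1}\hat G - {\bf F}^{-1}G)$. The lemma therefore reduces to bounding the error of the estimated natural gradient with high probability. Write $\Delta_F = \hat{\bf F}-{\bf F}$ and $\Delta_G = \hat G - G$, and use the standard perturbation identity
\[
\hat{\bf F}^{-1}\hat G - {\bf F}^{-1}G = \hat{\bf F}^{-1}\Delta_G - \hat{\bf F}^{-1}\Delta_F{\bf F}^{-1}G .
\]
This gives $\|\hat{\bf F}^{-1}\hat G - {\bf F}^{-1}G\| \le \|\hat{\bf F}^{-1}\|\,(\|\Delta_G\| + \|\Delta_F\|\,\|{\bf F}^{-1}\|\,\|G\|)$. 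For the Gaussian policy (\ref{eqn:policy_K}), $\nabla_{\bf K}\log\pi_{\bf K}({\bf u}|{\bf x}) = -\Sigma_\sigma^{-1}({\bf u}+{\bf K}{\bf x}){\bf x}^T$. Hence ${\bf F}$ is, up to a Kronecker factor $\Sigma_\sigma^{-1}$, the state correlation matrix $\Sigma_{{\bf K}_j}$. This matrix satisfies $\Sigma_{{\bf K}_j}\succeq\Sigma_{\bar w}\succ 0$ by (\ref{eqn:Sigma_K}), so $\|{\bf F}^{-1}\|$ is bounded. We will estimate this correlation matrix from $M$ samples.

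\emph{Step 1 (Fisher/covariance estimate).} By Lemma~\ref{lemma:coercivity} and Theorem~\ref{th:monotonicity}'s descent mechanism, ${\bf K}_j$ lies in the sublevel set $\mathcal K_\zeta$, where $\|{\bf K}_j\|$ and $\text{tr}(\Sigma_{{\bf K}_j})$ are uniformly bounded. The closed-loop states are Gaussian with covariance $\Sigma_{{\bf K}_j}$. A matrix concentration inequality for sample covariances of sub-Gaussian vectors (treating buffer samples as i.i.d., as in the remark after Theorem~\ref{thm:critic_convergence}) then gives $\|\Delta_F\|\le \epsilon_F$ w.p. $1-\delta_K/3$ once $M = \tilde O(\epsilon_F^{-2}\ln(1/\delta_K))$. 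Choosing $\epsilon_F\le \sigma_{\min}({\bf F})/2$ yields $\|\hat{\bf F}^{-1}\|\le 2\|{\bf F}^{-1}\|$ by Weyl's inequality.

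\emph{Step 2 (gradient estimate).} Split $\Delta_G = (\hat G - \tilde G) + (\tilde G - G)$. Here $\tilde G$ is the sample average (\ref{eqn:policy_gradient}) with the true $Q$ replacing $\hat Q_\phi$. The critic term is bounded by Corollary~\ref{cor:critic_error_bound}: w.p. $1-\delta_c$, $|Q-\hat Q_{\phi_{N_c}}|\le \epsilon/2+M_Q$ uniformly. This gives $\|\hat G-\tilde G\|\le(\epsilon/2+M_Q)\frac1N\sum_k\|\nabla\log\pi\|$, whose average concentrates around $\mathbb E\|\nabla\log\pi\|\le C(\Sigma_\sigma,\Sigma_{{\bf K}_j})$. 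The sampling term $\tilde G - G$ is a mean of i.i.d. zero-mean products of a quadratic-growth $Q$ (from the closed form through ${\bf P}_K$ and $J_c\in[0,1]$) and a Gaussian score. These products are sub-exponential, so a Bernstein-type bound gives $\|\tilde G-G\|\le\epsilon_G$ w.p. $1-\delta_K/3$ for $N=\tilde O(\epsilon_G^{-2}\ln(1/\delta_K))$. Also, $\|G\|$ is bounded on $\mathcal K_\zeta$ by $L$-smoothness (Lemma~\ref{lemma:Lsmooth}) and compactness of the sublevel set.

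\emph{Step 3 (assembly).} A union bound over the three events, with $\delta_c$ chosen as a third of $\delta_K$, gives, w.p. $1-\delta_K$,
\[
\|{\bf K}_{j+1}-{\bf K}'_{j+1}\|\le 2\alpha_j\|{\bf F}^{-1}\|\big(C(\epsilon/2+M_Q)+\epsilon_G+\epsilon_F\|{\bf F}^{-1}\|\|G\|\big)=:\epsilon_K .
\]
All constants depend only on problem parameters and $\zeta$. The main obstacle is Step 2's concentration for $\tilde G$: $Q$ is unbounded in $({\bf x},{\bf u})$, so a Hoeffding bound is unavailable. I would first truncate on the event $\|{\bf x}_k\|\le R$, which has Gaussian tail probability, and then apply Bernstein's inequality, paying only logarithmic factors. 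A secondary subtlety is that $M_Q$ cannot be reduced by sampling, so $\epsilon_K$ retains an irreducible term proportional to $\alpha_j M_Q$. This is consistent with the statement, where $\epsilon_K$ is only claimed to be a problem-dependent constant.
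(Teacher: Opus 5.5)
Your proposal is correct and is essentially the paper's argument:
\begin{itemize}
\item the same add-and-subtract split of ${\bf K}_{j+1}-{\bf K}'_{j+1}$ into an inverse-covariance error and a gradient error (you attach $\hat{\bf F}^{-1}$ to the gradient error and the inverse perturbation to the true $G$; the paper does the mirror image with $\hat\nabla_K\mathcal L$ and $\Sigma_{{\bf K}_j}^{-1}$);
\item the same critic/sampling split $T_1+T_2$, handled by Corollary~\ref{cor:critic_error_bound} and a Bernstein bound;
\item matrix concentration plus inverse perturbation for the covariance estimate;
\item a final union bound.
\end{itemize}
If anything, your version is more careful. It replaces the paper's almost-sure bounds $\|{\bf x}_m\|\le C_x$ and $\|X_Q\|\le Q_m$, which are untenable for Gaussian states, by sub-Gaussian concentration and truncation. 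It also keeps every error term proportional to $\alpha_j$, which the later claim $\epsilon_K=\mathcal O(\alpha_j)$ requires.

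Three small repairs are needed.
\begin{itemize}
\item Do not invoke Theorem~\ref{th:monotonicity} to place ${\bf K}_j$ in $\mathcal K_\zeta$, because that theorem is proved using this lemma. The statement is a one-step bound from a given stabilizing ${\bf K}_j$. So either let the constants depend on $\Sigma_{{\bf K}_j}$, as the paper does, or carry ${\bf K}_j\in\mathcal K_\zeta$ as an induction hypothesis.
\item The product $Q\,\nabla_{\bf K}\log\pi_{\bf K}$ is a degree-four polynomial in Gaussian variables, so it is not sub-exponential. Your truncation must therefore cover the exploration noise $\sigma_k$ as well as ${\bf x}_k$. Alternatively, simply use Chebyshev, since only a sufficiently large $N$ is claimed.
\item The union bound involves four events, not three. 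The concentration of $\frac{1}{N}\sum_k\|\nabla_{\bf K}\log\pi_{\bf K}({\bf u}_k|{\bf x}_k)\|$ behind your constant $C$ needs its own share of $\delta_K$.
\end{itemize}
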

\begin{remark}
    Lemma~\ref{lemma:K_diff} implies that estimation and approximation errors do not significantly distort the policy update direction or magnitude in each iteration, provided that sufficient samples are used. This is crucial for ensuring that the policy updates remain effective and converge towards the optimal policy, despite the inherent uncertainties in the estimation process. This lemma is instrumental in establishing the overall convergence of the policy gradient method in Theorem~\ref{th:monotonicity}.
\end{remark}
\begin{proof}
To differentiate between a policy update step with true natural policy gradient and an approximate natural policy gradient, we denote the policy parameter updated with true natural policy gradient as ${\bf K}^{'}$ and the one updated with approximate natural policy gradient as ${\bf K}$. The update rules for ${\bf K}^{'}$ and ${\bf K}$ are given by from (\ref{eqn:K_update}) as
\begin{align}
    &{\bf K}^{'}_{j+1} = {\bf K}_j - \alpha_j  \nabla_K \mathcal{L}({\bf K}_j, \lambda) \Sigma_{{\bf K}_j}^{-1}, \label{eqn:K_update_true} \\
    &{\bf K}_{j+1} =  {\bf K}_j - \alpha_j  \hat{\nabla}_K \mathcal{L}({\bf K}_j, \lambda) \hat{\Sigma}_{{\bf K}_j}^{-1}. \label{eqn:K_update_approx}
\end{align}
Subtracting (\ref{eqn:K_update_true}) from (\ref{eqn:K_update_approx}), and adding and subtracting a few same terms, we can write
\begin{align}
    &{\bf K}_{j+1} - {\bf K}^{'}_{j+1} =-\alpha_j \left( \hat{\nabla}_K \mathcal{L}({\bf K}_j, \lambda) (\hat{\Sigma}_{{\bf K}_j}^{-1} - {\Sigma}_{{\bf K}_j}^{-1}) \right. \nonumber \\
    & \left. + (\hat \nabla_K \mathcal{L}({\bf K}_j, \lambda) - \nabla_K \mathcal{L}({\bf K}_j, \lambda)) {\Sigma}_{{\bf K}_j}^{-1} \right).
 \label{eqn:K_diff}
\end{align}

We estimate $\hat{\Sigma}_{{\bf K}_j}$ as the following sample mean,
\begin{align}
    \hat{\Sigma}_{{\bf K}_j} = \frac{1}{M} \sum_{m=1}^{M} {\bf x}_m {\bf x}_m^T. \label{eqn:Sigma_hat}
\end{align}
Since we assume that we draw iid samples from the buffer, we can write the following bound for $\hat{\Sigma}_{{\bf K}_j}$, using the matrix concentration inequality \cite{tropp2015introduction}, 
\begin{align}
    &P\left( ||\hat{\Sigma}_{{\bf K}_j} - \Sigma_{{\bf K}_j}|| \ge \epsilon \right) \le 2n \exp\left(\frac{-M\epsilon^2}{2C_x(\mid \mid \Sigma_{{\bf K}_j} \mid \mid + 2\epsilon/3)} \right). \label{eqn:matrix_conc_ineq}
\end{align}
Here, we assume $\mid \mid {\bf x}_m \mid \mid \le C_x$ almost everywhere for some $C_x > 0$. Since the CL-system will always remain stable, such an assumption is reasonable. From (\ref{eqn:matrix_conc_ineq}), we can say that for a sufficiently large $M$, $||\hat{\Sigma}_{{\bf K}_j} - \Sigma_{{\bf K}_j}||$ can be made arbitrarily small with high probability. Furthermore, using the perturbation theory of matrix inverse \cite{stewart1977perturbation}, we can write
\begin{align}
    &||\hat{\Sigma}_{{\bf K}_j}^{-1} - {\Sigma}_{{\bf K}_j}^{-1}|| \le ||{\Sigma}_{{\bf K}_j}^{-1}|| \frac{||\hat{\Sigma}_{{\bf K}_j} - {\Sigma}_{{\bf K}_j}||}{1 - ||{\Sigma}_{{\bf K}_j}^{-1}|| ||\hat{\Sigma}_{{\bf K}_j} - {\Sigma}_{{\bf K}_j}||}. \label{eqn:inv_perturb}
\end{align}
From (\ref{eqn:matrix_conc_ineq}) and (\ref{eqn:inv_perturb}), we can say that for a sufficiently large $M$, $||\hat{\Sigma}_{{\bf K}_j}^{-1} - {\Sigma}_{{\bf K}_j}^{-1}||$ can be made arbitrarily small with high probability.

Moreover, $\hat{\nabla}_K \mathcal{L}({\bf K}_j, \lambda) =\frac{1}{N} \sum_{k=1}^N {\hat Q}\left( {\bf x}_k, {\bf u}_k \right) \nabla_K \log \pi_K\left({\bf u}_k \mid {\bf x}_k \right) $ is bounded with high probability for the following reasons. First, from Corollary~\ref{cor:critic_error_bound}, we can say that $\hat Q\left( {\bf x}_k, {\bf u}_k \right)$ will be in close proximity of true $Q$ with probability $1-\delta_c$, and true $Q$ for a stable system will be finite \cite{yangProvablyGlobalConvergence2019}. Second, $\frac{1}{N} \sum_{k=1}^N  \nabla_K \log \pi_K\left({\bf u}_k \mid {\bf x}_k \right)$ is also bounded with high probability, \ie, $1-\delta_1$, see (\ref{eqn:grad_log_pi2}).

Therefore, we can write the following inequality, 
\begin{align}
    &||{\bf K}_{j+1} - {\bf K}^{'}_{j+1}|| \le \alpha_j ||\hat \nabla_K \mathcal{L}({\bf K}_j, \lambda) - \nabla_K \mathcal{L}({\bf K}_j, \lambda)|| \nonumber \\
     &
    \times ||{\Sigma}_{{\bf K}_j}^{-1}|| + \epsilon_\Sigma. \label{eqn:K_diff2}
\end{align}
Here, $\epsilon_\Sigma$ is a constant that depends on the problem parameters and can be made arbitrarily small by choosing sufficiently large $M$ and step size $\alpha_j$ sufficiently small. Therefore, we focus on deriving an upper bound for $||\hat \nabla_K \mathcal{L}({\bf K}_j, \lambda) - \nabla_K \mathcal{L}({\bf K}_j, \lambda)||$ in the following.

Next we focus on the difference between $\hat \nabla_K \mathcal{L}({\bf K}_j, \lambda)$ and $\nabla_K \mathcal{L}({\bf K}_j, \lambda)$. From (\ref{eqn:policy_gradient}), and adding and subtracting a few same terms, we can write the following expression, see (\ref{eqn:grad_diff}). Note that $\hat Q$ is the approximate Q-function obtained from $N_c$ iterations of the critic update, but the reference to $N_c$ is omitted for notational simplicity. Additionally, we have assumed the process to be ergodic under the policy $\pi_{{\bf K}_j}$, so the states will have a stationary distribution $\rho_x$.
\begin{align}
    &\hat \nabla_K \mathcal{L}({\bf K}_j, \lambda) - \nabla_K \mathcal{L}({\bf K}_j, \lambda) = \nonumber \\
    & = \frac{1}{N} \sum_{k=1}^N \left( {\hat Q}\left( {\bf x}_k, {\bf u}_k \right) - {Q}\left( {\bf x}_k, {\bf u}_k \right) \right) \nabla_K \log \pi_K\left({\bf u}_k \mid {\bf x}_k \right) \nonumber \\
    & + \frac{1}{N} \sum_{k=1}^N {Q}\left( {\bf x}_k, {\bf u}_k \right) \nabla_K \log \pi_K\left({\bf u}_k \mid {\bf x}_k \right) - \nonumber \\
    & \mathbb{E}_{x \sim \rho_x,u \sim \pi_{K_j}} \left[ {Q}\left( {\bf x}, {\bf u} \right) \nabla_K \log \pi_K\left({\bf u} \mid {\bf x} \right) \right].
     \label{eqn:grad_diff}
\end{align}
From (\ref{eqn:grad_diff}), we can write the following inequality,
\begin{align}
    &||\hat \nabla_K \mathcal{L}({\bf K}_j, \lambda) - \nabla_K \mathcal{L}({\bf K}_j, \lambda)|| \le \nonumber \\
    & \underbrace{\frac{1}{N} \sum_{k=1}^N \mid  {\hat Q}\left( {\bf x}_k, {\bf u}_k \right) - {Q}\left( {\bf x}_k, {\bf u}_k \right) \mid \mid \mid \nabla_K \log \pi_K\left({\bf u}_k \mid {\bf x}_k \right) \mid \mid}_{T_1} \nonumber \\
    & + \left| \left| \frac{1}{N} \sum_{k=1}^N {Q}\left( {\bf x}_k, {\bf u}_k \right) \nabla_K \log \pi_K\left({\bf u}_k \mid {\bf x}_k \right) \right. \right. \nonumber \\
    &\underbrace{\indent \indent \left. \left. - \mathbb{E}_{x \sim \rho_x,u \sim \pi_{K_j}} \left[ {Q}\left( {\bf x}, {\bf u} \right) \nabla_K \log \pi_K\left({\bf u} \mid {\bf x} \right) \right] \right| \right|}_{T_2}.
    \label{eqn:grad_diff2}
\end{align}
First we derive the upper bound for the first term $T_1$ on the right-hand side of (\ref{eqn:grad_diff2}). From Corollary~\ref{cor:critic_error_bound}, we can write
\begin{align}
    T_1\le &\left(\frac{\epsilon}{2} + M_Q \right) \frac{1}{N} \sum_{k=1}^N \mid \mid \nabla_K \log \pi_K\left({\bf u}_k \mid {\bf x}_k \right) \mid \mid, \nonumber \\
    & \text{\wpp\  } 1 - \delta_c \label{eqn:grad_diff_term1}
\end{align}
Using (\ref{eqn:policy_K}), we can simplify $\nabla_K \log \pi_K\left({\bf u}_k \mid {\bf x}_k \right)$ as
\begin{align}
    &\frac{1}{N} \sum_{k=1}^N \mid \mid \nabla_K \log \pi_K\left({\bf u}_k \mid {\bf x}_k \right) \mid \mid \nonumber \\
    &= \frac{1}{N} \sum_{k=1}^N \mid \mid \Sigma_{\sigma}^{-1} ({\bf u}_k + {\bf K}_j {\bf x}_k) {\bf x}_k^T \mid \mid = \frac{1}{N} \sum_{k=1}^N \mid \mid \Sigma_{\sigma}^{-1}\sigma_k {\bf x}_k^T \mid \mid \nonumber \\
    & \le ||\Sigma_{\sigma}^{-1}|| \frac{1}{N} \sum_{k=1}^N ||\sigma_k|| ||{\bf x}_k|| \nonumber \\
&\text{[Adding and subtracting $\mathbb{E} \left[ ||\sigma|| ||{\bf x}|| \right] = \left(tr(\Sigma_{\sigma}) tr(\Sigma_{K_j}) \right)^{1/2}$]} \nonumber \\
    & \le ||\Sigma_{\sigma}^{-1}|| \left[ \left(tr(\Sigma_{\sigma}) tr(\Sigma_{K_j}) \right)^{1/2} \right.  \nonumber \\
    & \left. + \frac{1}{N} \sum_{k=1}^N ||\sigma_k|| ||{\bf x}_k|| - \left(tr(\Sigma_{\sigma}) tr(\Sigma_{K_j}) \right)^{1/2} \right] \nonumber \\
    \le & ||\Sigma_{\sigma}^{-1}|| \left[ \left(tr(\Sigma_{\sigma}) tr(\Sigma_{K_j}) \right)^{1/2} + \epsilon_1 \right], \text{\wpp  } 1 - \delta_1,
    \label{eqn:grad_log_pi2}
\end{align}
where $\epsilon_1 > 0$ is a constant and $\delta_1 = \frac{tr(\Sigma_{\sigma}) tr(\Sigma_{K_j})}{N \epsilon_1}$. Using (\ref{eqn:grad_log_pi2}) in (\ref{eqn:grad_diff_term1}), we can write
\begin{align}
    T_1&\le \left(\frac{\epsilon}{2} + M_Q \right) ||\Sigma_{\sigma}^{-1}|| \left[ \left(tr(\Sigma_{\sigma}) tr(\Sigma_{K_j}) \right)^{1/2} + \epsilon_1 \right], \nonumber \\
    & \text{\wpp\  } 1 - \delta_c - \delta_1. \label{eqn:grad_diff_term1_final}
\end{align}
Next we derive the upper bound for the second term $T_2$ on the right-hand side of (\ref{eqn:grad_diff2}). Then we apply the matrix Bernstein inequality \cite{tropp2015introduction}, and assume there exists a constant $Q_{m} \ge 0$, such that $\mid \mid X_Q \mid \mid \le Q_{m}$ for all ${\bf x}, {\bf u}$ almost surely, where $X_Q ={Q}\left( {\bf x}_k, {\bf u}_k \right) \nabla_K \log \pi_K\left({\bf u}_k \mid {\bf x}_k \right) - \mathbb{E}[{Q}\left( {\bf x}, {\bf u} \right) \nabla_K \log \pi_K\left({\bf u} \mid {\bf x} \right)]$. Since the CL-system remains stable all the time, such an assumption is reasonable. Therefore, we can write
\begin{align}
    T_2 \le \epsilon_2, \text{\wpp\ } 1 - \delta_2, \label{eqn:grad_diff_term2_final}
\end{align}
where $\epsilon_2 > 0$ is a constant and $\delta_2 = (n+p) \exp\left(\frac{-N \epsilon_2^2/2}{\sigma_Q^2 + Q_m \epsilon_2/3} \right)$, and $\sigma_Q^2 = \max\left\{||\mathbb{E}[X_Q,X_Q^T]||, ||\mathbb{E}[X_Q^T,X_Q], || \right\} $. Using (\ref{eqn:grad_diff_term1_final}) and (\ref{eqn:grad_diff_term2_final}) in (\ref{eqn:grad_diff2}), we can write
\begin{align}
    &||\hat \nabla_K \mathcal{L}({\bf K}_j, \lambda) - \nabla_K \mathcal{L}({\bf K}_j, \lambda)|| \le \\
    & \left(\frac{\epsilon}{2} + M_Q \right) ||\Sigma_{\sigma}^{-1}|| \left[ \left(tr(\Sigma_{\sigma}) tr(\Sigma_{K_j}) \right)^{1/2} + \epsilon_1 \right] + \epsilon_2, \nonumber \\
    & \text{\wpp\ } 1 - \delta_c - \delta_1 - \delta_2.
    \label{eqn:grad_diff_final}
\end{align}
Finally, using (\ref{eqn:grad_diff_final}) in (\ref{eqn:K_diff2}), we can write
\begin{align}
    &||{\bf K}_{j+1} - {\bf K}^{'}_{j+1}|| \le \alpha_j ||{\Sigma}_{{\bf K}_j}^{-1}|| \left[ \left(\frac{\epsilon}{2} + M_Q \right) ||\Sigma_{\sigma}^{-1}|| \right. \nonumber \\
    & \left. \left(tr(\Sigma_{\sigma}) tr(\Sigma_{K_j}) \right)^{1/2} + \epsilon_1 \right] + \epsilon_2 + \epsilon_\Sigma, \text{\wpp\ } 1 - \delta_c - \delta_1 - \delta_2. 
    \label{eqn:K_diff_final}
\end{align}
Setting the right-hand side of the above inequality to $\epsilon_K$ and $\delta_K = \delta_c + \delta_1 + \delta_2$, we complete the proof of Lemma~\ref{lemma:K_diff}.
\end{proof}

Now using the L-smoothness property of the Lagrangian function $\mathcal{L}({\bf K}, \lambda)$ (Lemma~\ref{lemma:Lsmooth}), we can write 
\begin{align}
    &\mid \mathcal{L}({\bf K}_{j+1}, \lambda) - \mathcal{L}({\bf K}^{'}_{j+1}, \lambda) \mid \nonumber \\
    &\le \mid \text{tr}\left(\nabla_{\bf K} \mathcal{L}({\bf K}^{'}_{j+1}, \lambda)\right) \mid \epsilon_K + \frac{L}{2}r_K \epsilon_K^2, \text{\wpp\ } 1 - \delta_K. \nonumber \\
    &  \le \epsilon_L \text{ \wpp\ } 1 - \delta_K. \label{eqn:L_smooth_K_diff2}
\end{align}
Here, $r_K$ is the rank of the matrix $({\bf K}_{j+1} - {\bf K}^{'}_{j+1})$. 
Using (\ref{eqn:L_smooth_K_diff2}) in Lemma~\ref{lemma:conv_rate}, we can write
\begin{align}
    \mathcal{L}({\bf K}'_{j+1}, \lambda) - \mathcal{L}({\bf K}_j, \lambda) &\le  -(1-\beta)\left(\mathcal{L}({\bf K}_{j}, \lambda) - \mathcal{L}({\bf K}^*, \lambda) \right) \nonumber \\
    \mathcal{L}({\bf K}_{j+1}, \lambda) - \mathcal{L}({\bf K}_{j}, \lambda) &\le \epsilon_L \label{eqn:monotonicity_final} \\
    &-(1-\beta)\left(\mathcal{L}({\bf K}_{j}, \lambda) - \mathcal{L}({\bf K}^*, \lambda) \right)
    \nonumber
\end{align}
We assume there exists a sufficiently small $\epsilon_L >0$ such that
\begin{equation}
 \epsilon_L < (1-\beta)\left(\mathcal{L}({\bf K}_{j}, \lambda) - \mathcal{L}({\bf K}^*, \lambda) \right).
 \label{eqn:epsilon_L_bound}
\end{equation}
  Since $\epsilon_L$ can be made arbitrarily small by choosing sufficiently large actor iterations say $N_a$ and $M$ and step size $\alpha$ sufficiently small, such an assumption is reasonable. Therefore, from (\ref{eqn:monotonicity_final}), we can write
\begin{align}
    &\mathcal{L}({\bf K}_{j+1}, \lambda) - \mathcal{L}({\bf K}_{j}, \lambda) \le 0, \text{\wpp\ } 1 - \delta_K.
    \label{eqn:monotonicity_final3}
\end{align}
To find an upper bound on $\alpha_j$, such that the above inequality holds, we write the dependencies of $\epsilon_L$ on $\alpha_j$ using (\ref{eqn:L_smooth_K_diff2}) as 
\begin{equation}
\begin{aligned}
    \epsilon_L(\alpha_j) = c_1 \alpha_j + c_2 \alpha_j^2,
    \text{since } \epsilon_K = \mathcal{O}(\alpha_j) (\ref{eqn:K_diff_final}).
\end{aligned}
\end{equation}
Here, $c_1$ and $c_2$ depend on the problem parameters and state correlation matrix for the policy ${\bf K}_j$. Therefore, the learning rate should satisfy the following condition, 
\begin{align}
&0 < \alpha_j < \min\left(\frac{2}{LC_\Sigma}, \alpha_j^*\right) \text{, where} \nonumber \\
&\alpha_j^* = \sup \{\alpha_j > 0: c_1 \alpha_j + c_2 \alpha_j^2 < (1-\beta)\left(\mathcal{L}({\bf K}_{j}, \lambda) \right. \nonumber \\
 &- \left. \mathcal{L}({\bf K}^*, \lambda) \right) \}.
\end{align}

The term $\frac{2}{LC_\Sigma}$ is appearing from the requirement of the step size for the convergence of the policy gradient method with true natural policy gradient, see Lemma~\ref{lemma:conv_rate}.
Furthermore, from Lemma~\ref{lemma:conv_rate}, (\ref{eqn:L_smooth_K_diff2}), and (\ref{eqn:epsilon_L_bound}), we can write
\begin{align}
    &\mathcal{L}({\bf K}_{j+1}, \lambda) - \mathcal{L}({\bf K}^*, \lambda) \le \epsilon_L +  \beta \left(\mathcal{L}({\bf K}_{j}, \lambda) - \mathcal{L}({\bf K}^*, \lambda) \right),  \nonumber \\
    &< \left(\mathcal{L}({\bf K}_{j}, \lambda) - \mathcal{L}({\bf K}^*, \lambda) \right), \text{\ }  \label{eqn:monotonicity_final4} \\
    &\le \beta_1 \left(\mathcal{L}({\bf K}_{j}, \lambda) - \mathcal{L}({\bf K}^*, \lambda) \right), \text{ } 0< \beta_1 <1 \text{, \wpp\ } 1 - \delta_K, \nonumber
\end{align}
Hence, we complete the proof.
    \section{Proof of Lemma \ref{lemma:duality}} \label{apdx:duality}
    We follow the proof of Theorem 2 from \cite{zhaoGlobalConvergencePolicy2023a}. The proof contains two steps. 

    First, it is proved that there exists a $\lambda^* \triangleq \inf \left \{ \lambda \ge 0|J_c({\bf K}^*(\lambda)) \le \delta  \right \}$ such that $\lambda^* < \infty$. Although the constraint function differs in our case, we can utilize the same proof methodology as presented in \cite{zhaoGlobalConvergencePolicy2023a}, which relies on a contradiction argument employing Slater's condition. This proof does not rely on any specific formulation of the constraint function.

For the second step of the proof, we need to show that $\bf K^*(\lambda)$ and $J_c(\bf K^*(\lambda))$ are continuous functions of $\lambda$. We will prove this step in the following.  
We can directly say the gradient of the Lagrangian function $\mathcal{L}(\bf K, \lambda)$ with respect to $\bf K$ is a linear function of $\lambda$ for a fixed $\bf K$. Additionally, $\nabla_K \mathcal{L}(\bf K, \lambda)$ is continuous in $K \in \mathcal{K}$, see Lemma~\ref{lemma:Lsmooth}. Therefore, the policy gradient steps, see Algorithm~\ref{alg:ac_method}, will produce $\bf K_i$ that are continuous functions of $\lambda$. Finally, we have already proved that $\bf K_i \rightarrow \bf K^*$ as $i \rightarrow \infty$ in Lemma~\ref{lemma:gradient_dominance}. Therefore, we can say the optimal policy parameter $\bf K^*(\lambda)$ and the constraint function $J_c(\bf K^*(\lambda))$ are continuous functions of $\lambda$. This completes the proof of Lemma~\ref{lemma:duality}. \vspace{-2mm}
\section{Parameters}
\label{apdx:params} 
\begin{align*}
	&{\bf A} =\begin{bmatrix}
		1 & 0.5 & 0 & 0  \\
		0 & 1 & 0& 0 \\
		0 & 0 & 1 & 0.5 \\
		0 &0 & 0 & 1
	\end{bmatrix}, {\bf B} =\begin{bmatrix}
	0.125 & 0  \\
	0.5 & 0 \\
	0 & 0.125 \\
	0 & 0.5
\end{bmatrix},  \\
	&{\bf W} = diag\left(1, 0.1, 2, 0.2 \right), {\bf U} = {\bf I} ,  \epsilon = 5, \Sigma_w = diag(80,0.01) \\
    &{\bf q} = \left[1, 0.1, 2, 0.2 \right]^T, \alpha_c = 0.005, \alpha_a = 0.005, \alpha_d = 0.001, \\
    & \Sigma_{D,0} = 5{\bf I}, \Sigma_{D,F} = 0.01{\bf I}, \Sigma_u = {\bf I}. 
\end{align*} \vspace{-4mm}

\section{Algorithms}
\label{apdx:algorithms}
\begin{algorithm}[h!]
    \caption{Chance constrained LQR (CLQR)}
    \label{algo:SDP}
    \begin{algorithmic}
        \State Apply SDP to solve the following optimization problem and obtain the optimal controller as ${\bf K_{sdp}} ={\bf Y} {\bf X}^{-1}$.
        \begin{equation*}
            \begin{aligned}
                &\min_{{\bf{X}}, {\bf Y}, {\bf P}} \quad \text{Tr}({\bf QX}) + \text{Tr}({\bf P})\\
                &\text{s.t.} \quad \begin{bmatrix}
                    {\bf P} & ({\bf R}^{1/2} {\bf Y}) \\
                    ({\bf R}^{1/2} {\bf Y})^T & {\bf X}
                \end{bmatrix} \succeq 0,\\
                &\quad \begin{bmatrix}
                    {\bf X} - {\bf W} & {\bf A}{\bf X} + {\bf B}{\bf Y} \\
                    ({\bf A}{\bf X} + {\bf B}{\bf Y})^T & {\bf X}
                \end{bmatrix} \succeq 0,\\
                &\quad q^T{\bf X} q \leq \alpha \epsilon^2, \text{ where } \alpha = (nrm^{-1}(1-\delta))^{-2}.
            \end{aligned}
        \end{equation*}
        \State Here $nrm(\cdot)$ is the cumulative normal distribution function. ${\bf{X}} \in {\rm I\!R}^{n\times n} >0$, ${\bf{P}} \in {\rm I\!R}^{p\times p} \geq 0$, and ${\bf{Y}} \in {\rm I\!R}^{p\times n}$.
    \end{algorithmic}
\end{algorithm}

\begin{algorithm}[h!]
    \caption{Scenario-based chance-constraint MPC}
    \label{algo:MPC}
    \begin{algorithmic}
        \State Perform the following steps at each time step $t$:
    
        \State Measure the current state ${\bf x}_t$.
    
        \State Generate $S$ noise samples ${\bf w}_t^{(1)}, \cdots, {\bf w}_t^{(S)} \sim f_w({\bf w})$.
    
        \State Solve the following optimization problem:
        \begin{equation*}
            \begin{aligned}
                &\min_{{\bf u}_{1|t},\cdots,{\bf u}_{T|t}} \sum_{s=1}^S \sum_{i=1}^T \left(f\left( {\bf x}_{i|t}^{(s)},{\bf u}_{i|t}\right) + \lambda \mathds{1}_{\left\{f_c\left({\bf x}_{i+1|t}^{(s)}\right) \ge \epsilon\right\}}   \right)\\
                &\text{s.t. (\ref{eqn:state_eqn}) is satisfied. }
            \end{aligned}
        \end{equation*}
    
        \State Apply the first control input ${\bf u}_{1|t}$ to the system. 
        \State [$f(\cdot|\cdot)$ is given in (\ref{eqn:fk}). ${\bf x}_{i|t}$ and ${\bf u}_{i|t}$ denote predictions and plans of the state and input variables made at time $t$, for $i$ steps into the future.]
    \end{algorithmic}
\end{algorithm}

\bibliographystyle{IEEEtran}
\bibliography{IEEEabrv,Const_Control}

\end{document}